\title{The NISQ Complexity of Collision Finding\footnote{A previous version of this paper, which only covered model~1, was titled ``Quantum-Classical Tradeoffs in the Random Oracle Model''~\cite{HLS22p}.}}
\author{Yassine Hamoudi\thanks{Université de Bordeaux, CNRS, LaBRI, \texttt{ys.hamoudi@gmail.com}.} \qquad Qipeng Liu\thanks{University of California at San Diego, \texttt{qipengliu0@gmail.com}.} \qquad Makrand Sinha\thanks{University of Illinois at Urbana-Champaign, \texttt{msinha@illinois.edu}.}\\ [12pt]}
\date{}
\begin{document}

\maketitle

\begin{abstract}
  Collision-resistant hashing, a fundamental primitive in modern cryptography, ensures that there is no efficient way to find distinct inputs that produce the same hash value. This property underpins the security of various cryptographic applications, making it crucial to understand its complexity. The complexity of this problem is well-understood in the classical setting and $\Theta(N^{1/2})$ queries are needed to find a collision. However, the advent of quantum computing has introduced new challenges since quantum adversaries --- equipped with the power of quantum queries --- can find collisions much more efficiently. Brassard, Høyer and Tapp \cite{BHT98c} and Aaronson and Shi \cite{AS04j} established that full-scale quantum adversaries require $\Theta(N^{1/3})$ queries to find a collision, prompting a need for longer hash outputs, which impacts efficiency in terms of the key lengths needed for security.

This paper explores the implications of quantum attacks in the Noisy-Intermediate Scale Quantum (NISQ) era. In this work, we investigate three different models for NISQ algorithms and achieve \emph{tight bounds for all of them}:
\begin{enumerate}
    \item A hybrid algorithm making adaptive quantum or classical queries but with a limited quantum query budget, or
    \item A quantum algorithm with access to a noisy oracle, subject to a dephasing or depolarizing channel, or
    \item A hybrid algorithm with an upper bound on its maximum quantum depth; \emph{i.e.} a classical algorithm aided by low-depth quantum circuits.
\end{enumerate}
In fact, our results handle all regimes between NISQ and full-scale quantum computers. Previously, only results for the preimage search problem were known for these models (by Sun and Zheng~\cite{SZ19p}, Rosmanis~\cite{Ros22p,Ros23p}, Chen, Cotler, Huang and Li~\cite{CCHL23j}) while nothing was known about the collision finding problem.

Along with our main results, we develop an information-theoretic framework for recording query transcripts of quantum-classical algorithms. The main feature of this framework is that it allows us to record queries in two incompatible bases --- classical queries in the standard basis and quantum queries in the Fourier basis --- consistently. We call the framework the \emph{hybrid compressed oracle} as it naturally interpolates between the classical way of recording queries and the {compressed oracle} framework of Zhandry for recording quantum queries. We demonstrate its applicability by giving simpler proofs of the optimal lower bounds for NISQ preimage search and by showing optimal lower bounds for NISQ collision finding.

\end{abstract}
\newpage

\section{Introduction}
In modern cryptography, collision-resistant hashing stands as a cornerstone, providing countless cryptographic protocols and systems. Collision resistance refers to the intractability of recovering two distinct inputs that produce the same hash value. Collision resistance is crucial to establishing the security of many cryptographic applications, including digital signatures~\cite{KL07b}, Merkle trees~\cite{Mer89c}, zero-knowledge proofs/arguments~\cite{BG09j}, and many more. Thus, understanding collision resistance (or the complexity of collision finding) is particularly important to understand the security of these cryptographic applications.

The so-called generic attacks or black-box query model has received a lot of attention in understanding the security of various cryptographic primitives. In this model, when working with hash functions, an algorithm can only take advantage of the input-output behavior of the function, but does not have access to its actual implementation or other side-information. This approach not only provides simpler proof techniques, but indeed effectively encapsulates real-world attack scenarios. Classically, the complexity of collision finding in the black-box model is well understood. For instance, when employing an ideal hash function, denoted as $F:[M] \to [N]$, the best possible attack needs to make $\Omega(\sqrt{N})$ queries to the hash function to find a collision pair with high probability\footnote{We remark that for typical applications the parameter $M$ satisfies $M=\Omega(N)$.}, aligning with the upper bound implied by the Birthday problem.
In practical applications, it is imperative that adversaries with limited resources, typically no more than $2^{128}$ units of computational time, are unable to find a collision. This requirement necessitates a minimum output length of at least $256$ bits. As an illustrative example, the latest addition to the Secure Hash Algorithm family of standards, SHA3-256, as released by NIST, frequently finds use in such applications.

The emergence of quantum computing requires us to significantly reevaluate existing cryptography since quantum adversaries can be much more powerful. In the quantum black-box model, \emph{quantum queries}, i.e. the ability to access in superposition the values of a black-box function~\cite{BdW02j} is treated as a fundamental resource. This idealized input model gave rise to the early quantum algorithms by Deutsch and Jozsa~\cite{DJ92j}, Simon~\cite{Sim97j} (paving the way for Shor's factoring algorithm~\cite{Sho97j}), and Bernstein and Vazirani~\cite{BV97j}.

For collision finding, how should we set the output length such that the hash function is still collision resistant even against quantum adversaries? Brassard, Høyer and Tapp~\cite{BHT98c} and Aaronson and Shi~\cite{AS04j} proved that in the quantum black-box model, $\Theta(N^{1/3})$ queries are both sufficient and necessary for finding a collision. This suggests that to maintain the same level of security (i.e., secure against quantum algorithms that run in time $2^{128}$), the output length of the hash function needs to be extended to $3 \times 128 = 384$. Consequently, this adjustment in output length has affected storage requirements and the overall efficiency of various cryptographic protocols. However, as we are in the noisy-intermediate scale quantum (NISQ) era, quantum computation is noisy and quantum memory is short-lived, we ask the following question:
\begin{center}
\it{Should we sacrifice efficiency for potential quantum attacks, especially in the NISQ era?}
\end{center}
The question above has a natural motivation stemming from practice. In particular, various constraints on near-term quantum hardware often necessitate the use of classical processing in addition to quantum operations. For instance, in certain scenarios, it might not be feasible that superposition queries could be made to the entire input, or the cost of making such queries might be prohibitive. Furthermore, the depth of possible quantum computation in near-term devices is also limited since the decoherence effects accumulate, thus additional classical processing is warranted to fully utilize the capabilities of such devices.

Motivated by the above considerations, in this paper, we investigate the limitations of NISQ algorithms for collision finding, as well as introduce a general technique/framework for proving lower bounds in the NISQ era. Using the new framework, along with the lower bounds for collision finding, we also give simpler and unified proofs of several results on preimage finding by Sun and Zheng~\cite{SZ19p}, Rosmanis~\cite{Ros22p,Ros23p}, and Chen, Cotler, Huang and Li~\cite{CCHL23j}.


\subsection{Contributions}

We first present our contributions on the limitations of NISQ collision finding.

\paragraph*{Collision Finding (Section~\ref{sec:hybrid_cols}).}

The problem is to find a pair of elements $x \neq y \in [M]$ that evaluate to the same value $F(x) = F(y)$, given a uniformly random function $F : [M] \ra [N]$. Classically, a tight bound $\Theta(c^2/N)$ for the optimal success probability is easily proved for this problem, where $c$ is the number of classical queries. When a full-scale quantum computer with $q$ quantum queries is available, one can use the so-called BHT algorithm~\cite{BHT98c} to find a collision pair with probability $\bo{q^3/N}$ (assuming $M = \om{N^{2/3}}$). However, this algorithm requires $q$ quantum queries, meaning no noise or upper bounds on quantum depth were considered for implementing the algorithm, leaving the potential quantum speed-up elusive in the NISQ era. Towards resolving this issue, we propose three different models for NISQ algorithms and show tight bounds for each of these models.

\paragraph{Model 1. Bounded Quantum Queries.}
In this model, we consider a quantum algorithm that only has limited access to its quantum capabilities: namely, an upper bound on the number of quantum queries, denoted by $q$. Additionally, the algorithm can make (potentially significantly more)~$c$ classical queries. This model is closely related to ``$d$-QC model'' discussed in the line of work~\cite{CCL23j,CM20c,AGS22p,HG22c}, where $d$ quantum queries are interleaved with
classical queries. Rosmanis~\cite{Ros22p} proved a tight bound for preimage search in this model, and posed an open question on collision finding. We answer this question below:

\begin{rtheorem}[Theorem~\ref{Thm:qc-coll}, first bullet]
  The optimal success probability of an algorithm making $q$ quantum and $c$ classical queries for solving the Collision Finding problem is $\ta{(c^2 + cq^2 + q^3)/N}$. There is a matching hybrid algorithm that achieves asymptotically the same success probability.
\end{rtheorem}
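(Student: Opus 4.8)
The plan is to prove matching upper and lower bounds on the success probability: hardness via the hybrid compressed oracle of Section~\ref{sec:compressed_oracle_for_hybrid}, and an explicit hybrid algorithm for the other direction. \emph{Upper bound.} By Proposition~\ref{Prop:uncomp} we may run the attacker against the hybrid compressed oracle without changing its output distribution, and by the standard reduction for compressed-oracle arguments (appending two classical queries on the output pair) it suffices to bound the probability that the recorded classical-quantum transcript contains a \emph{collision}, i.e.\ two distinct entries with equal values. Let $\ket{\psi_t}$ be the joint attacker-oracle state after $t$ queries, let $\Pi$ project onto transcripts containing a collision, and set $b_t=\norm{\Pi\ket{\psi_t}}^2$. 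The attacker's own unitaries commute with $\Pi$, so it suffices to control how a single recording-query operator changes $b_t$, and here the argument splits by query type. For a \emph{quantum} query made when the transcript has at most $s$ entries, the compressed-oracle estimates of Zhandry~\cite{Zha19c}---which, by the consistency guarantee of Proposition~\ref{Prop:consistency}, remain valid in the presence of classically recorded entries---give $\norm{\Pi\ket{\psi_t}}\le\norm{\Pi\ket{\psi_{t-1}}}+\bo{\sqrt{s/N}}$, hence $b_t\le b_{t-1}+\bo{\sqrt{b_{t-1}}\cdot\sqrt{s/N}}+\bo{s/N}$. For a \emph{classical} query made when the transcript has at most $s$ entries, we use the refinement of the triangle inequality already exploited in the Preimage Search proof: a classical query deterministically appends one entry whose value is uniform conditioned on the prior transcript, so it collides with an existing value with probability at most $s/N$, and therefore the collision \emph{probability} grows only additively, $b_t\le b_{t-1}+\bo{s/N}$.

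\emph{Combining the increments.} Since the transcript always has at most $c+q$ entries and $b_{t-1}\le B:=b_{\mathrm{final}}$ at every step, summing the two recursions over the $q$ quantum and $c$ classical queries gives $B\le\bo{\sqrt{B}\cdot S_Q+S_{\mathrm{all}}}$, where $S_Q=\sum_{\text{quantum steps}}\sqrt{s_t/N}$ and $S_{\mathrm{all}}=\sum_{\text{all steps}}s_t/N$. Reading this as a quadratic inequality in $\sqrt B$ yields $B=\bo{S_Q^2+S_{\mathrm{all}}}$. Cauchy--Schwarz gives $S_Q^2\le q\sum_{\text{quantum steps}}s_t/N=\bo{(cq^2+q^3)/N}$, while bounding the successive transcript sizes by $1,2,\dots,c+q$ gives $S_{\mathrm{all}}=\bo{(c+q)^2/N}=\bo{(c^2+q^3)/N}$ when $q\ge 1$. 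Hence $B=\bo{(c^2+cq^2+q^3)/N}$, which is the claimed upper bound.

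\emph{Lower bound.} For the matching algorithm, fix a parameter $0\le k\le q$: use all $c$ classical queries together with $k$ quantum queries to evaluate $F$ on $c+k$ distinct inputs, obtaining a list $L$; if $L$ already contains a collision (probability $\ta{\min(1,(c+k)^2/N)}$ by the birthday bound) output it, and otherwise run amplitude amplification~\cite{Gro97j} with the remaining $q-k$ quantum queries to find an input $x\notin L$ with $F(x)$ among the $c+k$ recorded values, which succeeds with probability $\ta{\min(1,(q-k)^2(c+k)/N)}$ since a $\ta{(c+k)/N}$ fraction of inputs is marked. Taking $k=0$ when $c\gtrsim q$ and $k\asymp q$ when $c\ll q$ (the latter being the BHT strategy~\cite{BHT98c}) gives overall success probability $\om{\min(1,(c^2+cq^2+q^3)/N)}$, matching the upper bound.

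\emph{Main obstacle.} The crux is proving the two per-query increment bounds inside the hybrid formalism. The classical bound $b_t\le b_{t-1}+\bo{s/N}$---the refined triangle inequality---must be read off from the explicit action of the classical recording-query operator, showing that even though the transcript is maintained in two incompatible bases (standard for classical queries, Fourier for quantum ones), a classical query perturbs the collision probability only additively; and the quantum bound requires re-deriving Zhandry's compressed-oracle operator-norm estimates for a hybrid transcript. Granting these, the combination step is the elementary computation above.
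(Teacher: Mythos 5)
Your overall architecture (simulate the attacker against the hybrid recording oracle, reduce to the probability that the recorded transcript contains a collision, per-query increment bounds, plus a BHT-style matching algorithm) follows the paper's, and the algorithmic half is essentially the variant of BHT the paper sketches, so that part is fine. The gap is exactly in the step you flag as the ``main obstacle'': the claim that a classical query increases the collision probability of the \emph{single} potential $b_t=\norm{\Pi\ket{\psi_t}}^2$ only additively, $b_t\le b_{t-1}+O(s/N)$. Your justification (``the classical query appends one entry whose value is uniform conditioned on the prior transcript'') is only accurate when the queried index is fresh ($H(x)=\star$, $D(x)=\bot$); when $D(x)\neq\bot$ the classical query mostly re-records the value already placed by quantum queries. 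Writing $\Pi\cc\kp=\Pi\cc\Pi\kp+\Pi\cc(\id-\Pi)\kp$, the second term indeed has squared norm $O(s/N)$ (this is the paper's \Cref{lem:classical_query_progress}), but the two images are \emph{not} orthogonal and the cross term does not vanish: a branch in which the queried index $x$ already sits in a quantum/hybrid collision with $D(x)=v$, $H(x)=\star$ (so the query leaves $D$ unchanged and appends $(x,v)$ to $H$) interferes with a branch in which $D(x)=\bot$ and the query freshly samples the same value $v$; both land on the basis state with history $H_{x\la v}$ and database $D_{x\la v}$. The resulting increase per classical query is of order $\sqrt{s/N}$ times the amplitude of states whose queried index lies in a quantum or hybrid collision --- a geometric-mean term, not $O(s/N)$. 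This is precisely why the paper does not use a single projector: it splits collisions into classical, hybrid and quantum ($\predC$, $\predH\cdot\br{\predC}$, $\predQ\cdot\br{\predH}\cdot\br{\predC}$), proves the auxiliary bounds $\norm{\pq\kp}^2\le O(q^3/N)$ and $\norm{\pqh\kp}^2\le O((cq^5+q^6)/N^2)$ (\Cref{Prop:q_and_qh}, which needs its own induction over $\predQQ$, $\predQ+\predH$, $\predQQ+\predQ\cdot\predH$ and the index predicate $\predX$), and runs the weighted potential $\norm{\pc\kp}^2+2\norm{\phnc\kp}^2+4\norm{\pqnhnc\kp}^2$ so that the large transfer terms cancel between levels (\Cref{Lem:quProg,Lem:clProg}).

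Nor is the obvious patch enough: bounding the cross term by Cauchy--Schwarz with $\norm{\pqorh\kp}^2\le O((cq^2+q^3)/N)$ gives a total contribution of order $c\cdot\sqrt{(cq^2+q^3)/N}\cdot\sqrt{(c+q)/N}$ over the classical queries, which already exceeds the target $(c^2+cq^2+q^3)/N$ in regimes such as $c\approx q^3$. Controlling these terms requires the stronger ``two collisions'' bound on $\norm{\pqh\kp}^2$ and the $\predX$-based orthogonality statements (\Cref{Fact:ortho}, \Cref{Lem:samplH}), i.e.\ essentially the whole multi-predicate machinery of the paper. So the classical per-query bound cannot simply be ``read off'' from the recording operator --- as stated it is false for your single potential, and establishing the corrected, transfer-aware version is the bulk of the proof, which your proposal does not supply.
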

\noindent Our bound is tight when $M = \om{N^{2/3}}$ because of the following variant of the BHT algorithm: the first $c+\lceil{q/2}\rceil$ queries are classical\footnote{The first $\lceil{q/2}\rceil$ quantum queries are also used to make classical queries.} and are used to collect distinct $(x, F(x))$ pairs. If there is a collision among these values the algorithm terminates. Otherwise, the remaining $\lfloor{q/2}\rfloor$ quantum queries are used to run Grover's search on the rest of $F$, where an element $x$ is marked if its image $F(x)$ occurs among the collected pairs. This algorithm stops working for small domains of size~$M = \bo{N^{2/3}}$, as is the case for the BHT algorithm. In fact, we conjecture that the optimal bound is $\ta{(c^2+q^3)/N}$ when~$M = \ta{\sqrt{N}}$ (which is the regime of the Element Distinctness problem). We also note that $M > N$ is safe to assume for most cryptographic applications.

\paragraph{Model 2. Noisy Quantum Queries.}
In the second model, we consider noisy quantum machines, whose only noise comes from quantum queries to the hash function. More explicitly, we assume each quantum query to the hash function is affected by a dephasing noise $b \in (0, 1]$: with probability $1-b$, it is a quantum query; otherwise (with probability $b$), it is a classical query. We ignore all other noises in this model and only pose constraints on oracle queries.

\begin{rtheorem}[Theorem~\ref{Thm:qc-coll}, second bullet]
   The optimal success probability of an algorithm making $t$ noisy queries with dephasing noise $b \in [\frac1t, 1]$ for solving the Collision Finding problem is~$\ta{t^2/(b N)}$. There is a matching hybrid algorithm that achieves asymptotically the same success probability.
\end{rtheorem}
\noindent In the above theorem statement, we only consider $b \geq \frac1t$, as when $b$ is sufficiently small,~$t$ noisy queries are already very likely to be all quantum.
Our bound is tight when $M = \om{N^{2/3}}$ due to the following variant of the BHT algorithm. First, make~$t/2$ classical queries to collect distinct $(x, F(x))$ pairs (note that a noisy query can be made purely classical by simply measuring both the input and output registers). Next, run~$bt$ independent instances of Grover's search, each using $1/(2b)$ noisy queries, and try to find a collision within the collected pairs. As there are only $1/(2b)$ noisy queries, each instance is a purely quantum algorithm with high probability. Thus, each instance succeeds with probability $\om{1/(2b)^2 \cdot t/(2N)} = \om{t/(b^2N)}$. Consequently, this algorithm succeeds with probability $\om{bt \cdot t/(b^2N)} = \om{t^2/(b N)}$.

\paragraph{Model 3. Bounded Quantum Depth.} In this model, we consider a quantum algorithm that is almost classical, but has access to quantum helper subroutines that have bounded depth~$d$. In other words, the collection of algorithms in this model can be modeled as ${\sf BPP}^{{\sf QNC}_d}$. This model captures a significant NISQ scenario where we have access to arbitrary polynomial-time randomized classical algorithms, but all usable quantum machines are vulnerable to noise and completely collapse after a certain period of time. This model is the ``$d$-CQ model'' discussed in the line of work~\cite{CCL23j,CM20c,AGS22p,HG22c}.

\begin{rtheorem}[Theorem~\ref{Thm:qc-coll}, third bullet]
   The optimal success probability of an algorithm making~$t$ quantum queries with bounded depth $d \leq t$ for solving the Collision Finding problem is $\ta{dt^2/N}$. There is a matching algorithm that achieves asymptotically the same success probability.
\end{rtheorem}
\noindent
The tightness of the bound follows from a similar algorithm to the one in model 2 when~$b = 1/d$.

\vspace*{5pt}

Our results are proven using a new information-theoretic framework that we call the hybrid compressed oracle. We next provide a high-level description of this framework.

\paragraph*{Hybrid Compressed Oracle (Section~\ref{sec:compressed_oracle_for_hybrid}).}

The main technical contribution of this work is a new information-theoretic lower-bound framework, called the \emph{hybrid compressed oracle}, for analyzing the success probability of hybrid algorithms that perform a mix of quantum and classical queries. As the name suggests, our framework is an extension of the compressed oracle technique of Zhandry~\cite{Zha19c}. This part of our work broadly fits under the long-term goal in complexity theory to develop general lower-bound techniques that characterize the \emph{tradeoffs} between the number of queries and other computational resources. For instance, prior works have studied the interplay between quantum queries and memory space~\cite{KSW07j,ASW09j,HM23j}, circuit depth~\cite{SZ19p,CCL23j,CM20c,AGS22p,HG22c,CH22pa}, parallel computation~\cite{Zal99j,GR04p,JMdW17j,AHU19c,CFHL21c,BLZ21c}, proof size~\cite{Aar12j,ST23j,AKKT20c}, advice size~\cite{NABT15j,HXY19c,CLQ20c,CGLQ20c,GLLZ21c}, among others. These results are often tailored to the problems at hand and do not provide general lower-bound frameworks however.

Our hybrid lower-bound framework departs from a recent method introduced by Zhandry \cite{Zha19c}, called the \emph{compressed oracle} (see Section~\ref{sec:overview-compressed}), that quantizes the classical \emph{lazy sampling} technique. The classical variant of the method records a \emph{query transcript} representing the knowledge gained by an algorithm (the ``attacker'') on the input and on an intuitive level uses it to argue that the algorithm does not record enough information via these queries to succeed. However, the recording of quantum queries is a blurry task to define due to the no-cloning theorem and the superposition input access. Some important features of Zhandry's solution to these problems are the construction of a quantum query transcript in the Fourier domain, and the ability for the attacker to erase the transcript (for instance, by running its algorithm in reverse).

We first extend Zhandry's construction to support recording both classical and quantum queries. This is not as easy as it may seem since it requires merging two ways of recording on distinct bases (the standard and the Fourier basis). Our solution relies on replacing the original classical and quantum query operators with two ``recording query operators'' (Section~\ref{Sec:construction}) that maintain a consistent classical-quantum query transcript throughout the execution of the algorithm (Proposition~\ref{Prop:consistency}). In the extreme cases where all the queries are classical or quantum, our framework recovers the classical lazy sampling and the quantum compressed oracle techniques, respectively. Moreover, as in previous work, our hybrid recording perfectly simulates the behavior of the original algorithm (Proposition~\ref{Prop:uncomp}).

We then further extend our framework to record \emph{mixtures} of the classical and quantum oracles. Such mixtures capture the model where a quantum query can collapse into a classical one because of dephasing noise.
We handle this setting by interpolating between the two types of recording that happen when the query is purely classical or quantum. Our simulation is again indistinguishable from the viewpoint of the algorithm.
Furthermore, we demonstrate a close connection between a mixture that puts probability $b \in (0,1]$ on the classical oracle and the model where the quantum depth is bounded by $1/b$.
The latter amounts to a complete collapse of the quantum memory after every $1/b$ quantum queries.
We show that, when replacing each quantum query with the aforementioned mixture and removing the depth constraint, the success probability of the algorithm is barely changed. Hence, the depth-bounded model can be analyzed in our framework using the appropriate interpolation parameter.
A more detailed technical overview of our framework is provided in Section~\ref{sec:overview-hybrid}.\\

Apart from proving NISQ lower bounds for collision finding, we also demonstrate the applicability of our framework by proving NISQ lower bounds for preimage search in all three models. These lower bounds were previously shown by \cite{SZ19p, CCHL23j,Ros22p,Ros23p} and we are able to give unified and simplified proofs of these results.

\paragraph*{Preimage Search (Section~\ref{sec:hybrid_search}).}

The preimage search concerns the problem of finding a preimage $x \in [M]$  satisfying $F(x) = 0$ given a uniformly random function $F : [M] \ra [N]$. The optimal success probability for solving this problem is~$\ta{c/N}$ with $c$ classical queries, or~$\ta{q^2/N}$ with $q$ quantum queries by using Grover's algorithm~\cite{Gro97j}. Rosmanis \cite{Ros22p},  using a proof tailored to the search problem, showed that no hybrid algorithm can interpolate between these two cases efficiently. Here, we give a simpler proof of the same result using the hybrid compressed oracle framework.

\begin{rtheorem}[Theorem~\ref{Thm:qc-search}, first bullet]
  The optimal success probability of an algorithm making~$q$ quantum and $c$ classical queries for solving the Preimage Search problem is $\ta{(c + q^2)/N}$.
\end{rtheorem}

The proof relies on a simple application of our hybrid compressed oracle framework, where the progress made towards finding a preimage is represented as the probability of measuring a classical-quantum query transcript containing such a preimage. The central argument in our analysis, that allows us to overcome the $\bo{(c+q)^2/N}$ upper bound derived from the original compressed oracle, is a refinement of certain triangle inequalities when a classical query is made.

Sun and Zheng~\cite{SZ19p}, Chen, Cotler, Huang and Li~\cite{CCHL23j} and Rosmanis~\cite{Ros23p} also considered the case of hybrid algorithms that make noisy queries or have bounded depth. We also recover these results using the hybrid compressed oracle framework.

\begin{rtheorem}[Theorem~\ref{Thm:qc-search}, second bullet]
   The optimal success probability of an algorithm making $t$ noisy queries with dephasing noise $b \in [1/t, 1]$ for solving the Preimage Search problem is $\ta{t/(b N)}$. There is a matching algorithm that achieves asymptotically the same success probability.
\end{rtheorem}

\begin{rtheorem}[Theorem~\ref{Thm:qc-search}, third bullet]
   The optimal success probability of an algorithm making~$t$ quantum queries with bounded depth $d \leq t$ for solving the Preimage Search problem is $\ta{dt/N}$. There is a matching hybrid algorithm that achieves asymptotically the same success probability.
\end{rtheorem}


\subsection{Related Work}

\paragraph{Query Complexity Lower Bounds.} There are two main systematic techniques for proving lower bounds in quantum query complexity: the polynomial~\cite{BBC+01j} and the adversary~\cite{Amb02j} methods. A different information-theoretic method, called the compressed oracle technique, was recently introduced by Zhandry~\cite{Zha19c}.  This method is useful in proving lower bounds for search problems when the superposition queries are made to a \emph{uniformly random} function, a setting that is often used to model various cryptographic scenarios, and is commonly called \emph{the random oracle model} in the cryptography literature. The compressed oracle technique has led to new and simpler lower bounds for certain search problems (e.g.~\cite{LZ19c,HM23j,Ros21p}) and security proofs in post-quantum cryptography (e.g.~\cite{HI19c,LZ19ca,CMS19c,CMSZ19p,BHH+19c,AMRS20c}).

While the classical counterparts of these methods are often easy to manipulate, it is generally unknown how to adapt them to the hybrid setting. Indeed, the only prior works concerning the hybrid quantum-classical query model, that we are aware of, use ad-hoc methods that are tailor-made for the specific problem being studied.

\paragraph{Lower Bounds for Hybrid and NISQ Algorithms.}
As mentioned before, in a recent work, Rosmanis~\cite{Ros22p} characterized the optimal success probability of solving the preimage search problem, although not in the random oracle model. The proof techniques are specific to the search problem and inspired by a lower bound for Grover's search with quantum faulty oracles by Regev and Schiff~\cite{RS08c}.

Another related line of work ~\cite{CCL23j,CM20c,AGS22p,HG22c,CH22pa} proves lower bounds for hybrid algorithms in the so-called ``$d$-QC model''  where $d$ quantum queries are interleaved with polynomially (in the number of input qubits) many classical queries. This model is akin to small-depth \emph{measurement-based quantum computation}, where measurement outcomes are classically processed to select subsequent quantum gates and is encompassed by our hybrid quantum-classical query model when the number of quantum queries is bounded by $d$. This model is captured by the first hybrid model with a bound on the number of quantum queries. The aforementioned works show that certain carefully constructed variants of Glued-Trees and Simons problems require a large quantum depth. For the preimage search problem, Sun and Zheng~\cite{SZ19p}, Chen, Cotler, Huang and Li~\cite{CCHL23j} and Rosmanis~\cite{Ros23p} also considered the case of hybrid algorithms that make noisy queries or have bounded depth, as mentioned above.

In post-quantum cryptography, several works~\cite{JST21c,ABKM22c} studied the post-quantum security of the Even-Mansour and FX constructions when the attacker has quantum access to the underlying block cipher and classical access to the keyed primitive. These results are based on new ``reprogramming'' lemmas for analyzing the advantage of distinguishing between two oracles that differ in some specific way. Additionally,~\cite{JST21c} introduced a variant of the compressed oracle for recording both classical and quantum queries \emph{in the Fourier domain}. It allows the authors to argue that, for a variant of the FX construction, the classical and quantum queries can be (approximately) treated as acting on disjoint domains. This method does not seem generalizable to the proof of more general hybrid results.

\section{Technical Overview}
\label{sec:overview}

\subsection{Overview of the Compressed Oracle}
\label{sec:overview-compressed}

First we give a detailed overview of the compressed oracle framework \cite{Zha19c}. As mentioned before, this framework gives an information-theoretic method that is useful in proving lower bounds against quantum algorithms that get black-box query access to a uniformly random function $F : [M] \ra [N]$. The framework allows one to store a compressed encoding of the uniformly random function conditioned on the knowledge gained from the queries.

To illustrate the framework, we first consider the case of classical and quantum algorithms separately and then discuss the ideas involved in extending the framework to the setting of hybrid algorithms. For pedagogical reasons, we shall primarily focus on the preimage search problem as a running example and use $D$ (instead of $F$) to denote a uniformly random function (or database) henceforth.

\paragraph{Classical Algorithms.} Let us first consider classical query algorithms for the search problem. After $c$ classical queries at most $c$ entries of the uniformly random function $D$ can be assumed to be fixed, since the entries that have not been queried are still uniformly random in $[N]$. This observation allows one to model the random function $D$ as being generated by \emph{lazy sampling}: we may think of a location $x \in [M]$ that has not been queried to be marked with a special symbol~$\bot$ and whenever that location is queried for the first time, $D(x)$ is replaced with a uniformly random value in $[N]$. In other words, after $c$ queries, we store a compressed encoding of $D$ where only $c$ locations are fixed, and others are compressed to a special symbol $\bot$. Whenever a query is made to a location that is still compressed, it is uncompressed and replaced by a uniformly random value. It follows that if after $c$ queries we have not seen a zero preimage, then the probability of seeing a zero preimage in the next query is $1/N$. Thus the probability of success after $t$ queries, denoted $p_t$, satisfies $p_{t+1} \le p_t + 1/N$ and is bounded by $c/N$ after $c$ queries.

\paragraph{Quantum Algorithms.} The compressed oracle framework quantizes the lazy sampling idea and allows one to define a compressed encoding of a random function that works well with quantum queries. Unlike the classical case, quantum information can not be cloned and could be forgotten, so some care needs to be taken in defining this compressed encoding. Consider a quantum algorithm that has an index register $\CX$, a phase register $\CP$, a workspace register $\CW$ and has black-box access to a uniformly random function $D$ via the following phase\footnote{Note that the value of $D(x)$ is returned in the phase of the complex state and $p$ is an additional control register. This kind of query is usually called a \emph{phase query} in the literature. There is another standard way of defining a quantum query by a unitary that maps $\ket{x,p,w}$ to $\ket{x,p \oplus D(x),w}$. The two kinds of queries are equivalent up to a unitary transformation, and we focus on phase queries as they work better with our framework.} unitary:
\[\mc{O}^Q_D : \ket{x,p,w} \mapsto \omega_N^{p D(x)} \ket{x,p,w} \quad \mbox{where} \quad \omega_N = e^{\frac{2{\bf i}\pi}{N}}.\]

A quantum algorithm starts with the all-zero state $\ket{0,0,0}$ and applies arbitrary unitaries interleaved with phase queries. For a fixed $D: [M] \to [N]$, the state of the algorithm at any point is some arbitrary state $\ket{\psi_D}$. After averaging over uniformly random $D$, the state is the mixed state $\mathbb{E}_D[\ket{\psi_D}\bra{\psi_D}]$ and  it will be more convenient for us to work with a purification of this state. We add a purification register $\CD = \CD_0 \cdots \CD_{M-1}$  where the subregister $\CD_x$ for $x \in [M]$ holds a value $D(x) \in [N]$ and we refer to it as the database register. Then, the  state
  \[\frac1{N^{M/2}} \sum_{D \in [N]^M} \ket{\psi_D}\otimes \ket{D}_{\CD}\]
is a purification, as after tracing out $\CD$ we obtain the same mixed state as before. Note that in the above encoding, the database register is never altered during the run of the algorithm.

Motivated by the classical case, we would like to have a compressed encoding of the random function $D$. For this, we extend the range of $D$ to allow for a compressed symbol $\bot$ and define compression and uncompression operations that act on the database register $\CD$ whenever a query is made. In particular, let $D: [M] \to \{\bot\} \cup [N]$ and extend the register $\CD_x$ so that it can now also hold the value $\bot$. The initial state of the register $\CD$ (at the beginning of the algorithm) is  $\ket{\bot,\ldots,\bot}_{\CD}$, which corresponds to a completely compressed database. We also define a Hermitian unitary operation $S$ that is controlled on the index register $\CX$ and uncompresses an entry that is $\bot$: if the index register is $\ket{x}_\CX$ and the database register is $\ket{\bot}_{\CD_x}$, then it is mapped to $\frac1{\sqrt{N}}\sum_{y \in [N]} \ket{y}_{\CD_x}$ while it maps the last state back to $\ket{\bot}_{\CD_x}$ (for details on how to unitarily implement this, see \Cref{sec:compressed_oracle_for_hybrid}). Before a quantum query, the database is uncompressed by applying $S$ and after the query it is compressed again by applying $S$.

With the above framework, one can prove a lower bound for the preimage search problem against any quantum algorithm by following a similar template as in the classical case. In particular, the probability $p_t$ of succeeding after $t$ queries is essentially the \emph{squared} norm of the projection of the state at time $t$ onto the subspace spanned by databases $\ket{D}_{\CD}$ that contain a zero preimage. One can show that the norm of this projection is initially $0$ and increases by at most $\bo{1/\sqrt{N}}$ after each query and thus
  \[\sqrt{p_{t+1}} \le \sqrt{p_t} + \bo*{\frac1{\sqrt{N}}} \implies p_q = \bo*{\frac{q^2}N}.\]

We stress out that the compressed encoding is just a technique for proving lower bounds in the real random oracle model. An algorithm will never encounter the compressed symbol $\bot$ in practice, as the simulation is statistically indistinguishable from the real world.


\subsection{Overview of the Hybrid Compressed Oracle}
\label{sec:overview-hybrid}

One of the main contributions of this work is to extend the compressed oracle framework to the setting of hybrid algorithms that make both quantum and classical queries. In fact, we consider an even more general scenario where each query can behave as a superposition of the quantum and classical oracles according to some interpolation parameter. This setting allows us to capture a wide variety of NISQ models based on noisy oracles and depth-bounded quantum algorithms.

Since a quantum query can always simulate a classical query, one could hope to analyze such algorithms using the compressed oracle framework for quantum algorithms above. However, it is not straightforward in such an analysis to capture that classical queries do not create additional interference. In fact, such attempts run into significant technical difficulties.

Here we start from first principles and define another purification compatible with both classical and quantum queries and that allows us to store a compressed encoding of the random function $D$ conditioned on the queries made by the algorithm. There are two main principles behind the new purification that take into account the classical nature of the queries:
  \begin{description}
    \item[Measurement] Classical queries can be measured, so we add an additional history register $\CH$ that records all the classical queries $(x,D(x))$. The contents of a recorded query in this register are never changed.
    \item[Consistency] We define compression and uncompression operations for the database $\CD$ conditioned on the history. In particular, under the standard compressed oracle framework~$\ket{y}_{\CD_x}$ can be changed during (un)compression if the index register contains~$\ket{x}_{\CX}$, which captures the fact that quantum algorithms could forget information. However, in the new purification, if $(x, D(x))$ is in the history, which happens if $x$ has been queried classically, then the register $\CD_x$ is never compressed or uncompressed again.
  \end{description}

\paragraph{Lower Bound for Preimage search with classical/quantum queries (Model 1).} With the above framework, we give an alternative lower-bound proof for the search problem against hybrid algorithms that use $c$ classical and $q$ quantum queries (when $c = 0$ or $q = 0$ we recover the usual quantum or classical bounds). As remarked before, this was first shown by Rosmanis~\cite{Ros22p} with a proof tailored for the search problem. Although there are some similarities between his approach and ours, the proof using the hybrid compressed oracle framework follows in a more principled way, is arguably simpler and works in the random oracle model.

To prove the lower bound, we again bound  the probability $p_t$ of succeeding after $t$ queries. To do this, we now keep track of whether there is a zero preimage in the classical history or in the quantum database: let $\ket{\phi}$ be the current joint state of all registers, we define $\pc$ as the projector on the span of the basis state where the classical history $\CH$ contains a zero preimage and  $\pqnc$ as the projector on those basis states where there is a zero preimage in the quantum database $\CD$ but none in the history. The norms $\|\pc\ket{\phi}\|$ and $\|\pqnc\ket{\phi}\|$ can be considered the classical and quantum progress respectively.

We show that after a quantum query, the quantum progress $\|\pqnc\ket{\phi}\|$ increases by $\bo{1/\sqrt{N}}$ as in the completely quantum case, while the classical progress $\|\pc\ket{\phi}\|$ does not change. However, under a classical query, the classical progress could increase by a much larger amount, but only at the cost of decreasing the quantum progress. As an example, consider a hybrid algorithm that creates a superposition over all preimages of zero by performing Grover's search, then measures its internal register to get a random preimage $x$ and finally makes a classical query on $x$. Clearly, before the only classical query, we have  $\|\pc\ket{\phi}\| = 0$ and $\|\pqnc\ket{\phi}\| \approx 1$ but right after the query,~$\|\pqnc\ket{\phi}\|$ becomes almost zero whereas $\|\pc\ket{\phi}\| \approx 1$.

This phenomenon does not appear when the algorithm is purely classical or quantum. Nonetheless, upon making a classical query, we show that the total progress defined as
  \[\Psi_t = \|\pc\ket{\phi}\|^2 + 3 \|\pqnc\ket{\phi}\|^2\]
increases by at most $O(1/N)$, behaving as in the classical case. Note that $\Psi_t$ upper bounds the total probability of having a preimage in either the database or the classical history.

More precisely, let $\kpp$ be the resulting quantum state after a classical query is made. Although $\norm{\pc \kpp}^2$ can be much larger than $\norm{\pc \kp}^2$, the state $\pc \kpp$ consists of three parts:
\begin{enumerate}
    \item $\ket{\phi_1}$: This corresponds to the basis states that already contained a zero preimage in their history register prior to the last classical query. The squared norm of this part can be bounded by $\norm{\pc \kp}^2$.
    \item $\ket{\phi_2}$: This corresponds to the basis states where there was no zero preimage either in the history or the database (prior to the classical query) and the classical query sampled a new zero preimage. The squared norm of this term is roughly at most $1/N$.
    \item $\ket{\phi_3}$: The last part consists of the basis states where there was at least one zero preimage in the  database but none in the history (prior to the classical query) and the classical query either sampled a new preimage or ``moved'' one from the quantum database to the classical history. We denote the squared norm of $\ket{\phi_3}$ by $\delta_{\predQ \to \predC}$ (denoting the amplitude that moved from $\pqnc$ to $\pc$). This exactly captures the scenario mentioned in the above example using Grover's search.
\end{enumerate}

On a high level, we show that $\pc \kpp = \ket{\phi_1} + \ket{\phi_2} + \ket{\phi_3}$ and $\ket{\phi_1}$ is also  orthogonal to $\ket{\phi_2}$ and~$\ket{\phi_3}$. Thus, we have that
  \begin{align*}
    \norm{\pc \kpp}^2 ~=~ \norm{\ket{\phi_1}}^2 + \norm{\ket{\phi_2} + \ket{\phi_3}}^2 ~\le~ \norm{\pc \kp}^2 + 2\norm{\ket{\phi_2}}^2 + 2\norm{\ket{\phi_3}}^2.
  \end{align*}
The increase $\norm{\pc \kpp}^2 - \norm{\pc \kp}^2$ is then  $O(\frac{1}{N}) + 2\delta_{\predQ \to \predC}$.
On the other hand, $\norm{\pqnc \kp}^2$ will decrease by at least $\delta_{\predQ \to \predC}$ due to a similar reason.
Thus, we conclude that after a classical query,~$\Psi_t$ increases by at most $O(1/N)$ (in fact, $O(1/N) - \delta_{\predQ \to \predC}$ but we do not need that refinement here). Combined with the fact that a quantum query increases~$\sqrt{\Psi_t}$ by $O(1/\sqrt{N})$, this shows that the success probability after $c$ classical and $q$ quantum queries is at most~$\Psi_{c+q} = \bo[\big]{\frac{c+q^2}{N}}$.

\paragraph{Lower Bound for Preimage search with interpolated queries (Model 2).} We adapt the above proof to the case where each query is a mixture of the classical and quantum oracles (instead of being purely classical or quantum). For simplicity, we assume that all queries have the same probability~$b \in (0,1]$ of being classical, which is equivalent to making quantum queries affected by dephasing noise $b$.

The success probability of Grover's search using $t$ such queries is~$\om{(1-b)^t t^2/N}$ since the probability that all queries are quantum is $(1-b)^t$. This is nearly optimal when the noise is sufficiently small $b \leq 1/t$, as noiseless algorithms succeed with probability $\bo{t^2/N}$ anyway. However, when~$b \geq 1/t$, a better algorithm consists of running $\lfloor bt \rfloor$ independent instances of Grover's search, each using $\lfloor 1/b \rfloor$ queries, to succeed with probability $\om{bt \cdot 1/(b^2N)} = \om{t/(bN)}$.

We show that the above algorithm is optimal by tracking the same progress measure $\Psi_t$ as before, but now making interpolated queries. One can immediately apply the analysis of the previous paragraph to show that the progress increases by at most $\bo{(1-b) \sqrt{\Psi_t/N} + 1/N}$ after each query. This is however not sufficient to conclude that~$\Psi_t = O(t/(bN))$.
The proof involves refining the analysis of how the quantum progress changes after a query. We consider the exact value $\delta_{\predQ \to \br{\predQ}} = \norm{\pqnc \kp}^2 - \norm{\pqnc \kpp}^2$ by which it decreases when making a classical query. Since it is at least the amount $\delta_{\predQ \to \predC}$ transferred to the classical progress, we obtain that~$\Psi_t$ increases by at most $O(1/N) - \delta_{\predQ \to \br{\predQ}}$ after a classical query.
On the other hand, we show that the quantum progress increases by at most $\bo{\sqrt{\delta_{\predQ \to \br{\predQ}}/N} + 1/N}$ when making a quantum query, which is sometimes smaller than the quantity $\bo{\norm{\pqnc \kp}/\sqrt{N} + 1/N}$ used to analyze the model 1. Overall, by interpolating between the two oracles, we conclude that~$\Psi_t$ increases by
  \[\bo[\Big]{(1-b) \sqrt{\delta_{\predQ \to \br{\predQ}}/N} - b \cdot \delta_{\predQ \to \br{\predQ}} + 1/N} = \bo{1/(bN)}\]
after each interpolated query, since the function $Z \mapsto (1-b) Z/\sqrt{N} - b Z^2 + 1/N$ is at most $\bo{1/(bN)}$. Hence the success probability after $t$ queries is at most~$\Psi_t = \bo[\big]{\frac{t}{bN}}$.

\paragraph{Lower Bound for Preimage search with bounded depth (Model 3).}
At first sight, the bounded-depth model is more subtle to analyze since it concerns all the memory of the algorithm (which has to decohere every $d$ queries), instead of only the query registers. We do not know if a variant of the hybrid compressed oracle can capture this property optimally.
Instead, we aim to relax that model to focus the analysis on the query registers.
A first attempt could be to only decohere the latter registers, which amounts imposing a classical query every~$d$ quantum queries.
This is however a very weak constraint, since an algorithm can swap the query registers with garbage qubits before and after making the classical queries to avoid the decoherence.
Our solution is to instead show that a depth-bounded algorithm can always be simulated by an algorithm -- in model 2 -- where each query is classical with probability $b = 1/d$.
Intuitively, this amounts to ``spread out'' the decoherence occurring every $d$ queries (in the bounded-depth model) into a smaller probability $1/d$ of decohering \emph{only} the query registers but at \emph{every} query. The details of the reduction are provided in \Cref{Prop:reduction}. Plugging the parameter $b = 1/d$ in the above bound established in model 2, we immediately obtain that the success probability after $t$ queries in the bounded-depth model is at most $\bo[\big]{\frac{dt}{N}}$. This is easily shown to be optimal.

\paragraph{Lower Bounds for Collision Finding.} The intuition behind the proof for the collision lower bounds is similar to that for the search problem. However, the details are quite involved because of one crucial difference. For the preimage search problem, the preimage is either in the history~$\CH$ or only in the quantum database~$\CD$, allowing us to define classical and quantum measures of progress. For the collision finding problem, there could also be \emph{hybrid collisions}, meaning a colliding pair $(x,x')$ where $x$ is in the history while $x'$ is only in the database~$\CD$. This makes the proof substantially more involved, as one also needs to keep track of other progress measures for such hybrid collisions.

We only sketch the lower bound in model 1, where $c$ queries are classical and $q$ queries are quantum. The lower bounds in the two other models build upon these ideas in a similar way to what is discussed above for preimage search.

The proof consists again of bounding the probability $p_t$ of finding a collision after~$t$ queries. To do this, we now keep track of whether there is a classical, hybrid, or quantum collision. We define various projectors onto the span of basis states containing such collisions and use these as measures of classical, hybrid, or quantum progress.
Similar to the case of preimage search, a quantum query can only increase all these measures of progress by a small amount, but a classical query might increase some of them by a large amount while decreasing others at the same time. We are able to show how much amplitude is transferred onto the subspace spanned by basis states containing classical, hybrid or quantum collisions after making a quantum or classical query.

To be more precise, we define three projectors: $\pc, \phnc, \pqnhnc$. The support of $\pc$ consists of the span of all basis states whose classical history contains a collision. We similarly define~$\phnc$ for hybrid collisions only (no classical collisions) and $\pqnhnc$ for quantum collisions only (no hybrid or classical collisions). Similar to our discussion on preimage search, a classical query can move a large amplitude from $\phnc$ to $\pc$, or from $\pqnhnc$ to $\phnc$. This is more complicated than the case of preimage search, as there is a hierarchy of three projectors, instead of two in the prior case --- let $\kp$ be the current state and let $\Dc, \Dhnc, \Dqnhnc$ be the increment in the squared norms $\norm{\pc\kp}^2, \norm{\phnc\kp}^2, \norm{\pqnhnc \kp}^2$ after a classical query is made.
By a refinement of certain triangle inequalities, we show that:
  \begin{align*}
    \Dc &\leq 2 \delta_{\predH \to \predC} + \bo[\Big]{\frac{t}{N}},\\
    \Dhnc &\leq - \delta_{\predH \to \predC} + 2 \delta_{\predQ\to\predH} + \bo[\Big]{\frac{t}{N}},\\
    \Dqnhnc &\leq - \delta_{\predQ\to\predH} + \bo[\bigg]{\sqrt{\frac{t \cdot \delta_{\predH \to \predC}}{N}}}.
  \end{align*}

Using these facts, we prove that the following potential
  \[ \Psi_t := \norm{\pc\kp}^2 + 3 \norm{\phnc\kp}^2 + 7 \norm{\pqnhnc \kp}^2,\]
which upper bounds the total progress, always increases as follows: $\sqrt{\Psi_t} \le \sqrt{\Psi_{t-1}} + \bo[\Big]{\sqrt{\frac{t}{N}}}$ if the $t$-th query is quantum and $\Psi_t \le \Psi_{t-1} + \bo[\big]{\frac{t}{N}}$ if the $t$-th query is classical. Overall, this shows that the success probability of finding a collision after $c$ classical and $q$ quantum queries is at most $\Psi_{c+q} = \bo*{\frac{c^2+cq^2+q^3}{N}}$.

\section{Hybrid Random Oracle Model}
\label{sec:model}

Below we define a computational model that captures hybrid algorithms that make both classical and quantum queries to a random function (which we also refer to as a random oracle for consistency with the compressed oracle framework). We also note that our model captures the QC model \cite{CCL23j}, a generalized model for measurement-based quantum computation, as a special case.\footnote{In the QC model, there are $2q$ rounds of computation where in the even numbered rounds, $c/q$ classical queries are made, and in the odd numbered round, one quantum query is made followed by a (possibly partial) measurement. The measurements can be deferred till the end using ancilla qubits.}

\paragraph{Memory.}
The memory of an algorithm accessing an oracle $D : [M] \ra [N]$ is made of three quantum registers defined as follows:
  \begin{itemize}
    \item Index register $\mc{X}$ holding $x \in [M]$.
    \item Phase register $\mc{P}$ holding $p \in [N]$.
    \item Workspace register $\mc{W}$ holding $w \in \rn^*$ (the size of the register may increase during the computation as we allow appending new qubits to it).
  \end{itemize}
We represent a basis state in the corresponding Hilbert space as $\ket{x,p,w}_{\mc{A}}$, where $\mc{A} = \mc{XPW}$ is a shorthand for the registers on which the algorithm operates. The initial state of the memory is the all-zero basis state $\ket{0, 0, 0}_{\mc{A}}$.

\paragraph{Quantum Phase Oracle.} We define the quantum oracle $\oq^D$ as the unitary operator acting on the memory of the algorithm as follows.
  \[\oq^D : \ket{x,p,w}_{\mc{A}} \mapsto \omega_N^{p D(x)} \ket{x,p,w}_{\mc{A}} \quad \mbox{where} \quad \omega_N = e^{\frac{2{\bf i}\pi}{N}}.\]
Note that this oracle returns the value $D(x)$ in the phase but it is equivalent to the standard oracle that maps $\ket{x,p,w}_{\CA}$ to $\ket{x,p \oplus D(x),w}_{\CA}$ up to a unitary transformation.

\paragraph{Classical Oracle.} A classical oracle query is defined as a query to the standard oracle that maps $\ket{x,p,w}_{\CA}$ to $\ket{x,p \oplus D(x),w}_{\CA}$ followed by a measurement on the index register $\mc{X}$ and phase register $\mc{P}$. Since we are working with phase oracles for convenience, we define them in the following way, equivalent to the above up to a unitary transformation.

We add a \emph{history} register $\mc{H} = \mc{H}_1 \cdots \mc{H}_t$ where the $c$-th subregister $\mc{H}_c$ is used to purify the $c$-th classical query (there are at most $t$ queries in total) and stores a value in $([M] \times [N]) \cup \{\star\}$. The initial state of that register is $\ket{\star,\dots,\star}_{\mc{H}}$. The classical oracle~$\oc^D$ is defined as the unitary operator acting as follows
  \[\begin{array}{ccl}
    \oc^D & : & \phantom{\omega_N^{p D(x)}} \ket{x,p,w}_{\mc{A}} \ket{(x_1,y_1),\dots,(x_c,y_c),\star,\dots,\star}_{\mc{H}} \\[3mm]
     & \mapsto & \omega_N^{p D(x)} \ket{x,p,w}_{\mc{A}} \ket{(x_1,y_1),\dots,(x_c,y_c),(x,D(x)),\star,\dots,\star}_{\mc{H}}.
  \end{array}\]

Since we only care about a bounded number of $t$ queries, the above oracle can easily be made a unitary.
For convenience, we denote the list $((x_1,y_1),\dots,(x_c,y_c),\star,\dots,\star)$ by $H$ and we say $x \in H$ if and only if there exists $1 \leq i \leq c$ such that $x_i = x$. We use the following shorthand for appending a new pair $(x,y)$ to $H$.

\begin{definition}[$H_{x \la y}$]
    Given a history $H = ((x_1,y_1),\ldots,(x_c,y_c), \star,\ldots, \star)$ with at least one star entry, we define
      \[H_{x \la y} = ((x_1,y_1),\ldots,(x_c,y_c), (x,y), \star, \ldots, \star)\]
    where the leftmost star has been replaced with $(x,y)$.
\end{definition}

Sometimes, we will identify the above list with a function $H: [M] \to [N] \cup \{\star\}$ if there are no ambiguous pairs, i.e. no pairs of the form $(x,y)$ and $(x,y')$ where $y \neq y'$. We also let~$\history$ denote the set of all possible histories $H$.

\paragraph{Hybrid Oracle.} We extend the above definitions by allowing for probabilistic choices between the two oracles.
This is represented by a channel that applies the quantum oracle $\oq^D$ with probability $1-b$, for some $b \in [0,1]$, and applies the classical oracle $\oc^D$ otherwise. Additionally, we assume that the algorithm is provided with a query type bit (or ``flag'') indicating which oracle has been applied. We represent this operation by an isometry $\oqc^D$ acting as
  \[
    \oqc^D : \ket{x,p,w}_{\mc{A}} \ket{H}_{\mc{H}}
      \mapsto \omega_N^{p D(x)} \ket{x,p}_{\mc{XP}} \pt*{\sqrt{1-b} \cdot \ket{w0}_{\mc{W}} \ket{H}_{\mc{H}}
      + \sqrt{b} \cdot \ket{w1}_{\mc{H}} \ket{H_{x \la D(x)}}_{\mc{H}}}
   \]
where the bit appended to the workspace $w$ indicates the nature of the oracle.
We recover the quantum and classical oracles when $b = 0$ and $b = 1$ respectively (ignoring the query type bit). We will use $b \notin \rn$ in the analysis of noisy and bounded-depth quantum algorithms.

\paragraph{Hybrid Algorithm.} An algorithm with $t$ queries is defined as a sequence $U_0, \dots, U_t$ of unitary transformations acting on the memory register $\mc{A}$ and a list of real numbers
$\lb(1),\dots,\lb(t) \in [0,1]$ that specifies which interpolation parameter is used at each query.
The state $\ket{\psi^D_t}$ of the algorithm after $t$ queries is
  \begin{equation}
  \label{Eq:standD}
  \ket{\psi^D_t} = U_t \,\mc{O}_{\lb(t)}^D \, U_{t-1} \, \cdots \, U_1 \, \mc{O}_{\lb(1)}^D \, U_0 \, \pt{\ket{0}_{\mc{A}} \ket{\star,\dots,\star}_{\mc{H}}}.
  \end{equation}
The function~$D$ is chosen uniformly at random from the set $\set{D : [M] \ra [N]}$. We model that by adding another purification register (the \emph{database}) $\mc{D} = \mc{D}_0\dots\mc{D}_{M-1}$ where each subregister~$\mc{D}_x$ for $x \in [M]$ holds a value $D(x) \in [N]$ and we define the following joint state,
  \begin{equation}
  \label{Eq:stand}
    \ket{\psi_t} = \frac{1}{N^{M/2}} \sum_{D \in [N]^M} \ket{\psi^D_t}_{\mc{AH}} \otimes \ket{D}_{\mc{D}}
    = U_t \,\mc{O}_{\lb(t)} \, U_{t-1} \, \cdots \, U_1 \, \mc{O}_{\lb(1)} \, U_0 \, \ket{\psi_0},
  \end{equation}
where $\oqc := \sum_{D} \oqc^D \otimes \proj{D}_{\mc{D}}$ and $\ket{\psi_0} :=  \ket{0}_{\mc{A}} \otimes \ket{\star, \cdots, \star}_{\mc{H}} \otimes \frac{1}{N^{M/2}} \sum_{D} \ket{D}_{\mc{D}}$.

\paragraph{Output.} The output of a hybrid algorithm is obtained by performing a computational basis measurement on the final state $\ket{\psi_{t}}$ where we measure a designated part of the workspace register~$\CW$. Since in this paper the output is always a tuple $(x_1, \ldots, x_k) \in [M]^k$ with $k \leq 2$, by making $k$ extra classical queries, we may assume that all the indices $x_1, \ldots, x_k$ are in the history register at the end.

  \subsection{Models for NISQ Algorithms}
  We describe the three models of NISQ quantum query complexity than can be analyzed in our framework of hybrid algorithms and state some of their properties.

\paragraph{Model 1. Bounded Quantum Queries and Adaptiveness.}
We first consider the case of algorithms that make only two types of queries: quantum queries and classical queries (i.e.~$b \in \rn$). Here, one can consider two types of algorithms: static or adaptive. A ``static'' algorithm fixes the order of which type of queries to make before it interacts with the oracle. An ``adaptive'' algorithm adaptively chooses the query type for each individual query, as long as the total number of quantum (and classical) queries is unchanged.

Below, we present a theorem, as a special case of \cite[Theorem 1]{DFH22c}, showing that any hybrid algorithm can be assumed to be static without loss of generality.

\begin{theorem}\label{thm:adaptive}
  In the hybrid random oracle model, for any adaptive hybrid quantum algorithm making at most $q$ quantum queries and $c$ classical, there exists a static hybrid algorithm making at most $2 q$ quantum queries and $2 c$ classical queries such that their outputs are always identical.
\end{theorem}

Given the above theorem, we will only consider lower bounds for static algorithms in the rest of the paper.

Before we move on, we give some intuition on why \Cref{thm:adaptive} holds. For fixed $c, q$, there exists a sequence $\lb^* = \lb^*_1 \lb^*_2 \cdots \lb^*_{2c+2q} \in \{0, 1\}^{2c+2q}$ with exactly $2 c$ elements being $1$, such that every $\lb = \lb_1, \cdots, \lb_{c+q} \in \{0, 1\}^{c+q}$  is a subsequence of $\lb^*$. This was proved in \cite[Lemma 1]{DFH22c}; we ignore the proof and refer interested readers to \cite{DFH22c} for full details. Assuming the statement about the existence of such a sequence is true, a static hybrid algorithm just picks the fixed sequence $\lb^*$ and every time it makes the next query, it checks if the current query type in $\lb^*$ is equal to the next query type in $\lb$. If yes, it makes the query; otherwise, it makes a junk query (for example, regardless of the query type, querying on input $0$ classically and discarding both the input and output). This strategy results in identical behavior of the static hybrid algorithm and any adaptive hybrid algorithm.

\paragraph{Model 2. Noisy Quantum Queries.}
We next consider the case of algorithms that have access to a noisy quantum oracle with noise level $b \in [0,1]$. We define this model using the mixed state representation~$\rho$ of the memory of an algorithm (over the registers $\mc{XPW}$) and the channel $\mathcal{N}_{\mc{XP}}$ that dephases the index and phase registers (i.e. $\mathcal{N}_{\mc{XP}}(\rho) = \sum_{x,p} \pt*{\proj{x,p}\otimes\id_{\mc{W}}} \rho \pt*{\proj{x,p}\otimes\id_{\mc{W}}}$). The noisy oracle is represented by the channel
  \begin{equation}
    \label{Eq:noise}
    \mathcal{N}_b^D : \rho \mapsto (1-b)\cdot \oq^D\rho\oq^D\otimes\proj{0} + b\cdot \mathcal{N}_{\mc{XP}}\pt*{\oq^D\rho\oq^D}\otimes\proj{1}.
  \end{equation}
This channel dephases the query registers after each quantum query with probability $b \in [0,1]$ and appends a ``noise flag'' qubit indicating whether the dephasing occurred. The state of the algorithm after~$t$ queries is defined recursively as $\rho_0^D = \proj{0,0,0}$ and $\rho_{t}^D = U_{t} \mathcal{N}_b^D(\rho_{t-1}^D) U_{t}^{\dagger}$ where $U_{t}$ is the unitary operator applied by the algorithm after the $t$-th query. One can observe that the hybrid oracle $\oqc^D$ is a purification of the noise channel $\mathcal{N}_b^D$, where the environment is enacted by the history register $\mc{H}$.

\begin{fact}
  Let $\ket{\psi_t^D}$ be the state defined in \Cref{Eq:standD} for a given sequence of unitaries $U_0,\dots,U_t$ and hybrid oracles $\mc{O}_{\lb(1)}^D,\dots,\mc{O}_{\lb(t)}^D$. Let $\rho_t^D$ be the state obtained by applying the same sequence of unitaries and replacing each oracle $\mc{O}_{\lb(i)}^D$ with $\mathcal{N}_{\lb(i)}^D$. Then, $\rho_t^D = \Tr_{\mc{H}}(\proj{\psi_t^D})$.
\end{fact}

This fact implies that the complexity of solving any problem using noisy quantum oracles is captured by the above model of hybrid algorithms. We will use this connection to derive the complexity of the preimage search and collision finding problems with noisy oracles.

Notice that our model is particularly versatile for proving hardness results (as is the goal in the present paper). Indeed, it can simulate algorithms that do not have access to the noise flag (just ignore the flag), algorithms that are subject to depolarizing noise (measure the flag qubit and depolarize the state on purpose when it is 1) and algorithms whose entire memory is subject to noise. Hence, our lower bounds apply to these models as well.

\paragraph{Model 3. Bounded Quantum Depth.}
Finally, we consider the model of bounded-depth quantum computation where the entire system decoheres periodically. Given a depth parameter~$d$, this amounts to applying the channel $\mathcal{N}_{\mc{XPW}}$ that dephases all the memory (i.e. $\mathcal{N}_{\mc{XPW}}(\rho) = \sum_{x,p,w} \proj{x,p,w} \rho \proj{x,p,w}$) every $d$ queries. The state of the algorithm can again be defined recursively as $\rho_0^D = \proj{0,0,0}$, $\rho_{t}^D = U_{t} \mathcal{N}_{\mc{XPW}}\pt*{\oq^D \rho_{t-1}^D \oq^D} U_{t}^{\dagger}$ if $t$ is a multiple of $d$, and $\rho_{t}^D = U_{t} \oq^D \rho_{t-1}^D \oq^D U_{t}^{\dagger}$ otherwise. This captures the scenario where a classical computer has access to a quantum computer of depth~$d$ and performs $t$ queries in total, which is also known as the $d$-CQ scheme~\cite{CCL23j,CM20c}.

We show that any $d$-depth algorithm can be simulated by an unbounded-depth algorithm that uses the hybrid oracle $\mc{O}_{1/d}$ without increasing the query complexity significantly. Intuitively, the interpolation parameter $1/d$ is sufficiently small so that~$d$ calls to the hybrid oracle will behave almost as $d$ calls to the quantum oracle.

\begin{proposition}
  \label{Prop:reduction}
  Fix any $d$-depth algorithm that makes $t$ quantum queries in total. Then, there exists an algorithm in the hybrid model that makes at most $2t$ queries \emph{in expectation} to the oracle~$\mc{O}_{1/d}$ and outputs the same outcome as the bounded-depth algorithm.
\end{proposition}

\begin{proof}
  It is sufficient to explain how to simulate a sequence of $d$ quantum queries using at most~$2d$ queries in expectation to the hybrid oracle~$\mc{O}_{1/d}$. The proposition follows by applying this simulation to the $\lceil t/d\rceil$ sequences of queries occurring between the applications of the channel~$\mathcal{N}_{\mc{XPW}}$ in the bounded-depth model.

  Consider an algorithm making $d$ queries to a quantum oracle $\mc{O}_{0}^D$. Suppose that we instead use the hybrid oracle $\mc{O}_{1/d}^D$ and measure after each query whether the query type bit is $0$ -- indicating that the query is quantum. If it is not $0$, we restart the simulation (the initial memory is classical, hence it can be cloned to restart as many times as needed). The algorithm stops once it obtains a sequence of $d$ consecutive $0$ (which will perfectly simulate the bounded-depth algorithm). Since each query is quantum with probability $1-1/d$, the expected number of calls to $\mc{O}_{1/d}^D$ corresponds to the number of coin flips needed to get $d$ consecutive heads when a coin has probability $1-1/d$ of coming up heads. This is equal to $((1-1/d)^{-d}-1)d \leq 2d$.
\end{proof}

We can easily modify the above algorithm to make exactly $4t$ queries to~$\mc{O}_{1/d}$ and succeeds in doing the simulation with probability at least $1/2$. This leads to the following corollary for deriving lower bounds in the depth-bounded model.

\begin{corollary}
  \label{Cor:reduction}
  Let $\sigma(t,b)$ denote the optimal success probability for solving a given problem using~$t$ queries to the oracle~$\mc{O}_{b}$ where $b \in (0,1]$. Then, the optimal success probability for solving the same problem using $t$ quantum queries in the bounded-depth model with depth $d = \lceil1/b\rceil$ is at most $2\sigma(4t,b)$.
\end{corollary}

While this reduction may not be tight in general, we show in this paper that it provides optimal bounds (up to constant factors) for the preimage search and collision finding problems.

\section{Hybrid Compressed Oracle}
\label{sec:compressed_oracle_for_hybrid}
In this section, we define the hybrid compressed oracle framework and prove some of its main properties. We also describe general results for constructing and analyzing progress measures in this framework.

\subsection{Construction}
\label{Sec:construction}

We start by defining the compressed encoding of the database that will be compatible with the history register. For this, we first augment the alphabet used for the database register such that~$\mc{D}_x$ can now hold $D(x) \in \set{\bot} \cup [N]$ and with the convention that $\omega_N^{p D(x)} = 1$ if $D(x) = \bot$. The initial state of the database is defined to be $\ket{\bot,\dots,\bot}_{\mc{D}}$. We also augment the alphabet of the history register so it can also store tuples of the form $(x, \bot)$ where $x \in [M]$. We say that $x \in H$ if there is a tuple of the form $(x,y) \in H$ where $y \in \{\bot\} \cup [N]$. Note that if there are no ambiguous pairs in the list, we can identify~$H$ as a function mapping $[M]$ to $\{\bot, \star\} \cup [N]$ with the extended alphabet (we will prove in Proposition~\ref{Prop:consistency} that such a property always holds in practice).

Next, we define the uncompression operator $S$. Let
  $\ket{\phat}_{\CD_x} = \frac{1}{\sqrt{N}} \sum_{y \in [N]} \omega_N^{p y} \ket{y}_{\mc{D}_x}$ for $p = 0,\dots,N-1$,
denote the Fourier basis states and let $S_x$ be the unitary operator acting on $\mc{D}_x$ such that
  \[S_x :
    \left\{
        \begin{array}{ll}
            \ket{\bot}_{\mc{D}_x} & \longmapsto~ \ket{\hat{0}}_{\CD_x}  \\[10pt]
            \ket{\hat{0}}_{\CD_x}  & \longmapsto~ \ket{\bot}_{\mc{D}_x} \\[10pt]
            \ket{\phat}_{\CD_x}  & \longmapsto~ \ket{\phat}_{\CD_x}  \quad \mbox{for $p = 1,\dots,N-1$.}
        \end{array}
    \right.
  \]
Note that $S_x$ is unitary and Hermitian. We now define a controlled unitary $S_{x, H}$ acting on $\mc{D}_x$:
  \begin{equation}
  \label{eqn:uncomp}
    S_{x, H} = \begin{cases}
         \mathbb{I} & \text{ if } x \in H\\
         S_x & \text{ otherwise}.
    \end{cases}
  \end{equation}
Define the Hermitian unitary operator $S$ acting on $\mc{AHD}$ such that:
  \[S = \sum_{x \in [M],  H \in \history} \proj{x}_{\mc{X}} \otimes \id_{\mc{PW}} \otimes \proj{H}_{\mc{H}} \otimes \pt{\id_{\mc{D}_0 \dots \mc{D}_{x-1}} \otimes S_{x, H} \otimes \id_{\mc{D}_{x+1} \dots \mc{D}_{M-1}}}.\]
The hybrid compressed oracle $\qc$ is defined as follows,
  \[\qc = S \oqc S  \quad \mbox{where} \quad \oqc = \sum_{D \in \pt{\set{\bot} \cup [N]}^M} \oqc^D \otimes \proj{D}_{\mc{D}},\]
for $b \in [0,1]$. The idea behind these definitions is that, for any basis state $\ket{x, p, w}_{\mc{A}} \ket {H,D}_{\mc{HD}}$:
  \begin{itemize}
      \item If the queried input satisfies $x \in H$, it means that $x$ has been queried classically before; then we stop (un)compressing $\mc{D}_x$, and it behaves like a regular phase oracle on input $x$.
      \item Otherwise $x \not\in H$, then $\mc{D}_x$ is simulated as a compressed oracle.
  \end{itemize}
In particular, note that the quantum compressed oracle $\qq$ only acts on the register $\CH$ as control. We provide an alternative definition to $\qq$ and $\cc$ in \Cref{Sec:sampl} that makes these observations more formal. Finally, the joint state $\ket{\phi_t}$ of the algorithm and the oracle after $t$ queries in the compressed oracle model is defined as
  \begin{equation}
    \label{Eq:comp}
    \ket{\phi_t} = U_t \, \mc{R}_{\lb(t)} \, U_{t-1} \, \cdots \, U_1 \, \mc{R}_{\lb(1)}  \, U_0 \, \pt{\ket{0}_{\mc{A}} \ket{\star,\dots,\star}_{\mc{H}} \ket{\bot,\dots,\bot}_{\mc{D}}}.
  \end{equation}
Following from \Cref{Eq:comp}, we define the initial state $\ket{\phi_0} =  \ket{0}_{\mc{A}} \otimes \ket{\star, \cdots, \star}_{\mc{H}} \otimes \ket{\bot,\dots,\bot}_{\mc{D}}$.


\subsection{Structural Properties}

\paragraph{Indistinguishability.}

We show that the compression and uncompression operations behave as intended. For this, we will need some auxiliary definitions and lemmas. Let us define the unitary operator $\Sall$ that applies $S_{x, H}$ on every $\mc{D}_x$:
  \[\Sall = \sum_{H \in \history} \id_{\mc{XPW}} \otimes \proj{H}_{\mc{H}} \otimes \pt{S_{0, H} \otimes S_{2, H} \otimes \cdots \otimes S_{M-1, H}}.\]
In other words, we uncompress every entry of $\mc{D}$ (that is not in $H$) instead of only $\mc{D}_x$.
Observe that $\Sall \ket {\phi_0} = \ket{\psi_0}$.
We also have the following proposition:

\begin{proposition}
  \label{Prop:equivalent_uncompress}
  $\qc = S \oqc S = \Sall\oqc  \Sall$ for all $b \in [0,1]$.
\end{proposition}

\begin{proof}
  This is because for $\ket{x, p, w}_{\mc{A}}$, the oracle $\oqc$ acts as identity on the registers $\mc{D}_{<x}$ and~$\mc{D}_{>x}$. Therefore, for every $x' \ne x$, we have that $S_{x'}$ in the left multiplication with $\Sall$ cancels with $S_{x'}$ in the right multiplication with $\Sall$.
\end{proof}

The next proposition shows that $\ket{\phi_t}$ in the compressed oracle framework can be viewed as a compressed encoding of the state $\ket{\psi_t}$.

\begin{proposition}[Indistinguishability]
  \label{Prop:uncomp}
  The states $\ket{\psi_t}$ from \eqref{Eq:stand} and $\ket{\phi_t}$ from \eqref{Eq:comp} satisfy
    $\Sall \ket{\phi_t} = \ket{\psi_t}$.
  In particular, the two states are identical when we trace out the database register.
\end{proposition}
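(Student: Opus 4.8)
The plan is to prove $\Sall \ket{\phi_t} = \ket{\psi_t}$ by induction on $t$, the number of queries. The base case $t=0$ is exactly the stated fact that $\Sall \ket{\phi_0} = \ket{\psi_0}$, which follows because on the all-$\bot$ database with empty history, each $S_{x,H}$ is just $S_x$, which sends $\ket{\bot}$ to $\ket{\hat 0} = \frac{1}{\sqrt N}\sum_y \ket{y}$, turning the fully compressed database into the uniform superposition over all $D \in [N]^M$. For the inductive step, assume $\Sall \ket{\phi_{t-1}} = \ket{\psi_{t-1}}$; I want to show that applying $U_t \mc{R}^{\lb(t)}$ on the compressed side corresponds to applying $U_t \mc{O}^{\lb(t)}_D$ on the standard side after conjugating by $\Sall$.

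The key identity is that $\Sall \, \CR^{\lb(t)} = \mc{O}^{\lb(t)} \, \Sall$ as operators on $\mc{AHD}$, where $\mc{O}^{\lb(t)}$ denotes $\mc{O}^Q$ or $\mc{O}^C$ (the controlled version over $\ket D_\mc D$), which is exactly the operator that implements $\mc{O}^{\lb(t)}_D$ on the standard state. This identity is immediate from Proposition~\ref{Prop:equivalent_uncompress}: we have $\CR^{\lb(t)} = \Sall^\dagger \mc{O}^{\lb(t)} \Sall$, so $\Sall \CR^{\lb(t)} = \Sall \Sall^\dagger \mc{O}^{\lb(t)} \Sall = \mc{O}^{\lb(t)} \Sall$ since $\Sall$ is unitary (each $S_x$ is self-inverse, hence $\Sall^\dagger = \Sall$ and $\Sall\Sall^\dagger = \id$). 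Since $U_t$ acts only on $\mc A$ (and trivially, as the identity, on $\mc H$ and $\mc D$), it commutes with $\Sall$. Therefore
\[
  \Sall \ket{\phi_t} = \Sall \, U_t \, \CR^{\lb(t)} \ket{\phi_{t-1}} = U_t \, \Sall \, \CR^{\lb(t)} \ket{\phi_{t-1}} = U_t \, \mc{O}^{\lb(t)} \, \Sall \ket{\phi_{t-1}} = U_t \, \mc{O}^{\lb(t)} \ket{\psi_{t-1}} = \ket{\psi_t},
\]
where the last step uses the inductive hypothesis together with the fact that $\mc{O}^{\lb(t)}$ acting on $\ket{\psi_{t-1}} = \frac{1}{N^{M/2}}\sum_D \ket{\psi^D_{t-1}}_{\mc{AH}}\ket D_\mc D$ implements $\mc{O}^{\lb(t)}_D$ on each branch, matching the definition of $\ket{\psi_t}$ in~\eqref{Eq:stand}. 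The final sentence of the proposition then follows because $\Sall$ acts only on $\mc D$ (controlled by $\mc X, \mc H$), so tracing out $\mc D$ from $\ket{\phi_t}$ and from $\Sall\ket{\phi_t} = \ket{\psi_t}$ gives the same reduced state on $\mc{AH}$; more carefully, since $\Sall$ is a unitary on $\mc D$ controlled by registers that are themselves traced over or kept, one checks $\mathrm{Tr}_\mc D(\proj{\phi_t}) = \mathrm{Tr}_\mc D(\Sall \proj{\phi_t} \Sall^\dagger) = \mathrm{Tr}_\mc D(\proj{\psi_t})$.

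The only subtlety — and the main thing to get right rather than a genuine obstacle — is verifying that $\mc{O}^{\lb(t)}$ (the controlled-on-$\mc D$ operator from the construction, acting on $\mc D$ with the extended alphabet where $\omega_N^{p\bot} = 1$) genuinely reproduces $\mc{O}^{\lb(t)}_D$ on the uncompressed standard state, where $\mc D$ ranges only over $[N]^M$ and never holds $\bot$. This is fine because $\Sall \ket{\phi_{t-1}} = \ket{\psi_{t-1}}$ has no $\bot$ component in $\mc D$ by the inductive hypothesis (the right-hand side is a superposition over genuine $D \in [N]^M$), so the $\bot$-convention is never invoked at this step, and on the $[N]^M$ part the two operators agree by definition. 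One should also note that the classical oracle $\mc O^C$ additionally writes into the history register $\mc H$, but $\Sall$ only reads $\mc H$ as a control and never modifies it, so this causes no conflict with the commutation argument; the write into $\mc H$ on the compressed side matches precisely the write into $\mc H$ that is already built into $\mc{O}^C_D$ on the standard side via~\eqref{Eq:stand}.
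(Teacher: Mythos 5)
Your proof of the identity $\Sall\ket{\phi_t}=\ket{\psi_t}$ is correct and is essentially the paper's own argument: the paper substitutes $\CR^{\lb(i)}=\Sall^\dagger\CO^{\lb(i)}\Sall$ for every query simultaneously and cancels the adjacent $\Sall\Sall^\dagger$ pairs (using that each $U_i$ commutes with $\Sall$ and that $\Sall\ket{\phi_0}=\ket{\psi_0}$), which is exactly your induction unrolled; your observation that the $\bot$-convention is never invoked because $\Sall\ket{\phi_{t-1}}=\ket{\psi_{t-1}}$ has no $\bot$-component is a fine, harmless addition.

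One caveat concerns your justification of the last sentence of the proposition. The principle you invoke --- that $\mathrm{Tr}_{\CD}(\Sall\proj{\phi_t}\Sall^\dagger)=\mathrm{Tr}_{\CD}(\proj{\phi_t})$ because $\Sall$ acts on $\CD$ controlled by registers that are kept --- is false in general: a CNOT with kept control and traced-out target changes the reduced state of the control (it maps $\ket{+}\ket{0}$ to a Bell state, turning a pure reduced state into a maximally mixed one). Since $\Sall$ is controlled on $\CH$, and $\CH$ is kept when one traces out only $\CD$, this step does not go through as stated. What does follow immediately from $\Sall\ket{\phi_t}=\ket{\psi_t}$ is that the reduced states agree once one traces out everything $\Sall$ acts on, i.e.\ both $\CH$ and $\CD$ (so the algorithm's registers $\CA$ carry identical states), and that the blocks diagonal in the history value agree, since for each fixed $H$ the action of $\Sall$ is a unitary on $\CD$ alone; this is the operationally relevant statement, because $\CH$ purifies classical measurement outcomes. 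Coherences between distinct history values can genuinely differ between the two pictures (e.g.\ after one classical query made at a superposed index with $p=0$, tracing out $\CD$ from $\ket{\phi_1}$ leaves no cross terms between the two history branches, whereas tracing out $\CD$ from $\ket{\psi_1}$ does), so the verbatim ``trace out only $\CD$'' claim needs more than unitary invariance and your proposed one-line check should be replaced by the weaker, but sufficient, statement above.
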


\begin{proof}
Using \eqref{Eq:comp}, the left-hand side is equal to
  \begin{align*}
    \Sall \, \ket{\phi_t} &= \Sall \, U_t \, \mc{R}_{\lb(t)} \, U_{t-1} \, \cdots \, U_1 \, \mc{R}_{\lb(1)}  \, U_0 \, \ket{\phi_0} \\
    &= \Sall \, U_t \,  (\Sall \mc{O}_{\lb(t)} \Sall)\, U_{t-1} \, \cdots \, U_1 \,  (\Sall \mc{O}_{\lb(1)} \Sall) \, U_0 \, \ket{\phi_0} \\
    &= (\Sall \, \Sall) \, U_t \,  \mc{O}_{\lb(t)} (\Sall \, \Sall) \, U_{t-1} \, \cdots \, U_1 \,  \mc{O}_{\lb(1)}  \, U_0 \, \Sall \, \ket{\phi_0}  \\
    &= U_t \,  \mc{O}_{\lb(t)} \, U_{t-1} \, \cdots \, U_1 \,  \mc{O}_{\lb(1)}  \, U_0 \,   \ket{\psi_0} \\
    &= \ket{\psi_t}.
  \end{align*}
The second line follows from \Cref{Prop:equivalent_uncompress}. The third line is true because $U_i$ only operates on~$\mc{A}$ and commutes with $\Sall$ (which only operates on $\mc{HD}$). Finally, the last line uses that $\Sall$ is Hermitian, unitary and satisfies $\Sall \ket {\phi_0} = \ket{\psi_0}$.
\end{proof}

\paragraph{Consistency.}

We aim at characterizing what basis states can be in the support of~$\ket{\phi_t}$. We introduce the following vector space $\alg$ spanned by \emph{consistent states}.

\begin{definition}[History-Database Consistent State]
  \label{Def:consistent}
  Given an integer $t$, we say that $(H,D)$ is a \emph{history-database $t$-consistent pair} it it has the following properties:
      \begin{enumerate}
         \item (\textsc{Database size}) The database satisfies $D(x) \neq \bot$ for at most $t$ different values of $x$.
         \item (\textsc{History size}) The history is of the form $H = ((x_1,y_1),\dots,(x_c,y_c),\star,\dots,\star)$ where $x_1,\dots,x_c \in [M]$ and $y_1,\dots,y_c \in \set{\bot} \cup [N]$ for some $c \leq t$.
         \item (\textsc{Uniqueness}) We can identify the history with a function $H: [M] \ra \set{\star, \bot} \cup [N]$ where $H(x_j) = y_j$ for all $j \in \{1, 2, \cdots c\}$ (meaning no two pairs in the history can differ on the second coordinate only) and $H(x) = \star$ for $x \notin \set{x_1,\dots,x_c}$.
          \item (\textsc{Equality}) The database coincides with the history on non-$\star$ values, meaning that $H(x) \ne \star$ implies $D(x) = H(x)$.
      \end{enumerate}
  We let $\alg$ denote the vector space spanned by all basis state $\ket{x,p,w}_{\mc{A}}\ket{H,D}_{\mc{HD}}$ where~$(H,D)$ is history-database $t$-consistent. We say that a basis state is \emph{history-database consistent} if it is in~$\alg$ for some integer~$t$.
\end{definition}

The reader may wonder why we allow the history register to contain $(x,\bot)$ in the above definition since such a case shall not occur in $\ket{\psi_t}$ and $\ket{\phi_t}$ because of \Cref{Prop:uncomp}. This is only to provide more flexibility in further analysis. We now prove that $\kpt$ is supported over consistent basis states only.

\begin{proposition}[Consistency]
    \label{Prop:consistency}
    Any state $\ket{\phi_t}$ obtained after $t$ queries in the compressed oracle model satisfies $\ket{\phi_t} \in \alg$.
\end{proposition}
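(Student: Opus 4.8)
The plan is to prove this by induction on $t$, the number of queries made so far. The base case $t = 0$ is immediate: the initial state $\ket{\phi_0} = \ket{0}_{\mc{A}} \ket{\star,\dots,\star}_{\mc{H}} \ket{\bot,\dots,\bot}_{\mc{D}}$ has an empty database (so \textsc{Database size} holds with $q = 0$), an all-$\star$ history (so \textsc{History size} and \textsc{Uniqueness} hold with $c = 0$), and \textsc{Equality} holds vacuously. So $\ket{\phi_0} \in \mathbb{H}_{0,0} \subseteq \alg$ for any $c, q \ge 0$.

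For the inductive step, suppose $\ket{\phi_{t-1}} \in \mathbb{H}_{c', q'}$ where $c'$ and $q'$ count the classical and quantum queries among the first $t-1$. We then pass from $\ket{\phi_{t-1}}$ to $\ket{\phi_t} = U_t \, \mc{R}^{\lb(t)} \, \ket{\phi_{t-1}}$. Since $U_t$ acts only on the memory register $\mc{A}$ and consistency depends only on $(H, D)$, it suffices to show that $\mc{R}^{\lb(t)}$ maps $\mathbb{H}_{c',q'}$ into $\mathbb{H}_{c'+1, q'}$ (if $\lb(t) = C$) or into $\mathbb{H}_{c', q'+1}$ (if $\lb(t) = Q$); either way the image lies in $\mathbb{H}_{c,q}$ with $c = c' + [\lb(t)=C]$, $q = q' + [\lb(t)=Q]$, and $c + q = t$. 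The natural way to do this is to use the factorization $\mc{R}^{\lb(t)} = S^\dagger \mc{O}^{\lb(t)} S$ and track what each of the three operators does to a consistent basis state $\ket{x,p,w}_{\mc{A}} \ket{H}_{\mc{H}} \ket{D}_{\mc{D}}$: first, $S$ acts as $S_{x,H}$ on $\mc{D}_x$ and as identity on $\mc{H}$; if $x \in H$ then $S$ is the identity on $\mc{D}_x$, and if $x \notin H$ then $S_x$ takes $\ket{\bot}$ to a superposition over $\ket{y}$, $y \in [N]$, possibly increasing the number of non-$\bot$ database entries by one, but preserving the fact that $H(x) = \star$ so consistency is maintained; then $\mc{O}^{\lb(t)}$ applies a phase and, in the classical case, appends $(x, D(x))$ to the history register — which after the subsequent $S^\dagger$ we must argue yields a consistent state; finally $S^\dagger = S$ recompresses, and crucially, after the classical query $x$ is now in the (updated) history so $S_{x, H'} = \id$ and the entry $\mc{D}_x$ is left untouched, keeping it equal to the newly recorded history value.

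The bookkeeping to carry out carefully is: (i) \textsc{Database size} — $S$ can uncompress at most the single entry $\mc{D}_x$, so the count of non-$\bot$ entries increases by at most one per query, and in the quantum case it stays $\le q' + 1 \le q$, while in the classical case we can absorb this into either the database or history budget as needed; (ii) \textsc{Uniqueness} — appending $(x, D(x))$ to $H$ when $x \notin H$ (and $D(x) \ne \star$, which holds since $D(x) \in \{\bot\} \cup [N]$) cannot create an ambiguous pair, so the updated $H'$ is still identifiable with a function; (iii) \textsc{Equality} — the delicate point: after the classical query, $H'(x) = D(x)$ by construction, and for $x' \ne x$ the history and database are unchanged, so equality is preserved; one should check that the phase-oracle action $\omega_N^{p D(x)}$ doesn't affect $D$ at all (it acts on $\mc{P}$ only in the amplitude), and that when $x$ was already in $H$ the classical oracle overwrites $(x, D(x))$ consistently since $D(x) = H(x)$ already. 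The main obstacle I expect is handling the case distinction $x \in H$ versus $x \notin H$ cleanly — especially verifying that in the $x \notin H$ classical case, the sequence "uncompress $\mc{D}_x$ (turning $\bot$ into a superposition of values), record $(x, D(x))$ into history, then attempt to recompress" does the right thing: recompression is now blocked because $x$ has joined the history, so the database entry $\mc{D}_x$ is frozen at a genuine value matching the history, which is exactly what \textsc{Equality} demands. It may be cleanest to first record a small lemma describing the action of $\mc{R}^C$ and $\mc{R}^Q$ on an arbitrary consistent basis state (this is presumably what the paper's "alternative definition" in Section~\ref{Sec:sampl} provides), and then the consistency claim becomes a routine verification of the four conditions.
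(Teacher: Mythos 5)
Your route is sound but genuinely different from the paper's. You prove consistency bottom-up, by induction on the number of queries, tracking the action of $S^\dagger\mc{O}^{\lb(t)}S$ on a consistent basis state case by case (in effect re-deriving, or invoking, the per-query descriptions that the paper isolates as \Cref{Lem:recordQ,Lem:recordC}, which are indeed stated for consistent basis states and proved independently, so there is no circularity). The paper instead argues top-down: properties 2--4 of Definition~\ref{Def:consistent} are manifest for the uncompressed state $\ket{\psi_t}$, and they are transferred to $\kpt$ through the identity $\Sall\kpt=\ket{\psi_t}$ of Proposition~\ref{Prop:uncomp}, using that $\Sall$ is controlled on $\mc{H}$ (so the distribution over history basis states is unchanged) and acts as the identity on $\mc{D}_x$ whenever $H(x)\neq\star$ (so the history--database agreement is unchanged); only the database-size bound is argued per query. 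Your version is more self-contained and yields an explicit invariant maintained query by query; the paper's is a three-line argument that reuses indistinguishability and avoids the case analysis over $x\in H$ versus $x\notin H$ and $D(x)=\bot$ versus $D(x)\neq\bot$.

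One step in your outline is not yet an argument: the \textsc{Database size} bookkeeping for classical queries. Saying you ``absorb this into either the database or history budget as needed'' does not parse against condition~1 of Definition~\ref{Def:consistent}, which bounds the total number of non-$\bot$ entries of $D$ by $q$ with no reference to $H$; a classical query at a fresh index genuinely adds a non-$\bot$ entry to $D$, and the $q$-budget does not grow on classical queries. To close the induction you need the refined invariant that every database entry created by a classical query lies in the history, i.e.\ you should carry ``at most $q'$ values $x\notin H$ with $D(x)\neq\bot$'' through the induction: quantum queries increase this count by at most one and classical queries never increase it (they either freeze an existing entry into the history or create an entry that is simultaneously recorded there). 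This is the reading under which the proposition, and your induction, go through; note the paper's own one-sentence treatment of property~1, which mentions only quantum queries, implicitly relies on the same accounting. Also make sure your ``routine verification'' covers the classical query with $x\notin H$, $D(x)\neq\bot$, whose removed term produces a pair $(x,\bot)$ in the history with $D(x)=\bot$; this still satisfies \textsc{Equality} because the definition admits $\bot$ as a history value, but it is the one case your narrative skips.
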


\begin{proof}
  We check the four properties stated in Definition~\ref{Def:consistent}. The first property follows from the fact that each query can increase the number of non-$\bot$ entries in $D$ by at most 1. For the second and third properties, we note that they hold for $\ket{\psi_t}$ and, by Proposition~\ref{Prop:uncomp}, the states~$\ket{\psi_t}$ and~$\ket{\phi_t}$ have the same reduced density matrix over $\mc{H}$. Finally, the fourth property holds for~$\ket{\psi_t}$ since $H(x) \ne \star$ implies that $D(x) = H(x)$. By Proposition~\ref{Prop:uncomp} and Equation~\eqref{eqn:uncomp}, for any $x$ such that $H(x) \ne \star$, the unitary $\Sall$ acts like an identity on $\mc{D}_x$. Therefore, the same holds for~$\ket{\phi_t}$ as well.
\end{proof}

Because of the above proposition, it suffices to only consider history-database consistent basis states while analyzing any algorithm and we shall tacitly assume that this is the case in any of the proofs that follow.


\subsection{Sampling and Resampling}
\label{Sec:sampl}

In this section, we prove that the compressed oracle follows a similar behavior as the classical lazy-sampling strategy, namely the sampling of each input coordinate is delayed until it gets queried. There are some crucial differences yet, due to the reversibility of quantum computation. In particular, a coordinate can get ``resampled'' to a different value with a small probability.

In the rest of the paper, we abbreviate the root of unity $\omega_N = e^{\frac{2{\bf i}\pi}{N}}$ as~$\omega$. We also adopt the following notation to modify one entry of a database (we recall that for a history $H$ the notation~$H_{x \la y}$ is used for appending $(x,y)$ to the list).

\begin{definition}[$D_{x \la y}$]
  Let $(x,y) \in [M] \times (\set{\bot} \cup [N])$. Given $D : [M] \ra \set{\bot} \cup [N]$, we define the database $D_{x \la y}$ over the same domain as $D$ by
    \[
    D_{x \la y}(x') =
    \left\{
        \begin{array}{ll}
            y     & \text{if $x' = x$,}  \\[1pt]
            D(x') &\text{if $x' \neq x$.}
        \end{array}
    \right.
    \]
\end{definition}

The next lemmas describe what happens to the history and database when making a quantum or classical query. Among all the cases described below, the most interesting one is when the query is made at an index $x$ that is in the database but not in the history (i.e. $D(x) \neq \bot$ and $H(x) = \star$): up to a small resampling error, the database remains unchanged apart from an added phase.

\begin{lemma}[Quantum Query $\qq$]
  \label{Lem:recordQ}
  Let $\ket{x,p,w}\ket{H,D}$ be a history-database consistent basis state. Then, $\qq$ maps this state to $\ket{x,p,w0}\ket{H}\ket{\varphi}$ where the state $\ket{\varphi}$ of the database register is
  \begin{align*}
    \cdot\ & \omega^{p D(x)} \ket{D} \tag{\small if $H(x) \neq \star$ or $p = 0$} \\
    \cdot\ & \sum_{y \in [N]} \frac{\omega^{p y}}{\sqrt{N}} \ket{D_{x \la y}} \tag{\small if $H(x) = \star$, $D(x) = \bot$, $p \neq 0$} \\
    \cdot\ & \omega^{p D(x)} \ket{D} + \frac{\omega^{p D(x)}}{\sqrt{N}} \ket{D_{x \la \bot}} + \sum_{y \in [N]} \frac{1-\omega^{p D(x)} - \omega^{p y}}{N}  \ket{D_{x \la y}} \tag{\small if $H(x) = \star$, $D(x) \neq \bot$, $p \neq 0$}
  \end{align*}
\end{lemma}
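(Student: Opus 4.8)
The plan is to compute $\qq = S^\dagger \mc{O}^Q S$ acting on a consistent basis state $\ket{x,p,w,H,D}$ by tracking the action of $S$ on the subregister $\mc{D}_x$ only (since, by the definition of $S$ and the structure of $\mc{O}^Q$, everything else is a spectator), and splitting into the three cases according to whether $H(x) = \star$, whether $D(x) = \bot$, and whether $p = 0$. The easy case is $H(x) \ne \star$: by \eqref{eqn:uncomp} we have $S_{x,H} = \id$, so $\qq$ is just $\mc{O}^Q$, which multiplies by $\omega^{p D(x)}$ (using the convention that this equals $1$ when $D(x) = \bot$, which cannot happen here by the \textsc{Equality} property, but in any case $p=0$ also gives the trivial phase). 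So the first line is immediate.

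Next I would handle the case $H(x) = \star$ (so $S_{x,H} = S_x$). First apply $S_x$ to $\ket{D(x)}_{\mc{D}_x}$: if $D(x) = \bot$, then $S_x \ket{\bot} = \ket{\hat 0} = \frac{1}{\sqrt N}\sum_y \ket{y}$; if $D(x) \ne \bot$, I would expand $\ket{D(x)}$ in the Fourier basis as $\ket{D(x)} = \frac{1}{\sqrt N} \sum_{p'} \omega^{-p' D(x)} \ket{\hat{p'}}$, separate out the $p'=0$ term, and apply $S_x$, which swaps $\ket{\hat 0}\leftrightarrow\ket{\bot}$ and fixes all other $\ket{\hat{p'}}$. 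Then apply $\mc{O}^Q$: it multiplies $\ket{y}_{\mc{D}_x}$ by $\omega^{p y}$ for $y \in [N]$ and acts as identity on $\ket{\bot}$ (by the phase convention). Concretely, $\mc{O}^Q$ sends $\ket{\hat{p'}} \mapsto \ket{\widehat{p'+p}}$ for the $[N]$-supported part, so I would track how the Fourier labels shift by $p$. Finally apply $S_x^\dagger = S_x$ again to compress back, i.e.\ swap $\ket{\hat 0} \leftrightarrow \ket{\bot}$ once more, and re-expand in the standard basis. When $p \ne 0$ the label shift means the $\ket{\hat 0}$ that $S_x$ created (in the $D(x)=\bot$ sub-case) moves to $\ket{\hat p}$ which is untouched by the final $S_x$, yielding cleanly $\frac{1}{\sqrt N}\sum_y \omega^{py}\ket{D_{x\la y}}$ — this is the second line. (If $p = 0$, $\mc{O}^Q$ is the identity and $S_x S_x = \id$ returns $\ket{D}$, consistent with the first line.)

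The main work — and the one step I expect to be the genuine obstacle — is the third case ($H(x) = \star$, $D(x) \ne \bot$, $p \ne 0$), where several Fourier components get relabelled and the $\ket{\hat 0}\leftrightarrow\ket{\bot}$ swaps interact nontrivially with the phase shift. I would carry out the bookkeeping carefully: writing $\ket{D(x)} = \frac{1}{\sqrt N}\ket{\hat 0} + \frac{1}{\sqrt N}\sum_{p' \ne 0}\omega^{-p'D(x)}\ket{\hat{p'}}$; applying $S_x$ to get $\frac{1}{\sqrt N}\ket{\bot} + \frac{1}{\sqrt N}\sum_{p'\ne 0}\omega^{-p'D(x)}\ket{\hat{p'}}$; applying $\mc{O}^Q$, which fixes $\ket{\bot}$ and sends $\ket{\hat{p'}} \mapsto \ket{\widehat{p'+p}}$, so the sum becomes $\frac{1}{\sqrt N}\sum_{p'\ne 0}\omega^{-p'D(x)}\ket{\widehat{p'+p}}$; re-indexing by $p'' = p'+p$ and noting $p'' = p$ is excluded from the shifted sum while $p''=0$ appears with coefficient $\omega^{-(-p)D(x)}\cdot\frac{1}{\sqrt N} = \frac{\omega^{pD(x)}}{\sqrt N}$; then applying the final $S_x$, which swaps the $\ket{\bot}$-component (coefficient $\frac{1}{\sqrt N}$) with the $\ket{\hat 0}$-component (coefficient $\frac{\omega^{pD(x)}}{\sqrt N}$). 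Collecting everything and converting the Fourier basis back to the standard basis via $\ket{\hat{p''}} = \frac{1}{\sqrt N}\sum_y \omega^{p'' y}\ket{y}$, the $\ket{\hat 0}$ term with coefficient $\frac{1}{\sqrt N}$ becomes $\frac{1}{N}\sum_y \ket{y}$, the leftover $\ket{\bot}$ term gives $\frac{\omega^{pD(x)}}{\sqrt N}\ket{\bot} = \frac{\omega^{pD(x)}}{\sqrt N}\ket{D_{x\la\bot}}$, and the remaining shifted Fourier sum must be reassembled — the cleanest route is to note the full output (before the final $S_x$) minus the two exceptional components is $\frac{1}{\sqrt N}\sum_{p''}\omega^{-(p''-p)D(x)}\ket{\hat{p''}}$ up to corrections, i.e.\ $\omega^{pD(x)}\ket{D(x)}$ up to the $\hat 0$ and $\bot$ adjustments, and then carefully subtract/add back the $O(1/N)$-weight corrections. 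After combining the $\ket{y}_{\mc{D}_x}$ coefficients one obtains exactly $\omega^{pD(x)}\ket{D} + \frac{\omega^{pD(x)}}{\sqrt N}\ket{D_{x\la\bot}} + \sum_{y\in[N]}\frac{1 - \omega^{pD(x)} - \omega^{py}}{N}\ket{D_{x\la y}}$, matching the claimed formula. I would double-check the identity by verifying the output has unit norm (the cross terms in the squared norm should telescope, using $\sum_{y}\omega^{py} = 0$ since $p \ne 0$), which is a good sanity check that the $O(1/N)$ coefficients are correct.
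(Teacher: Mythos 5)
Your proposal is correct and follows essentially the same route as the paper's proof: a direct computation of $S^\dagger \mc{O}^Q S$ on the register $\mc{D}_x$, expanding $\ket{D(x)}$ in the Fourier basis, using the $\ket{\bot}\leftrightarrow\ket{\hat 0}$ swap and the fact that $\mc{O}^Q$ shifts Fourier labels by $p$, with the third case carrying all the bookkeeping (the paper likewise only writes out that case). One immaterial quibble: when $H(x)\neq\star$ the consistency definition does allow $H(x)=D(x)=\bot$, so "cannot happen by the \textsc{Equality} property" is not quite right, but as you note the convention $\omega^{pD(x)}=1$ for $D(x)=\bot$ makes the first line hold regardless.
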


\begin{lemma}[Classical Query $\cc$]
  \label{Lem:recordC}
  Let $\ket{x,p,w}\ket{H,D}$ be a history-database consistent basis state. Then, $\cc$ maps this state to $\ket{x,p,w1}\ket{\varphi}$ where the state $\ket{\varphi}$ of the history-database registers is
  \begin{align*}
    \cdot\ & \omega^{p D(x)} \ket{H_{x \la D(x)}, D} \tag{\small if $H(x) \neq \star$} \\
    \cdot\ & \sum_{y \in [N]} \frac{\omega^{p y}}{\sqrt{N}} \ket{H_{x \la y}, D_{x \la y}} \tag{\small if $H(x) = \star$, $D(x) = \bot$} \\
    \cdot\ & \omega^{p D(x)} \ket{H_{x \la D(x)}, D} + \frac{1}{\sqrt{N}} \ket{H_{x \la \bot}, D_{x \la \bot}} - \sum_{y \in [N]} \frac{\omega^{p y}}{N}  \ket{H_{x \la y}, D_{x \la y}} \tag{\small if $H(x) = \star$, $D(x) \neq \bot$}
  \end{align*}
\end{lemma}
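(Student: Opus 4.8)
The plan is to unfold the definition $\cc = S^{\dagger}\,\mc{O}^C\,S$ and trace the action of each of the three factors on the input basis state $\ket{x,p,w,H,D}$. Two structural observations drive the whole computation. First, $S$ is controlled on $\ket{x}_{\mc{X}}$ and $\ket{H}_{\mc{H}}$ and only ever touches the subregister $\mc{D}_x$, so the registers $\mc{A}$, the $\mc{D}_{x'}$ with $x'\neq x$, and the already-recorded part of $\mc{H}$ are carried along untouched, and it suffices to follow $\mc{D}_x$ together with the leftmost free slot of $\mc{H}$. Second, $\mc{O}^C$ writes a pair of the form $(x,\cdot)$ into $\mc{H}$; hence after $\mc{O}^C$ the control value seen by the trailing $S^{\dagger}$ always contains $x$ in its history, so $S_{x,\cdot}=\id$ by \eqref{eqn:uncomp} and the final $S^{\dagger}$ acts as the identity. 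Consequently $\cc\ket{x,p,w,H,D} = \mc{O}^C(S\ket{x,p,w,H,D})$ in every case, and the proof reduces to (i) applying $S$ to $\mc{D}_x$ and then (ii) reading off the phase and the recorded pair produced by $\mc{O}^C$ on each resulting database branch. I would also note at the outset that consistency of the input basis state guarantees $\mc{H}$ has a free $\star$ entry, so $H_{x\la\cdot}$ is well-defined.

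I would then split into the three cases of the statement. If $H(x)\neq\star$, i.e.\ $x\in H$, then $S_{x,H}=\id$, so $S$ does nothing; $\mc{O}^C$ multiplies by $\omega^{pD(x)}$ and records $(x,D(x))$, yielding $\omega^{pD(x)}\ket{H_{x\la D(x)},D}$, which is the first case (using the convention $\omega^{p\bot}=1$ to also cover the degenerate possibility $D(x)=\bot$, and consistency $D(x)=H(x)$ to see the re-recorded pair is not ambiguous). If $H(x)=\star$ and $D(x)=\bot$, then $S_x$ sends $\ket{\bot}_{\mc{D}_x}$ to $\ket{\hat{0}}_{\mc{D}_x}=\tfrac{1}{\sqrt{N}}\sum_{y\in[N]}\ket{y}_{\mc{D}_x}$; applying $\mc{O}^C$ on each branch multiplies by $\omega^{py}$ and records $(x,y)$, giving $\tfrac{1}{\sqrt{N}}\sum_{y\in[N]}\omega^{py}\ket{H_{x\la y},D_{x\la y}}$, the second case.

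The only case needing a short calculation is $H(x)=\star$ with $D(x)=d\neq\bot$. Expanding $\ket{d}_{\mc{D}_x}$ in the Fourier basis gives $\ket{d}=\tfrac{1}{\sqrt{N}}\sum_{k=0}^{N-1}\omega^{-kd}\ket{\hat{k}}$, and since $S_x\ket{\hat{0}}=\ket{\bot}$ while $S_x\ket{\hat{k}}=\ket{\hat{k}}$ for $k\neq 0$, I obtain $S_x\ket{d}=\tfrac{1}{\sqrt{N}}\ket{\bot}+\tfrac{1}{\sqrt{N}}\sum_{k=1}^{N-1}\omega^{-kd}\ket{\hat{k}} = \tfrac{1}{\sqrt{N}}\ket{\bot}+\ket{d}-\tfrac{1}{\sqrt{N}}\ket{\hat{0}} = \tfrac{1}{\sqrt{N}}\ket{\bot}+\ket{d}-\tfrac{1}{N}\sum_{y\in[N]}\ket{y}$, after adding and subtracting the $k=0$ term to re-form $\sqrt{N}\ket{d}$ and then re-expanding $\ket{\hat{0}}$ in the standard basis. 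Feeding this through $\mc{O}^C$ branchwise --- the $\ket{\bot}$ branch gets phase $1$ and records $(x,\bot)$, the $\ket{d}$ branch gets $\omega^{pd}$ and records $(x,d)$, and each $\ket{y}$ branch gets $\omega^{py}$ and records $(x,y)$ --- yields precisely $\omega^{pD(x)}\ket{H_{x\la D(x)},D} + \tfrac{1}{\sqrt{N}}\ket{H_{x\la\bot},D_{x\la\bot}} - \sum_{y\in[N]}\tfrac{\omega^{py}}{N}\ket{H_{x\la y},D_{x\la y}}$, the third case, and the trailing $S^{\dagger}$ is trivial as explained above.

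I do not expect a real obstacle: the argument is a direct unfolding, and the only delicacy is bookkeeping --- keeping straight the convention $\omega^{p\bot}=1$, the ``leftmost $\star$'' rule defining $H_{x\la\cdot}$, and the Fourier identity. The point worth flagging (and the reason Lemma~\ref{Lem:recordQ} looks messier) is that $\mc{O}^Q$ does \emph{not} write to $\mc{H}$, so in $\qq=S^{\dagger}\mc{O}^Q S$ the trailing $S^{\dagger}$ is no longer trivial and re-applies $S_x$ to $\mc{D}_x$; this second application of $S_x$ --- via the same $S_x\ket{d}$ expansion --- is exactly what generates the resampling terms $\tfrac{\omega^{pD(x)}}{\sqrt{N}}\ket{D_{x\la\bot}}$ and $\sum_{y}\tfrac{1-\omega^{pD(x)}-\omega^{py}}{N}\ket{D_{x\la y}}$ in the quantum lemma.
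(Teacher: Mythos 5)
Your proposal is correct and follows essentially the same route as the paper's proof: unfold $\cc = S^{\dagger}\mc{O}^C S$, compute the action of $S$ on $\mc{D}_x$ (via the Fourier expansion of $\ket{D(x)}$ in the nontrivial case $H(x)=\star$, $D(x)\neq\bot$), apply $\mc{O}^C$ branchwise, and observe that the trailing $S^{\dagger}$ acts as the identity because $x$ is now recorded in the history. The paper only writes out the third case explicitly, so your handling of the first two cases and of the convention $\omega^{pD(x)}=1$ when $D(x)=\bot$ is merely added detail, not a different method.
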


In the above lemmas, when $x$ is not in the history but is in the database, after making a quantum or classical query, most likely $D(x)$ remains unchanged (corresponding to the $\ket{D}$ term), but there is a small probability that $D(x)$ gets removed (corresponding to the $\ket{D_{x \la \bot}}$ term) or resampled (corresponding to a superposition of $\ket{D_{x \la y}}$ over $y$).  We call the first term ``unchanged term'' (the database does not get updated), the second term ``removed term'' (the outcome on $x$ gets removed) and the last one ``resampled term'' in both items above. The proofs can be found in \Cref{sec:proof_resample}.


\subsection{Progress Measures}

All progress measures studied in this paper will be expressed in terms of the norm of the projection onto basis states satisfying certain predicates.

\begin{definition}[Basis-State Predicate]
  Let $\predP : (x,p,w,H,D) \mapsto \set{\predF,\predT}$ be a predicate function over all basis states $\ket{x,p,w}_{\CA}\ket{H,D}_{\mc{HD}}$. We define the projection
    \[\pp = \sum_{(x,p,w,H,D) \in \predP^{-1}(\predT)} \proj{x,p,w,H,D}\]
  over all basis states satisfying \predP. We let~$\br{\predP}$ denote the \emph{negation} of~\predP\ and, given two predicates $\predP_1$ and $\predP_2$, we let $\predP_1 \cdot \predP_2$ denote their \emph{conjunction} and $\predP_1 + \predP_2$ denote their \emph{disjunction}.
\end{definition}

\begin{fact}
  \label{Fact:pred}
  Let $\predP_1$ and $\predP_2$ be two basis-state predicates. Then, the projections $\Pi_{\predP_1}$ and $\Pi_{\predP_2}$ are commuting operators. We have $\Pi_{\br{\predP_1}} = \id - \Pi_{\predP_1}$, $\Pi_{\predP_1 \cdot \predP_2} = \Pi_{\predP_1}\Pi_{\predP_2}$ and $\Pi_{\predP_1 + \predP_2} = \Pi_{\predP_1} + \Pi_{\predP_2} - \Pi_{\predP_1}\Pi_{\predP_2}$. Moreover, $P_1 \Rightarrow P_2$ if and only if $\Pi_{\predP_1} \preceq \Pi_{\predP_2}$, where $\preceq$ is the Loewner order.
\end{fact}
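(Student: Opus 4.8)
The plan is to exploit the single structural feature shared by all the operators in the statement: each $\Pi_{\predP}$ is diagonal in the fixed orthonormal basis of basis states $\ket{x,p,w}_{\CA}\ket{H}_{\CH}\ket{D}_{\CD}$. Once this is made explicit, every assertion reduces to an elementary identity about indicator functions of subsets of basis states, so the proof is essentially bookkeeping.

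First I would rewrite the definition of $\Pi_{\predP}$ as follows: for a basis state $\ket{s}$ with $s=(x,p,w,H,D)$, one has $\Pi_{\predP}\ket{s} = \mathbf{1}[\predP(s)=\predT]\,\ket{s}$. Thus $\Pi_{\predP}$ is the diagonal $0/1$ operator indexed by basis states whose $s$-th diagonal entry is the indicator of $s \in \predP^{-1}(\predT)$. Since any two operators that are diagonal in the same orthonormal basis commute, $\Pi_{\predP_1}$ and $\Pi_{\predP_2}$ commute. As $\br{\predP}^{-1}(\predT)$ is the complement of $\predP^{-1}(\predT)$, we have $\mathbf{1}[\br{\predP}(s)] = 1 - \mathbf{1}[\predP(s)]$ for every $s$, hence $\Pi_{\br{\predP}} = \id - \Pi_{\predP}$.

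For the conjunction I would use that the product of two diagonal operators is the diagonal operator whose entries are the entrywise products, together with $\mathbf{1}[\predP_1(s)\wedge\predP_2(s)] = \mathbf{1}[\predP_1(s)]\,\mathbf{1}[\predP_2(s)]$; this gives $\Pi_{\predP_1}\Pi_{\predP_2} = \Pi_{\predP_1\cdot\predP_2}$. The disjunction then follows by substituting into the inclusion--exclusion identity $\mathbf{1}[\predP_1 \vee \predP_2] = \mathbf{1}[\predP_1] + \mathbf{1}[\predP_2] - \mathbf{1}[\predP_1]\,\mathbf{1}[\predP_2]$ and recognising the last term as $\Pi_{\predP_1}\Pi_{\predP_2}$.

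It remains to treat the Loewner-order equivalence. If $\predP_1 \Rightarrow \predP_2$, then $\predP_1^{-1}(\predT) \subseteq \predP_2^{-1}(\predT)$, so $\Pi_{\predP_2} - \Pi_{\predP_1}$ is diagonal with all entries in $\{0,1\}$, hence positive semidefinite, i.e. $\Pi_{\predP_1} \preceq \Pi_{\predP_2}$. Conversely, if $\Pi_{\predP_1} \preceq \Pi_{\predP_2}$, then testing the inequality against each basis state gives $\mathbf{1}[\predP_1(s)] = \braket{s}{\Pi_{\predP_1}}{s} \le \braket{s}{\Pi_{\predP_2}}{s} = \mathbf{1}[\predP_2(s)]$, so $\predP_1(s) = \predT$ forces $\predP_2(s) = \predT$. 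There is no genuine obstacle in any of this; the only point worth a word of care is the backward direction of this last equivalence, where one uses that Loewner dominance between diagonal $0/1$ operators can be checked one basis state at a time.
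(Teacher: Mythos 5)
Your proof is correct, and it is exactly the intended argument: the paper states this Fact without proof precisely because all the projectors $\Pi_{\predP}$ are diagonal in the common basis of states $\ket{x,p,w,H,D}$, so every claim reduces to Boolean identities on indicator functions, which is what you carried out. The one step deserving care, the backward direction of the Loewner-order equivalence, you handled correctly by testing the operator inequality on each basis state.
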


Most of the predicates considered in this paper will in fact depend only on the values of~$H$ and $D$ (a few predicates will also depend on the query index $x$).

We define the following general notions of progress measure and overlap.

\begin{definition}[Progress Measure and Progress Overlap]
  \label{Def:progIncr}
  Given a state $\kp$, a real $b \in [0,1]$ and a projector $\Pi$ over~$\mc{AHD}$, we define
    \[\quad \Delta_b(\Pi,\kp) = \norm{\Pi \qc \kp}^2 - \norm{\Pi \kp}^2 \quad \mathrm{and} \quad
    \Gamma_b(\Pi,\kp) = \frac{\norm{\Pi \qc (\id - \Pi) \kp}^2}{\norm{(\id - \Pi) \kp}^2}, \]
  with the convention that $\Gamma_b(\Pi,\kp) = 0$ if $\norm{(\id - \Pi) \kp} = 0$.
\end{definition}

The quantity $\Delta_b(\Pi,\kp) \in [-1,1]$ represents the increase in norm of the projection onto $\Pi$ after applying a hybrid query $\qc$. These will be used as a measure of progress later in the proofs.

The quantity $\Gamma_b(\Pi,\kp) \in [0,1]$ tracks the amplitude that moves after making a query from a subspace to its orthogonal complement. In particular, if $\Gamma_b(\Pi,\kp) \le \gamma$, then we have that $\norm{\Pi \qc (\id - \Pi) \kp}^2 \le \gamma \norm{(\id - \Pi) \kp}^2$. In this paper, we only consider projectors $\pp$ for some predicates $\predP$. In such cases, we can equivalently write
\[\Gamma_b(\pp,\kp) = \frac{\norm{\pp \qc \pnp \kp}^2}{\norm{\pnp \kp}^2}.\]

Next, we give two general lemmas that bound how much increase a single classical or quantum query can have towards a target history--database pair. These lemmas will apply when the predicate satisfies the following definition, which is similar to the notion of ``database property'' introduced in \cite{CMS19c,CFHL21c}. One difference in our definition is that we need to take the classical history into account.

\begin{definition}[History-Database Predicate]
  \label{Def:hdPred}
  Let $\predP : (H,D) \mapsto \set{\predF,\predT}$ be a predicate function over all history-database pairs. We say that it is a \emph{history-database predicate} if for every true-pair $(H,D) \in \predP^{-1}(\predT)$,
  \begin{itemize}
    \item (\textsc{Consistent}) The pair $(H,D)$ is history-database consistent (see Definition~\ref{Def:consistent}).
    \item (\textsc{History Invariant}) For every list $H'$ such that $(H',D)$ is history-database consistent and $H(x') = H'(x')$ for all $x' \in [M]$, we have $(H',D) \in \predP^{-1}(\predT)$.
    \item (\textsc{Database Monotone}) For every database $D'$ that is obtained by replacing a $\bot$ in $D$ with another value (i.e. $D = D'_{x' \la \bot}$ for some $x' \in [M]$), we have $(H,D') \in \predP^{-1}(\predT)$.
  \end{itemize}
  By extension, we say that $\predP : (x,p,w,H,D) \mapsto \set{\predF,\predT}$ is a history-database predicate if it does not depend on $(x,p,w)$ and its restriction to $(H,D)$ satisfies the above properties.
\end{definition}

The next lemmas bound the progress overlap $\Gamma_0$ (resp. $\Gamma_1$) in terms of the probability $\gamma$ that a history-database predicate becomes true when a new uniformly random value $y$ is added to the database (resp. database and history). We first provide the lemma for quantum queries, which follows the ideas used in previous work, starting from \cite{Zha19c}. Then we state the lemma for classical queries, which is new, but the core argument in the proof is similar. These results encompass most, although not all (see Lemma~\ref{Lem:samplH}), of the progress overlap bounds needed in subsequent applications. The proofs can be found in \Cref{sec:proof_progress_measure}.

\begin{lemma}[Progress Overlap, Quantum Query]
  \label{lem:quantum_query_progress}
  Let $\predP$ be a history-database predicate, $t$ be an integer and $\gamma \in [0, 1]$ be a real parameter. Suppose that, for every false-state $(H,D) \in \predP^{-1}(\predF) \cap \alg$ where $D(x) = \bot$, the probability to make the predicate true by replacing $D(x)$ with a random value $y$ is at most
  \begin{equation}
    \label{Eq:predQ}
    \Pr_{y \gets [N]}\bc*{(H,D_{x \la y}) \in \predP^{-1}(\predT)} \leq \gamma.
  \end{equation}
  Then, the quantum progress overlap is at most $\Gamma_0(\pp,\kp) \leq 10 \gamma$ for all~$\kp \in \alg$.
\end{lemma}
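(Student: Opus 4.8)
The plan is to relate the overlap $\Gamma^Q(\pp,\kp) = \norm{\pp\,\qq\,\pnp\kp}/\norm{\pnp\kp}$ to a bound that holds \emph{uniformly over every false basis state}, so that it is enough to analyze $\pp\,\qq$ applied to a single consistent false-state $\ket{x,p,w,H,D}$. First I would recall from Proposition~\ref{Prop:equivalent_uncompress} and the definition of $\qq$ that $\qq$ acts as a control on $\ket{x,p,w}_\CA$, so I may fix the query index $x$, phase $p$, and workspace $w$, and then decompose $\pnp\kp = \sum_{H,D} \alpha_{H,D}\ket{x,p,w,H,D}$ over false consistent states only. Since $\qq$ changes only the $\mc{D}_x$ subregister (and uses $\CH$ merely as a control), the images of basis states with different $x$-free parts $(H, D|_{[M]\setminus\{x\}})$ land in orthogonal subspaces; so the squared norm of $\pp\,\qq\,\pnp\kp$ splits as a sum over these ``blocks,'' and it suffices to bound, for each block, the coefficient of $\pp$ on the output of $\qq$ acting on that block's contribution.

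Within a single block the only freedom is the value $D(x)\in\{\bot\}\cup[N]$, and the vectors $\{\ket{D_{x\la y}}: y\in\{\bot\}\cup[N]\}$ are orthonormal. I would split into the three cases of Lemma~\ref{Lem:recordQ}. The easy case is $H(x)\neq\star$ or $p=0$: then $\qq$ only multiplies by a phase, so $\pp$ applied to the output has exactly the same norm as $\pp$ applied to the input, and since $\pp\pnp\kp=0$ restricted to these states contributes nothing — actually one must be slightly careful: by the History-Invariant and the fact that $H(x)\neq\star$ forces $D(x)=H(x)$ for consistent states, a phase-only action cannot move a false state to a true state, so this case contributes $0$ to $\pp\,\qq\,\pnp\kp$. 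The substantive case is $H(x)=\star$: whether $D(x)=\bot$ (so the output is $\frac1{\sqrt N}\sum_y \omega^{py}\ket{D_{x\la y}}$) or $D(x)=\bot'\ne\bot$ wait it is some value $v\neq\bot$ (output is $\omega^{pv}\ket{D} + \frac{\omega^{pv}}{\sqrt N}\ket{D_{x\la\bot}} + \sum_y \frac{1-\omega^{pv}-\omega^{py}}{N}\ket{D_{x\la y}}$), I would compute $\norm{\pp\cdot(\text{output})}^2$ by expanding in the orthonormal basis $\{\ket{D_{x\la y}}\}_{y}\cup\{\ket{D_{x\la\bot}}\}$. The Database-Monotone property ensures: if $D(x)\ne\bot$ and $(x,p,w,H,D)$ is \emph{false}, then $\ket{D_{x\la\bot}}$ might be true, but the $\ket{D}$ (unchanged) term has the same $D$ and is still false, contributing $0$; so only the $\ket{D_{x\la\bot}}$ and the $\ket{D_{x\la y}}$ terms can land on $\pp$.

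Now I would estimate the relevant coefficients. Each coefficient on a $\ket{D_{x\la y}}$ term has magnitude $O(1/\sqrt N)$ (it is $\frac{\omega^{py}}{\sqrt N}$ or $\frac{1-\omega^{pv}-\omega^{py}}{N}$, the latter being $O(1/N)$), and the coefficient on $\ket{D_{x\la\bot}}$ is $O(1/\sqrt N)$. By hypothesis~\eqref{Eq:predQ}, the fraction of $y\in[N]$ for which $(x,p,w,H,D_{x\la y})$ is true is at most $\gamma$ in the $D(x)=\bot$ case; in the $D(x)=v\ne\bot$ case I would apply the hypothesis to the state $(x,p,w,H,D_{x\la\bot})$ (which is in $\predP^{-1}(\predF)\cap\alg$ if the original is false, by Database-Monotone for $\pp$ and consistency of removing one non-$\bot$ entry — the database-size property is preserved), and note that $(x,p,w,H,D_{x\la y}) = (x,p,w,H,(D_{x\la\bot})_{x\la y})$, so again at most a $\gamma$-fraction of $y$'s yield true. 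Hence the total mass of the output lying in $\pp$ is at most $\gamma\cdot O(1/N)\cdot N + O(1/N) = O(\gamma) + O(1/N)$; the $O(1/N)$ term from the single $\ket{D_{x\la\bot}}$ component must itself be absorbed — here I would bound it by noting that we may assume $\gamma \ge 1/N$ without loss of generality since replacing a single $\bot$ with $y$ makes the predicate true for at least... hmm, if $\gamma < 1/N$ then no $y$ makes it true, so that term's predicate-truth must come purely from the $\bot$-removal, which happens with coefficient-squared exactly $1/N$; so in all cases the per-block squared-mass on $\pp$ is at most $\gamma + 1/N + (\text{cross terms } O(1/N^{3/2}))$, and after summing over blocks and dividing by $\norm{\pnp\kp}^2$ we get $\Gamma^Q(\pp,\kp)^2 \le \gamma + O(1/N) + \dots$, which is at most $10\gamma$ once the lower-order terms are folded in (using $\gamma\ge 1/N$ or handling $\gamma<1/N$ separately as above). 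The main obstacle I anticipate is the bookkeeping of the $O(1/N)$ removal term and the small $O(1/N)$ ``resampling'' coefficients: making sure that when $\gamma$ is very small (down to $0$) the bound $\sqrt{10\gamma}$ still holds requires arguing that the removal term alone cannot push a false state into $\pp$ unless some $\ket{D_{x\la y}}$ (equivalently, after re-adding, some random-value replacement) is also true, i.e. that $\gamma$ genuinely controls \emph{all} the contributions, not just the resampled ones — this is exactly where the History-Invariant and Database-Monotone conditions do the work, and it is the step I would write most carefully.
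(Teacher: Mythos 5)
Your overall strategy matches the paper's: split the false states according to the three cases of \Cref{Lem:recordQ} (the phase-only case $H(x)\neq\star$ or $p=0$, the sampling case $D(x)=\bot$, and the resampling case $D(x)\neq\bot$), argue that the phase-only case contributes nothing, apply the hypothesis \eqref{Eq:predQ} directly in the sampling case, and apply it to the state $(x,p,w,H,D_{x\la\bot})$ in the resampling case. However, there is a genuine gap in your treatment of the ``removed'' term $\ket{D_{x\la\bot}}$. You write that it ``might be true'' and then try to absorb its $O(1/N)$ squared mass into the bound $10\gamma$ by assuming $\gamma\ge 1/N$ ``without loss of generality''. That assumption is not available: the bound $\sqrt{10\gamma}$ must hold for arbitrarily small $\gamma$, including $\gamma$ strictly below $1/N$ and even $\gamma=0$, and such cases actually occur in the applications (e.g.\ \Cref{Eq:phQ} in \Cref{Lem:samplQ} with $c=0$, where the claim is that a quantum query can never create a hybrid collision when the history is empty). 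The correct resolution, which is what the Database-Monotone property is for, is that the removed term \emph{never} contributes at all: if $(x,p,w,H,D)$ is false with $D(x)=v\neq\bot$, and $(x,p,w,H,D_{x\la\bot})$ were true, then monotonicity (adding the value $v$ back) would force $(x,p,w,H,D)$ to be true, a contradiction. So the removal term is orthogonal to $\supp\pp$ and the problematic $O(1/N)$ never appears --- your closing remark that ``this is exactly where \dots Database-Monotone conditions do the work'' points at the right ingredient, but the argument you actually give runs the implication in the wrong direction and leaves the bound broken precisely in the small-$\gamma$ regime.

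A secondary point: within a ``block'' (fixed $x,p,w,H$ and $D$ off $x$) the input is a coherent superposition over the values $z=D(x)$, and for a fixed output basis state $\ket{D_{x\la y}}$ all these amplitudes interfere, with coefficients $\frac{1-\omega^{pz}-\omega^{py}}{N}$. Your accounting (``each coefficient is $O(1/N)$, cross terms $O(1/N^{3/2})$'') does not control this coherent sum; the clean way, as in the paper's proof, is to bound $\abs{1-\omega^{pz}-\omega^{py}}\le 3$ and apply Cauchy--Schwarz to $\sum_z \alpha_z\frac{1-\omega^{pz}-\omega^{py}}{N}$, giving at most $\frac{9}{N}\sum_z\abs{\alpha_z}^2$ per output value $y$, and then summing over the at most $\gamma N$ values of $y$ that make the predicate true (this is where the constant $9$, and hence $1+9=10$ after combining the sampling and resampling cases via triangle plus Cauchy--Schwarz, comes from). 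With the removal term eliminated by monotonicity and this Cauchy--Schwarz step made explicit, your outline becomes the paper's proof.
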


The adaptation of the above lemma to the classical query case requires making one extra assumption stated in Equation~(\ref{Eq:predC2}) below. This condition rules out predicates that can become true by simply copying a value from the database to the history.

\begin{lemma}[Progress Overlap, Classical Query]
  \label{lem:classical_query_progress}
  Let $\predP$ be a history-database predicate, $t$ be an integer and $\gamma \in [0, 1]$ be a real parameter. Suppose that, for every false-state $(H,D) \in \predP^{-1}(\predF) \cap \alg$ where $D(x) = \bot$, the probability to make the predicate true by replacing $H(x)$ and $D(x)$ with the same random value $y$ is at most
    \begin{equation}
      \label{Eq:predC1}
      \Pr_{y \gets [N]}\bc*{\pt*{H_{x \la y},D_{x \la y}} \in \predP^{-1}(\predT)} \leq \gamma.
    \end{equation}
  Assume further that, for every false-state $(H,D) \in \predP^{-1}(\predF)$, the predicate does not become true when $(x,D(x))$ is appended to the history, i.e.
    \begin{equation}
      \label{Eq:predC2}
      (H,D) \in \predP^{-1}(\predF) \quad \Rightarrow \quad (H_{x \la D(x)},D) \in \predP^{-1}(\predF).
    \end{equation}
  Then, the classical progress overlap is at most $\Gamma_1(\pp,\kp)  \leq 2 \gamma$ for all~$\kp \in \alg$.
\end{lemma}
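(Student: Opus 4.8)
The plan is to mirror the proof of the quantum case (Lemma~\ref{lem:quantum_query_progress}), decomposing $\kp$ into its components along consistent basis states and tracking how the classical recording operator $\cc$ of Lemma~\ref{Lem:recordC} moves amplitude from $\pnp$ into $\pp$. Writing $\kpp := \cc \pnp \kp$, I want to bound $\norm{\pp \kpp}^2 \le 2\gamma \norm{\pnp\kp}^2$. Since $\cc$ acts as a control on $\ket{x,p,w}$, it suffices to work index-by-index: fix a basis value $x$ of the index register and a value $p$ of the phase register, and examine, for each consistent $(H,D)$ with $\predP(x,p,w,H,D)=\predF$, which terms in the image under $\cc$ can satisfy $\predP$. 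By the History-Invariant and Database-Monotone conditions together with the hypothesis \eqref{Eq:predC2}, the ``unchanged'' term $\omega^{pD(x)}\ket{H_{x\la D(x)},D}$ (present when $H(x)\neq\star$, or as the first summand when $H(x)=\star, D(x)\neq\bot$) stays false: in the first case this is immediate from \eqref{Eq:predC2}, and in the $H(x)=\star$ case the state $(x,p,w,H_{x\la D(x)},D)$ is exactly the one ruled out by \eqref{Eq:predC2}. Likewise the ``removed'' term $\ket{H_{x\la\bot},D_{x\la\bot}}$ can be handled since replacing $D(x)$ by $\bot$ only shrinks the database and setting $H(x)=\bot$ corresponds via History-Invariance to a function value we can compare against \eqref{Eq:predC2}; I expect that it too remains false, or contributes only a $\bot$-entry that Database-Monotone lets us ignore. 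Hence only the ``resampled'' superposition $\sum_{y\in[N]} \tfrac{\omega^{py}}{\sqrt N}\ket{H_{x\la y}, D_{x\la y}}$ (in the case $H(x)=\star$, $D(x)=\bot$, and also the tail of the case $H(x)=\star$, $D(x)\neq\bot$) can make the predicate true, and only at those indices $y$ for which $(x,p,w,H_{x\la y},D_{x\la y})\in\predP^{-1}(\predT)$; by hypothesis \eqref{Eq:predC1} there are at most $\gamma N$ such $y$.

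The core estimate is then: projecting the resampled term onto $\pp$ keeps only $\le \gamma N$ of the $N$ equally-weighted (amplitude $1/\sqrt N$) summands, so its squared norm is at most $\gamma N \cdot \tfrac1N = \gamma$ per contributing basis vector of $\pnp\kp$. Summing over the orthogonal family of basis states comprising $\pnp \kp$ — they remain orthogonal after applying $\cc$ because distinct $(x,H,D)$ map to states with disjoint index/history/database supports, up to the benign overlaps within a single $(x,p,w)$-block which I handle by the triangle inequality and Cauchy–Schwarz — and absorbing the factor $2$ to account for the two cases ($D(x)=\bot$ versus $D(x)\neq\bot$, the latter producing a resampled term with the slightly different coefficients $-\omega^{py}/N$ whose $\ell_2$-mass over any $\gamma N$ indices is $\le \gamma/N \cdot \gamma N \le \gamma$, even smaller), yields $\norm{\pp\cc\pnp\kp}^2 \le 2\gamma\norm{\pnp\kp}^2$, i.e. $\Gamma^C(\pp,\kp)\le 2\gamma$.

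The main obstacle I anticipate is the bookkeeping of orthogonality and cross-terms when several of the output branches of $\cc$ (unchanged, removed, resampled) land on overlapping databases after the projection — in particular the case $H(x)=\star$, $D(x)\neq\bot$, where the image is a genuine superposition of all three term-types and the resampled index $y=D(x)$ coincides with the unchanged term's database. The clean way around this is to note that the unchanged and removed terms are already shown to be in $\pnp$ by \eqref{Eq:predC2} and Database-Monotone, so $\pp$ annihilates them and only the pure resampled superposition survives projection; then the argument reduces exactly to counting $\gamma N$ good indices out of $N$. A secondary subtlety is that \eqref{Eq:predC1} is stated for false-states in $\alg$ with $D(x)=\bot$, whereas in the $D(x)\neq\bot$ case the relevant resampling replaces a non-$\bot$ value; here I invoke Database-Monotone to pass from $D$ to $D_{x\la\bot}$ (still a false-state, still consistent), apply \eqref{Eq:predC1} there, and then observe that $D_{x\la y}$ is the same whether we resample from $D$ or from $D_{x\la\bot}$, so the bound on the number of good $y$ transfers verbatim. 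Everything else is a routine triangle-inequality/Cauchy–Schwarz aggregation over the basis decomposition of $\kp$, carried out as in Lemma~\ref{lem:quantum_query_progress}.
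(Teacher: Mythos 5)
Your plan follows the same route as the paper's proof of \Cref{lem:classical_query_progress}: the case split on $(H(x),D(x))$, the unchanged branch killed via \eqref{Eq:predC2} (the paper handles the $H(x)\neq\star$ case through history invariance, but \eqref{Eq:predC2} applies directly as you say), the removed branch ruled out via database monotonicity, the sampling and resampling branches bounded through \eqref{Eq:predC1} after the reduction from $D$ to $D_{x\la\bot}$ that you spell out explicitly, and the factor $2$ coming from combining the $D(x)=\bot$ and $D(x)\neq\bot$ pieces (whose images overlap) by triangle inequality and Cauchy--Schwarz. So the architecture is right.

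There is, however, one step that as written would fail, and it is exactly where the technical content of the lemma lies. The claim that distinct basis states of $\pnp\kp$ ``remain orthogonal after applying $\cc$'' is false in the resampling case: all pre-query basis states $\ket{x,p,w,H,D_{x\la z}}$ for $z\in[N]$, which differ only in the value of the database at $x$, send their resampled branches onto the \emph{same} output basis states $\ket{x,p,w,H_{x\la y},D_{x\la y}}$. These overlaps are not benign, and your per-vector accounting (``squared norm at most $\gamma$ per contributing basis vector'', and for the $D(x)\neq\bot$ case ``$\ell_2$-mass \ldots even smaller'') is not a valid bound on the norm of a non-orthogonal sum; in the worst case (aligned amplitudes) the block contributes a factor $N$ more than that naive sum. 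The correct treatment, which is what the paper does, is to group amplitudes by output basis state: for fixed $x,p,w,H$, fixed values of $D$ outside $x$, and fixed resampled value $y$, the output amplitude is $\sum_{z\in[N]}\alpha_{x,p,w,H,D_{x\la z}}\,\omega^{py}/N$, and Cauchy--Schwarz gives $\abs[\big]{\sum_{z}\alpha_{x,p,w,H,D_{x\la z}}/N}^2\leq \frac1N\sum_{z}\abs{\alpha_{x,p,w,H,D_{x\la z}}}^2$; summing over the at most $\gamma N$ values of $y$ that can make $\predP$ true then yields a contribution $\gamma\sum_{z}\abs{\alpha_{x,p,w,H,D_{x\la z}}}^2$ for the whole block, i.e.\ $\gamma\norm{\Pi_2\kp}^2$ rather than the $\gamma/N$ per vector you estimate. (Equivalently, a triangle inequality over the $N$ colliding branches followed by Cauchy--Schwarz over $z$ gives the same bound; this is presumably what your hedge intends, but it must be carried out per colliding family rather than dismissed.) With that repaired, distinct blocks and the two cases combine exactly as you describe and $\Gamma^C(\pp,\kp)\leq 2\gamma$ follows.
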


Note that $\gamma$ will often depend on the maximum number $t$ of values contained in the database and in the history. Moreover, if \Cref{lem:quantum_query_progress,lem:classical_query_progress} hold with parameters~$\gamma_0$ and $\gamma_1$ respectively, then the progress interpolates as $\Gamma_b(\pp,\kp) \leq 10(1-b) \gamma_0 + 2b \gamma_1$.

Finally, we state some simple facts that will be used frequently throughout the paper.

\begin{fact}
  \label{Fact:inner_product_subsume}
    Let $\kp, \ket{\phi'}$ be two states defined over the registers $\mc{AHD}$. Let $U$ be a unitary operator over $\mc{A}$. Let $\Pi, \Pi'$ be two projectors over $\mc{AHD}$. Then,
    \begin{itemize}
        \item (Monotonicity) If $\Pi \preceq \Pi'$ then $\Pi \cdot \Pi' = \Pi' \cdot \Pi = \Pi$.
        \item (Commutativity) If $\Pi = \id_{\mc{A}} \otimes \Pi_{\mc{HD}}$ for some projector $\Pi_{\mc{HD}}$ then $\norm{\Pi U \kp} = \norm{\Pi \kp}$.
        \item (Sub-multiplicativity) $\norm{\Pi \kp} \leq \norm{\kp}$.
    \end{itemize}
\end{fact}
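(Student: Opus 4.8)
The plan is to prove the three items separately; each is a standard property of orthogonal projectors on a finite-dimensional Hilbert space, and no step presents a genuine difficulty.

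For \textbf{monotonicity}, I would first recall the well-known equivalence: for orthogonal projectors $\Pi,\Pi'$, the Loewner inequality $\Pi \preceq \Pi'$ is equivalent to the range inclusion $\mathrm{range}(\Pi) \subseteq \mathrm{range}(\Pi')$. The nontrivial direction is seen as follows: if $\Pi'\ket{v}=0$, then $0 = \bra{v}\Pi'\ket{v} \ge \bra{v}\Pi\ket{v} = \norm{\Pi\ket{v}}^2 \ge 0$, so $\Pi\ket{v}=0$; hence $\ker\Pi' \subseteq \ker\Pi$, and taking orthogonal complements (legitimate since $\Pi,\Pi'$ are self-adjoint) gives $\mathrm{range}(\Pi) \subseteq \mathrm{range}(\Pi')$. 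Given this inclusion, $\Pi'$ fixes every vector in $\mathrm{range}(\Pi)$, so $\Pi'\Pi\ket{v} = \Pi\ket{v}$ for all $\ket{v}$, i.e. $\Pi'\Pi = \Pi$; taking adjoints and using self-adjointness of both projectors yields $\Pi\Pi' = \Pi$ as well.

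For \textbf{commutativity}, I would write the unitary $U$ acting only on $\mc{A}$ as $U = U_{\mc{A}} \otimes \id_{\mc{HD}}$ and use the hypothesis $\Pi = \id_{\mc{A}} \otimes \Pi_{\mc{HD}}$. These operators act on complementary tensor factors, hence commute: $\Pi U = (\id_{\mc{A}} \otimes \Pi_{\mc{HD}})(U_{\mc{A}} \otimes \id_{\mc{HD}}) = U_{\mc{A}} \otimes \Pi_{\mc{HD}} = U\Pi$. Consequently $\norm{\Pi U \kp} = \norm{U \Pi \kp} = \norm{\Pi \kp}$, where the final equality holds because $U$ is unitary and therefore norm-preserving.

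For \textbf{sub-multiplicativity}, I would use the orthogonal decomposition $\kp = \Pi\kp + (\id-\Pi)\kp$. Since $\Pi\kp$ and $(\id-\Pi)\kp$ are orthogonal, the Pythagorean identity gives $\norm{\kp}^2 = \norm{\Pi\kp}^2 + \norm{(\id-\Pi)\kp}^2 \ge \norm{\Pi\kp}^2$, and taking square roots finishes the proof. The only place calling for a touch of care is the equivalence between the Loewner order and the range inclusion used in the first item; rather than invoke it as folklore I would spell it out as above.
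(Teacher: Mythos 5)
Your proof is correct: all three items are handled by the standard arguments (Loewner order giving range inclusion for monotonicity, commutation of operators on complementary tensor factors plus unitary invariance of the norm, and the Pythagorean decomposition). The paper states this fact without proof, treating it as routine, and your write-up supplies exactly the intended elementary details.
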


\section{Preimage Search}
\label{sec:hybrid_search}

In this section, we prove the lower bound for preimage search against hybrid algorithms.

\begin{theorem}
  \label{Thm:qc-search}
  The success probability of finding a zero preimage, in a uniformly random function $D:[M] \to [N]$, is at most
  \begin{itemize}
    \item \emph{(Model 1.)} $\bo[\big]{\frac{c + q^2}{N}}$ using $q$ quantum queries and $c$ classical queries,
    \item \emph{(Model 2.)} $\bo[\big]{\frac{t}{bN}}$ using $t$ queries to the hybrid oracle $\oqc$ where $1/t \leq b \leq 1$,
    \item \emph{(Model 3.)} $\bo[\big]{\frac{dt}{N}}$ using $t$ quantum queries with bounded-depth $1 \leq d \leq t$.
  \end{itemize}
\end{theorem}

The above inequalities are optimal. The proof proceeds as mentioned in the technical overview; we will define a notion of quantum and classical progress to keep track of the success probability of the algorithm after each query. To formally define these measures, we now give a series of predicates that characterize whether the history or the database contains a zero preimage:

\begin{definition}
  The following predicates evaluate a basis state $\ket{x,p,w,H,D}$ to $\predT$ if and only if it is history-database consistent (see \Cref{Def:consistent}) and satisfies the next conditions:
  \begin{itemize}
    \item \predQ: there exists a zero preimage in the quantum database $D$ that is not in the history $H$, i.e.~$x'$ such that $D(x') = 0$ and $H(x') = \star$.
    \item \predC: there exists a zero preimage in the classical history $H$, i.e.~$x'$ such that $H(x') = 0$. Note that for any history-database consistent basis state $(x',y) \in H$ implies $D(x')=y$, and thus if $\predC$ is true, then there exists~$x'$ such that $D(x') = H(x') = 0$.
    \item \predX\predQ: the predicate \predQ\ holds \emph{and} the query index $x$ is the only zero preimage in the quantum database $D$ that is not in the history $H$, i.e.~$D(x)=0$, $H(x) = \star$ and $H(x') = 0$ if~$D(x') = 0$ for all~$x' \neq x$.
    \item $\br{\predX}\predQ$: the predicate \predQ\ holds but not \predX\predQ\ (i.e., there exists~$x' \neq x$ such that $D(x') = 0$ and~$H(x') = \star$).
  \end{itemize}
\end{definition}

We shall also use negations, conjunctions and disjunctions of the above predicates.\\

To prove the lower bound, we first note that the squared norm $\norm{\pc \ket{\phi_t}}^2$ is an upper bound on the success probability of the algorithm after the last query since we can assume that the final output is always in the history register (by making one extra classical query at the end) and hence also in the database. We remark that because of the above, our hybrid compressed oracle framework avoids the need of using \cite[Lemma 5]{Zha19c} that is typically needed for proofs in the usual compressed oracle framework.

To keep track of the progress of the algorithm, we will need more fine-grained control and for this we keep track of the change in the quantities $\norm{\pc \ket{\phi_t}}$ and $\norm{\pqnc \ket{\phi_t}}$, which can be thought of as classical and (purely) quantum progress respectively. Initially, both quantities are equal to zero. Each time the algorithm makes a quantum ($b = 0$) or classical ($b = 1$) query, we show that the progress evolves as follows in terms of the quantity defined in \Cref{Def:progIncr}:
\begin{align*}
    \ \Delta_b(\Pi,\kp) &= \norm{\Pi \mc{R}_b \kp}^2 - \norm{\Pi \kp}^2.
\end{align*}

\begin{proposition}[Progress after a quantum query]\label{prop:quantum}
Given an integer $t$ and a state $\kp \in \alg$ with norm at most 1, the progress caused by one quantum query on $\kp$ are at most,
  \[\Delta_0(\pc,\kp)  = 0 \quad \text{ and }\quad
    \Delta_0(\pqnc,\kp) \le 2\sqrt{\frac{10}{N}}\norm{\pxqnc \kp} + \frac{10}{N}.\]
\end{proposition}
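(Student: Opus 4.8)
The plan is to prove the two bounds separately, both being essentially immediate consequences of the structural facts established earlier. First consider $\Delta^Q(\pc,\kp)$. The predicate $\predC$ depends only on the history register $\CH$ (it asks whether some $x'$ has $H(x') = 0$), so $\pc = \id_{\CA} \otimes \id_{\CD} \otimes \Pi_{\CH}$ for the corresponding projector on $\CH$. By the construction in \Cref{Sec:construction}, the quantum compressed oracle $\qq = S^\dagger \CO^Q S$ acts on $\CH$ only as a control and never modifies its contents — equivalently, one sees from \Cref{Lem:recordQ} that a quantum query leaves $\ket{H}$ untouched and only updates $\ket{D}$. Hence $\qq$ commutes with $\pc$, and therefore $\norm{\pc \qq \kp} = \norm{\pc \kp}$, giving $\Delta^Q(\pc,\kp) = 0$. (Strictly, since $\qq$ may act nontrivially on $\CA\CD$, one invokes that $\pc$ is supported entirely on $\CH$ and $\qq$ fixes $\CH$, so $[\pc,\qq]=0$.)

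For the second bound, I would apply \Cref{lem:quantum_query_progress} to the predicate $\predP = \predQ \cdot \br{\predC}$. The work is to verify its two hypotheses. The predicate $\predQ\cdot\br\predC$ is a history-database predicate in the sense of \Cref{Def:hdPred}: it only depends on $H$ and $D$; it is history-invariant since $\predQ$ and $\predC$ both depend only on the function computed by $H$ and not the ordering of entries; and it is database-monotone, since replacing a $\bot$ in $D$ by a value can only create (never destroy) a zero preimage in $D\setminus H$ — and it does not affect whether $H$ itself contains a zero, so $\br\predC$ is preserved as well. Then one checks the probability bound \eqref{Eq:predQ}: for a false-state $(x,p,w,H,D)$ with $D(x)=\bot$, replacing $D(x)$ by a uniformly random $y \in [N]$ makes $\predQ\cdot\br\predC$ true only if $y = 0$ (and $H(x)=\star$, which holds automatically when $D(x)=\bot$ by consistency and since the state is false for $\br\predC$ we don't gain a classical zero), which happens with probability exactly $1/N \le 1/N =: \gamma$. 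Feeding $\gamma = 1/N$ into \Cref{lem:quantum_query_progress} yields $\Gamma^Q(\pqnc,\kp) \le \sqrt{10/N}$.

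Finally I would convert the progress-overlap bound into the stated $\Delta^Q$ bound. Since the contents of $\CH$ are fixed by $\qq$ and $\predC$ is $\CH$-only, the subspaces $\operatorname{range}(\pc)$ and $\operatorname{range}(\pnc)$ are both $\qq$-invariant, so $\qq$ does not move any amplitude between $\pqnc$ and $\pc$; thus the only way $\pqnc$ can gain norm under a quantum query is from within $\pnc$, i.e.\ from the $\br{\predQ}\cdot\br{\predC}$ part of $\kp$. Concretely, writing $\kp = \pqnc\kp + (\pnc - \pqnc)\kp + \pc\kp$ and noting $\qq$ fixes $\pc\kp$, one gets $\norm{\pqnc \qq \kp} \le \norm{\pqnc \qq \pqnc \kp} + \norm{\pqnc \qq (\pnc-\pqnc)\kp} \le \norm{\pqnc\kp} + \Gamma^Q(\pqnc,\kp)\cdot\norm{(\pnc-\pqnc)\kp} \le \norm{\pqnc\kp} + \sqrt{10/N}$, using $\norm{\pqnc\qq\pqnc\kp}\le\norm{\pqnc\kp}$ and $\norm{(\pnc-\pqnc)\kp}\le\norm{\kp}\le 1$. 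Hence $\Delta^Q(\pqnc,\kp) \le \sqrt{10/N}$. I expect the main obstacle to be purely bookkeeping: carefully justifying that $\qq$ leaves $\CH$ invariant (hence the commutation/invariance claims) and that the predicate $\predQ\cdot\br\predC$ genuinely satisfies all three clauses of \Cref{Def:hdPred} so that \Cref{lem:quantum_query_progress} applies cleanly; the probability estimate itself is trivial.
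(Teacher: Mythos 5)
Your proof is correct and follows essentially the same route as the paper: the same observation that $\qq$ only uses $\CH$ as a control (giving $\Delta^Q(\pc,\kp)=0$ and killing the $\pc \to \pqnc$ cross term), the same decomposition along $\pqnc,\pnqnc,\pc$, and the same use of \Cref{lem:quantum_query_progress} with $\gamma = 1/N$. The only cosmetic difference is that you apply the lemma directly to the predicate $\predQ\cdot\br{\predC}$ (verifying it is a history-database predicate), whereas the paper bounds $\norm{\pqnc\qq\pnqnc\kp}\le\norm{\pq\qq\pnq\pnc\kp}$ and applies the lemma to $\predQ$ on the state $\pnc\kp$; both yield the same $\sqrt{10/N}$ bound.
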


\begin{proposition}[Progress after a classical query]\label{prop:classical}
  Given an integer $t$ and a state $\kp \in \alg$ with norm at most 1, the progress caused by one quantum query on $\kp$ are at most,
  \begin{align*}
    \Delta_1(\pc,\kp) \le  2\norm{\pxqnc \kp}^2 + \delta + \frac{4}{N} \quad \text{ and } \quad    \Delta_1(\pqnc,\kp) = - \norm{\pxqnc \kp}^2 - \delta
  \end{align*}
where $\delta = \norm{\pc \cc \pnxqnc \kp}^2$.
\end{proposition}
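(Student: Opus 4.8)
The plan is to study $\cc := \mc{R}^C$ acting on $\kp \in \alg$ by decomposing $\kp$ along the three mutually orthogonal projectors $\pc$, $\pqnc$, $\pnqnc$ — which partition the identity on consistent states by \Cref{Fact:pred} — and tracking where $\cc$ sends each piece. (The displayed statement writes $\ket{\phi_t}$ and ``quantum query'' where it means $\kp$ and a classical query via $\cc$; I state it accordingly.) The engine of the argument is a structural observation read off directly from \Cref{Lem:recordC}: on a \emph{consistent} basis state, \textbf{(i)} $\cc$ maps every $\predC$-true state to a superposition of $\predC$-true states, because a classical query only appends a pair to the history and never edits an existing entry, so $\pnc \cc \pc \kp = 0$ and hence $\pc \cc \pc \kp = \cc \pc \kp$; and \textbf{(ii)} $\cc$ maps every $\pnqnc$-true state (no zero preimage anywhere) to a superposition of states that are either $\pnqnc$-true or $\predC$-true, and never $\pqnc$-true, because the only way a zero enters the database is through the freshly (re)sampled value $y$, and when $y=0$ the history entry $(x,0)$ is recorded at the same time; hence $\pqnc \cc \pnqnc \kp = 0$. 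Verifying (ii) amounts to checking each of the (at most three) output terms in every case of \Cref{Lem:recordC}.

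For the second inequality I would combine (i) and (ii) to get $\pqnc \cc \kp = \pqnc \cc \pqnc \kp$ (the $\pc$-part of $\kp$ is killed since $\pqnc\pc = 0$ together with (i), and its $\pnqnc$-part by (ii)), so $\norm{\pqnc \cc \kp}^2 = \norm{\pqnc \cc \pqnc \kp}^2$. Then, writing the orthogonal decomposition $\cc\pqnc\kp = \pc\cc\pqnc\kp + \pqnc\cc\pqnc\kp + \pnqnc\cc\pqnc\kp$ and using unitarity of $\cc$ with Pythagoras, $\norm{\pqnc\kp}^2 = \norm{\cc\pqnc\kp}^2 = \norm{\pc\cc\pqnc\kp}^2 + \norm{\pqnc\cc\pqnc\kp}^2 + \norm{\pnqnc\cc\pqnc\kp}^2$. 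Dropping the last term gives $\Delta^C(\pqnc,\kp) = \norm{\pqnc\cc\kp}^2 - \norm{\pqnc\kp}^2 \le -\norm{\pc\cc\pqnc\kp}^2$, as claimed.

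For the first inequality I would again use (i) to write $\pc\cc\kp = \cc\pc\kp + \pc\cc\pqnc\kp + \pc\cc\pnqnc\kp$ and note that $\cc\pc\kp$ is orthogonal to each of the other two summands: since $\pc$ is a projector and $\cc$ unitary, $\langle \cc\pc\kp, \pc\cc\pqnc\kp\rangle = \langle \cc\pc\kp, \cc\pqnc\kp\rangle = \langle\pc\kp,\pqnc\kp\rangle = 0$, and likewise with $\pnqnc$. Pythagoras plus $\norm{a+b}^2 \le 2\norm a^2 + 2\norm b^2$ then yields $\norm{\pc\cc\kp}^2 \le \norm{\pc\kp}^2 + 2\norm{\pc\cc\pqnc\kp}^2 + 2\norm{\pc\cc\pnqnc\kp}^2$, so it remains only to show $\norm{\pc\cc\pnqnc\kp}^2 \le 2/N$. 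This last step is the one I expect to require the most care, and where the definitions were set up: one \emph{cannot} feed $\predC$ itself into \Cref{lem:classical_query_progress}, because $\predC$ becomes true simply by copying a database value into the history, violating hypothesis~\eqref{Eq:predC2}. Instead I would apply the lemma to the disjunction $\predP := \predQ + \predC$, for which $\pnqnc = \pnp$ and $\pc \preceq \pp$. One checks that $\predP$ is a history--database predicate (consistency, history-invariance and database-monotonicity are routine) and that for a false-state (no zero preimage in $D$ or $H$): replacing $H(x)$ and $D(x)$ by a random $y$ makes $\predP$ true only when $y = 0$, giving $\gamma = 1/N$ in \eqref{Eq:predC1}; and appending $(x,D(x))$ cannot create a zero since $D$ has none, giving \eqref{Eq:predC2}. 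Hence $\Gamma^C(\pp,\kp) \le 2/N$, so $\norm{\pc\cc\pnqnc\kp}^2 \le \norm{\pp\cc\pnp\kp}^2 = \Gamma^C(\pp,\kp)\,\norm{\pnp\kp}^2 \le 2/N$ using $\norm{\kp}\le 1$. Substituting back gives $\Delta^C(\pc,\kp) \le 2\norm{\pc\cc\pqnc\kp}^2 + 4/N$, completing the proof.
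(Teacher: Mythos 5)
Your proposal is correct and follows essentially the same route as the paper's proof: the same decomposition of $\kp$ along the orthogonal projectors $\pc,\pqnc,\pnqnc$, the same two structural facts read off from \Cref{Lem:recordC} (a classical query never destroys a zero preimage in the history, i.e.\ $\pnc\cc\pc=0$ on $\alg$, and never creates a database zero preimage outside the history from a state with none, i.e.\ $\pqnc\cc\pnqnc=0$), orthogonality of $\pc\cc\pc\kp$ to the rest followed by $\norm{\ket a+\ket b}^2\le 2\norm{\ket a}^2+2\norm{\ket b}^2$ for the first bound, and unitarity/Pythagoras for the second (the paper phrases that last step via $\pqorc=\pc+\pqnc$ and sub-multiplicativity, which is the same computation; likewise your unitarity argument for the orthogonality replaces the paper's direct support argument on the history register, with no real difference). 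The one genuine divergence is how you obtain $\norm{\pc\cc\pnqnc\kp}^2\le 2/N$: the paper invokes \Cref{lem:classical_query_progress} with the predicate $\predC$ itself, asserting that $\predC$ satisfies \eqref{Eq:predC2}; taken literally this fails (a false state with $D(x)=0$ and $x\notin H$ becomes true once $(x,D(x))$ is appended), and the paper's application is saved only because the input $\pnqnc\kp$ is supported on databases containing no zero preimage. You instead apply the lemma to the disjunction $\predQ+\predC$, which genuinely satisfies \eqref{Eq:predC1} with $\gamma=1/N$ and \eqref{Eq:predC2}, and then use $\pc\preceq\pqorc$ together with $\pnqnc=\Pi_{\br{\predQ+\predC}}$. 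This is a cleaner fix of exactly the subtlety you identified, and it mirrors how the paper itself handles the analogous failure of \eqref{Eq:predC2} for $\predH+\predC$ in the collision section (\Cref{Eq:phorcC}).
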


The first proposition follows from a similar (but refined) analysis of the preimage search in the compressed oracle framework of \cite{Zha19c}. The second proposition is different from the usual analysis in this framework as it shows that a classical query can also decrease the progress the algorithm has made. We shall give their proofs later. First we show how the above imply optimal lower bounds for preimage search. It suffices to prove the results in models 1 and 2 since the result in model 3 follows by that in model 2 and \Cref{Cor:reduction}.


\begin{theorem}
  The progress made by any algorithm after $t$ queries satisfies
  \begin{itemize}
    \item \emph{(Model 1.)} $\norm{\pc \ket{\phi_t}}^2 = \bo[\big]{\frac{c+q^2}{N}}$ if $q$ queries use the quantum oracle and $c$ queries use the classical oracle with $c+q = t$,
    \item \emph{(Model 2.)} $\norm{\pc \ket{\phi_t}}^2 = \bo[\big]{\frac{t}{bN}}$ if all queries use the hybrid oracle $\oqc$ for some $1/t \leq b \leq 1$.
  \end{itemize}
\end{theorem}

\begin{proof}
  It will be more convenient to keep track of the potential
  \[ \Psi_t := \norm{\pc \ket{\phi_t}}^2 + 3\norm{\pqnc \ket{\phi_t}}^2.\]
  Observe that $\norm{\pc \ket{\phi_t}}^2 \le \Psi_t$\footnote{In fact, since $\pc + \pqnc = \pqorc$ where the projectors in the sum are orthogonal, we also have that $\frac13 {\Psi_t} \le \norm{\pqorc \ket{\phi_t}}^2 \le \Psi_t$ but we do not use this fact.}.
  We claim that the following recurrence holds for the potential $\Psi_t$ if the $t$-th query is made to the oracle $\oqc$ with $b \in [0,1]$:
  \begin{equation}
    \label{eqn:recurrence}
    \Psi_t \le \Psi_{t-1} + \min\pt*{11\sqrt{\frac{\Psi_{t-1}}{N}},\ \frac{90}{bN}}+ \frac{30}{N}
   \end{equation}
  with the initial condition that $\Psi_0 = 0$. Recalling the definition of $\Delta_b(\Pi,\kp)$, it follows from the fact $\Psi_{t} = \norm{\pc\qc \kptt}^2 + 3\norm{\pqnc\qc \kptt}^2 = \Psi_{t-1} + \Delta_b(\pc,\kptt) + 3  \Delta_b(\phnc,\kptt)$ and \Cref{prop:quantum,prop:classical} that
  \begin{align*}
      \Psi_t
        & \le \Psi_{t-1} -b \norm{\pxqnc \kptt}^2 + 6(1-b)\sqrt{\frac{10}{N}}\norm{\pxqnc \kptt} - 2\delta b + \frac{30-26b}{N} \\
        & \leq \Psi_{t-1} - b \norm{\pxqnc \kptt}^2 + 6\sqrt{\frac{10}{N}}\norm{\pxqnc \kptt}  + \frac{30}{N}.
  \end{align*}
  We obtain \Cref{eqn:recurrence} by bounding the term $- b \norm{\pxqnc \kptt}^2 + 6\sqrt{\frac{10}{N}}\norm{\pxqnc \kptt}$ in two different ways: (1) using that $\norm{\pxqnc \kptt} \leq \norm{\pqnc \kptt} \leq \sqrt{\Psi_{t-1}/3}$, it is at most $11\sqrt{\Psi_{t-1}/N}$, (2) using that the polynomial $-b Z^2 + 6\sqrt{\frac{10}{N}} Z$ is maximized at $Z = \sqrt{\frac{90}{b^2N}}$ when $b > 0$, it is at most $\frac{90}{bN}$.

  If all queries use the same oracle $\oqc$, for some $b \neq 0$, then by \Cref{eqn:recurrence} we get
    \[\norm{\pc \ket{\phi_t}}^2 \le \Psi_t = \bo[\Big]{\dfrac{t}{bN}},\]
  which proves the second statement in the theorem.
  If instead each query is either to the quantum ($b = 0$) or classical ($b = 1$) oracle then the potential increases by at most $\Psi_t - \Psi_{t-1} \leq 11\sqrt{\Psi_{t-1}/N} + 30/N$ when making a quantum query and $\Psi_t - \Psi_{t-1} \leq 120/N$ when making a classical query by \Cref{eqn:recurrence}. We can assume that the last two inequalities are replaced with equalities as it can only increase the maximum possible value for $\Psi_t$. Observe in this case that the potential always increases by the same amount $120/N$ when making a classical query, whereas for quantum queries it is advantageous to first maximize the value of $\Psi_{t-1}$. Hence, for an algorithm making $c$ classical and $q$ quantum queries, the optimal strategy is to use the classical recurrence for the first $c$ steps and the quantum recurrence afterward. In this case, it follows that for $t=q+c$ queries, we have that
  \[\norm{\pc \ket{\phi_t}}^2 \le \Psi_t = \bo*{\dfrac{c+q^2}{N}}.\qedhere\]
\end{proof}

To complete the proof, we now prove \Cref{prop:quantum,prop:classical}.

\begin{proof}[Proof of \Cref{prop:quantum}]
  The first equality is due to the fact that a quantum query $\qq$ only uses the register $\CH$ as a control. Thus, for any basis state in the support of the projector $\pnc$, which does not contain a zero preimage in $H$ by definition, the state after applying~$\qq$ will still not contain a zero preimage in $H$ and thus be orthogonal to the support of $\pc$. On the other hand, a basis state in the support of $\pc$ contains a zero preimage in $H$ and remains in the support even after applying $\qq$. Since $\mathbb{I} = \pnc + \pc$ and the projectors in the summation are orthogonal, the statement $\norm{\pc \qq \ket{\phi}} = \norm{\pc \ket{\phi}}$ follows and hence $\Delta_0(\pc,\kp)=0$.

  To see the second inequality, we have that
    \begin{align}\label{eqn:preimage}
           \ \norm{\pqnc \qq \ket{\phi}}^2
             & = \norm{\pqnc \qq (\pc + \pqnc + \pnqnc)\ket{\phi}}^2 \notag \\
             & = \norm{\pqnc \qq (\pqnc + \pnqnc)\ket{\phi}}^2 \notag \\
             & = \norm{\pqnc \qq \pnxqnc\ket{\phi}}^2 + \norm{\pqnc \qq (\pxqnc + \pnqnc)\ket{\phi}}^2 \notag \\
             & \le \norm{\pnxqnc\ket{\phi}}^2 + \pt*{\norm{\pxqnc\ket{\phi}} + \norm{\pqnc \qq \pnqnc\ket{\phi}}}^2 \notag \\
             & = \norm{\pqnc\ket{\phi}}^2 + 2\norm{\pxqnc\ket{\phi}}\cdot \norm{\pqnc \qq \pnqnc\ket{\phi}} + \norm{\pqnc \qq \pnqnc\ket{\phi}}^2.
    \end{align}
  The second equality uses that $\pqnc \qq \pc = 0$ since any basis state in the support of $\pc$ will remain in the support of the same projector. This is because $\qq$ acts as a control on $\CH$ and there is already a zero preimage $x \in H$ before applying $\qq$. The third equality uses that $\pqnc \qq \pnxqnc\ket{\phi}$ is orthogonal to $\pqnc \qq (\pxqnc + \pnqnc)\ket{\phi}$ since the former is supported over basis states~$\ket{x}_{\mc{X}}\ket{D}_{\mc{D}}$ containing a zero preimage in $D$ not equal to $x$, whereas the latter can only contain the preimage $x$. The last two lines uses the triangle inequality and the fact that $\norm{\pnxqnc\ket{\phi}}^2 + \norm{\pxqnc\ket{\phi}}^2 = \norm{\pqnc\ket{\phi}}^2$.

  To bound the term $\norm{\pqnc \qq \pnqnc \ket{\phi}}$ in \eqref{eqn:preimage}, we use that $\pqnc = \pnc \cdot \pq$ and $\pnqnc = \pnq \cdot \pnc$ and thus, $\norm{\pqnc \qq \pnqnc \ket{\phi}} \le \norm{\pq \qq \pnq \pnc \ket{\phi}}$. Since $\predQ$ is a history-database predicate, we can apply \Cref{lem:quantum_query_progress} to bound the above by $\sqrt{\frac{10}{N}}\norm{ \pnc \ket{\phi}}$. Plugging this into \eqref{eqn:preimage} and rearranging, we get the desired inequality about $\Delta_0(\pqnc,\kp)$. \qedhere
\end{proof}

\begin{proof}[Proof of \Cref{prop:classical}]
  Towards proving the first inequality in the statement of the proposition, we have that
  \begin{align}\label{eqn:preimage2}
    \norm{\pc \cc \ket{\phi}}^2
      & = \norm{\pc \cc \pt*{\pc + \pqnc + \pnqnc} \ket{\phi}}^2 \notag \\
    \ & = \norm{\pc \cc \pc \kp}^2 + \norm{\pc \cc \pt*{\pqnc + \pnqnc} \kp}^2  \notag \\
    \ & = \norm{\pc \cc \pc \kp}^2 + \norm{\pc \cc \pnxqnc \ket{\phi}}^2  + \norm{\pc \cc \pt*{\pxqnc + \pnqnc} \kp}^2   \notag \\
    \ & \le \norm{\pc \ket{\phi}}^2 + \norm{\pc \cc \pnxqnc \kp}^2 + 2\norm{\pxqnc \kp}^2 + 2\norm{\pc \cc \pnqnc \kp}^2.
  \end{align}
  \begin{sloppypar}
  The second equality in the above sequence follows since $\pc \cc\pc \kp$ is orthogonal to $\pc \cc \pt*{\pqnc+\pnqnc} \kp$. This can be seen from the fact that the history register~$\mc{H}$ in $\pc \cc \pc \ket{\phi}$ is supported over basis states~$\ket{H}_{\mc{H}}$ where the first $c$ entries of $H$ contains a zero preimage. Therefore, it is orthogonal to $\pc \cc \pt*{\pqnc+\pnqnc} \kp$. Similarly, the third equality uses that $\pc \cc\pnxqnc \kp$ is orthogonal to $\pc \cc \pt*{\pxqnc+\pnqnc} \kp$ since the latter is supported over basis states~$\ket{x}_{\mc{X}}\ket{D}_{\mc{D}}$ where the only possible zero preimage in $D$ is $x$. The last inequality follows from \Cref{Fact:inner_product_subsume} and the fact that $\norm{\ket{a}+\ket{b}}^2 \le 2\norm{\ket{a}}^2+2\norm{\ket{b}}^2$ for any states $\ket{a}$ and~$\ket{b}$.
  \end{sloppypar}

  Since $\predQ+\predC$ is a history-database predicate satisfying Equation~(\ref{Eq:predC2}), we can use \Cref{lem:classical_query_progress} to bound the last term in \Cref{eqn:preimage2}. It gives us that for $\gamma = \frac1N$,
    \[\Gamma_1(\pqorc, \ket{\phi}) \le 2\gamma \implies \norm{\pqorc \cc \pnqnc \ket{\phi}}^2 \le 2\gamma \norm{\pnqnc \ket{\phi}}^2 \le \frac{2}{N}\]
  and $\norm{\pc \cc \pnqnc \kp}^2 \leq \norm{\pqorc \cc \pnqnc \ket{\phi}}^2$.
  Recalling \Cref{Def:progIncr}, we thus have shown that
    \[\Delta_1(\pc,\kp) \leq 2\norm{\pxqnc \ket{\phi_t}}^2 + \norm{\pc \cc \pnxqnc \kp}^2 + \frac{4}{N},\]
  proving the first inequality in the statement of the proposition.\\

  The second equality is relatively straightforward:
    \begin{align*}
     \norm{\pqnc \cc \ket{\phi}}^2
      & = \norm{\pqnc \cc \pt*{\pc + \pxqnc + \pnxqnc + \pnqnc} \ket{\phi}}^2  \\
      & = \norm{\pqnc \cc \pnxqnc \ket{\phi}}^2  \\
      & = \norm{\pnc \cc \pnxqnc \ket{\phi}}^2  \\
      & = \norm{\pnxqnc \ket{\phi}}^2 - \norm{\pc \cc \pnxqnc \ket{\phi}}^2  \\
      & = \norm{\pqnc \ket{\phi}}^2 - \norm{\pxqnc \ket{\phi}}^2 - \norm{\pc \cc \pnxqnc \ket{\phi}}^2,
    \end{align*}
  where the second line is true as $\pqnc \cc \pt*{\pc + \pxqnc + \pnqnc} = 0$. This holds since a classical query~$\cc$ can not remove a zero preimage from $H$ or lead to a zero preimage in the database that is not in the classical history as well. The third line follows since the state $\cc \pnxqnc \ket{\phi}$ must necessarily contain a zero preimage in the database (by the predicate $\br{\predX}\predQ$).
  The last two lines use that $\pc + \pnc = \id$ and $\pxqnc + \pnxqnc = \pqnc$ where the projectors in each sum are orthogonal.
\end{proof}

\section{Collision Finding}
\label{sec:hybrid_cols}

In this section, we prove our main theorem on hybrid collision-finding algorithms:

\begin{theorem}
  \label{Thm:qc-coll}
  The success probability of finding a colliding pair, in a uniformly random function $D:[M] \to [N]$, is at most
  \begin{itemize}
    \item \emph{(Model 1.)} $\bo[\big]{\frac{c^2 + c q^2 + q^3}{N}}$ using $q$ quantum queries and $c$ classical queries,
    \item \emph{(Model 2.)} $\bo[\big]{\frac{t^2}{bN}}$ using $t$ queries to the hybrid oracle $\oqc$ where $1/t \leq b \leq 1$,
    \item \emph{(Model 3.)} $\bo[\big]{\frac{dt^2}{N}}$ using $t$ quantum queries with bounded-depth $1 \leq d \leq t$.
  \end{itemize}
\end{theorem}

The section is organized as follows. The progress measures needed for the proof of the above theorem are introduced in \Cref{Sec:prog-coll}. The main part of the proof is contained in \Cref{Sec:main-coll}. It uses some auxiliary lemmas whose demonstrations are deferred to \Cref{Sec:samplLemCol,Sec:Ccoll}.


\subsection{Progress Measure}
\label{Sec:prog-coll}

We define three types of collision pairs that can be recorded by a hybrid compressed oracle.

\begin{definition}[Collision Type]
  Given a history-database consistent pair $(H,D)$, we say that it contains a collision if there exist two values $x_1 \neq x_2$ such that $D(x_1) = D(x_2) \neq \bot$. Additionally, if $x_1,x_2 \notin H$ the collision is said to be \emph{quantum}, if $x_1,x_2 \in H$ it is said to be \emph{classical} and if $x_1 \notin H$, $x_2 \in H$ it is said to be \emph{hybrid}.
\end{definition}

We now give a series of predicates that characterize what types of collisions have been recorded in a basis state. Later on, we will combine these predicates together to define the measures of progress needed in our proofs.

\begin{definition}
  The following predicates evaluate a basis state $\ket{x,p,w,H,D}$ to $\predT$ if and only if it is history-database consistent (see \Cref{Def:consistent}) and satisfies the next conditions:
    \begin{itemize}
      \item $\predQ$, $\textsc{h}$, $\textsc{c}$: there is respectively at least one quantum, one hybrid or one classical collision contained in $(H,D)$.
      \item $\predX\predQ$: the predicate $\predQ$ holds \emph{and} the query index $x$ is contained in every quantum collision.
      \item $\predX\predH$: the predicate $\predH$ holds \emph{and} the query index is contained in every hybrid collision \emph{and} the query index is not in the history.
      \item $\br{\predX}\predQ$ (resp. $\br{\predX}\predH$): the predicate $\predQ$ (resp. $\predH$) holds, but not $\predX\predQ$ (resp. $\predX\predH$).
    \end{itemize}
\end{definition}

Note that $\predX\predQ + \br{\predX}\predQ = \predQ$ and $\predX\predH + \br{\predX}\predH = \predH$. Furthermore, the predicate $\br{\predX}\predQ$ is equivalent to the existence of a quantum collision not containing the query index. The last four predicates are the only ones that depend on the value~$x$ contained in the index register. The other predicates depend only on the history-database~$(H,D)$.

We will combine the above predicates into the potential
  \[\Psi(\kp) = \norm{\pc \kp}^2 + 3 \norm{\phnc \kp}^2 + 7 \norm{\pqnhnc \kp}^2\]
that allows for bounding the probability $\norm{\pqorhorc \kp}^2 = \norm{\pc \kp}^2 + \norm{\phnc \kp}^2 + \norm{\pqnhnc \kp}^2$ of recording any type of collision.


\subsection{Main Result}
\label{Sec:main-coll}

We now turn to the proof of \Cref{Thm:qc-coll}, delaying auxiliary lemmas to later sections. First, it is simple to argue that, for a $t$-query algorithm computing a state $\kpt$ in the hybrid compressed oracle model, the probability $\norm{\pqorhorc \kpt}^2$ of recording any type of collision is an upper bound on the success probability.
Since a direct bound on this quantity is difficult to obtain, we instead analyze the three predicates $\predC$, $\predH \cdot \br{\predC}$, $\predQ \cdot \br{\predH} \cdot \br{\predC}$ separately, and later combine them into a bound on the potential~$\Psi(\kpt)$.

We first show that performing a quantum query incurs the following progress increases.

\begin{lemma}[Progress Measure, Quantum Query]\label{Lem:quProg}
  Given an integer $t$ and a state $\kp \in \alg$ with norm at most 1, the progress caused by one quantum query on $\kp$ are at most,
    \begin{align*}
        \Delta_0(\pc,\kp) & = 0 \, , \\
        \Delta_0(\phnc,\kp) & \leq 2\sqrt{\frac{10t}{N}}\norm{\pxhnc \kp} + \frac{10t}{N} \, , \\
        \Delta_0(\pqnhnc,\kp) & \leq \sqrt{\frac{8t}{N}} \norm{\pxhnc\kp} + 2\sqrt{\frac{20t}{N}} \norm{\pxqnhnc \kp} + \frac{20t}{N} \, .
    \end{align*}
\end{lemma}

Recall that, by \Cref{Def:progIncr}, the quantity $\Delta_0(\pp, \kp) \in [-1,1]$ for a predicate $\predP$ represents the progress increase $\Delta_0(\pp, \kp) = \norm{\pp \qq \kp}^2 - \norm{\pp \kp}^2$ when doing a quantum query. Hence, the first equality reflects the fact that a quantum query cannot create or destroy a classical collision. The second inequality is based on the observation that, when adding a random value to the database, the probability that it creates a hybrid collision is at most $t/N$ since it must collide with one of the at most $t$ values contained in the history. The third inequality is slightly more involved since it must also take into account the case of \emph{removing} a hybrid collision from the history-database.

We next look at the progress increase when the query is classical.

\begin{lemma}[Progress Measure, Classical Query]\label{Lem:clProg}
  Given an integer $t$ and a state $\kp \in \alg$ with norm at most 1, the progress caused by one classical query on $\kp$ are at most,
    \begin{align*}
      \Delta_1(\pc,\kp) & \leq 2 \norm{\pxhnc \kp}^2 + \delta_1 + \frac{4 t}{N} \, , \\
      \Delta_1(\phnc,\kp) & \leq - \norm{\pxhnc \kp}^2 + 2\norm{\pxqnhnc \kp}^2 - \delta_1 + 2\delta_2 + \frac{12t}{N} \, , \\
      \Delta_1(\pqnhnc,\kp) & \leq \sqrt{\frac{2t}{N}} \norm{\pxhnc\kp}  - \norm{\pxqnhnc \kp}^2 - \delta_2 \,
    \end{align*}
  where $\delta_1 = \norm{\pc \cc \pnxhnc \kp}^2$ and $\delta_2 = \norm{\phnc \cc \pnxqnhnc \kp}^2$.
\end{lemma}

The negative terms on the right-hand side represent the amount of progress transferred by one classical query between different progress measures. Note that, as a simple case, if an algorithm makes only classical queries then there can be no hybrid or quantum collision, hence $\norm{\pxhnc \kpt} = \norm{\pxqnhnc \kpt} = 0$ and the above inequalities simplify to $\Delta_1(\pc,\kpt) = \norm{\pc \cc \kpt}^2 - \norm{\pc \kpt}^2 \leq 4t/N$. Thus, we recover the birthday bound $\norm{\pc \kpt}^2 = \bo{t^2/N}$ after $t$ classical queries.

We now combine the two lemmas to bound the potential increase under applying the hybrid compressed oracle $\qc$.

\begin{proposition}
  \label{Prop:interpProg}
  Given an integer $t$ and a state $\kp \in \alg$ with norm at most 1, upon applying the hybrid compressed oracle $\qc$, the potential increases by at most
    \begin{equation}
      \Psi(\qc\kp) \leq \Psi\pt*{\kp} + \min\pt*{81\sqrt{\frac{t\cdot \Psi\pt*{\kp}}{N}},\ \frac{1641t}{bN}} + \frac{170t}{N}
    \end{equation}
  for all $b \in [0,1]$.
\end{proposition}

\begin{proof}
  We start by proving
  \begin{equation}
    \label{Eq:RecG}
    \Psi(\qc\kp) \leq \Psi\pt*{\kp} - b \norm{\pspec \kp}^2 + 81\sqrt{\frac{t}{N}} \norm{\pspec \kp} + \frac{170t}{N}
  \end{equation}
  for all $b \in [0,1]$. By combining \Cref{Lem:quProg,Lem:clProg} with the fact that
    \begin{align*}
      \Psi(\qc\kp) & = (1-b)\Psi(\qq\kp) + b\Psi(\cc\kp) \\
                   & = \Psi(\kp) + \Delta_b(\pc,\kp) + 3  \Delta_b(\phnc,\kp) + 7 \Delta_b(\pqnhnc,\kp) ,
    \end{align*}
    we have that
    \begin{align*}
    \Psi\pt*{\qc\kp} &\leq \Psi(\kp) - b (\norm{\pxhnc \kp}^2 + \norm{\pxqnhnc \kp}^2)\\
    \ &\hphantom{textttttttt}+ 40\sqrt{\frac{t}{N}} \norm{\pxhnc \kp} + 70\sqrt{\frac{t}{N}} \norm{\pxqnhnc \kp} + \frac{170t}{N}.
    \end{align*}
    \Cref{Eq:RecG} follows by observing that
    \begin{align*}
    \ \norm{\pxhnc \kp}^2 + \norm{\pxqnhnc \kp}^2 &= \norm{\pspec \kp}^2, \text{ and }\\
    \ 40\norm{\pxhnc \kp} + 70\norm{\pxqnhnc \kp} &\leq \sqrt{40^2+70^2} \norm{\pspec \kp},
    \end{align*}
    where the last inequality follows from Cauchy--Schwarz.

  Finally, the proposition is derived from \Cref{Eq:RecG} and the fact that
  \begin{align*}
   - b \norm{\pspec \kp}^2 &+ 81\sqrt{\frac{t}{N}} \norm{\pspec \kp}  \\
  \ &\hphantom{texttexttttt}\leq \min\set{81\sqrt{\frac{t}{N}} \norm{\pspec \kp}, \frac{1641t}{bN}} \\
  \ &\hphantom{texttexttttt} \leq \min\set{81\sqrt{\frac{t}{N} \cdot \Psi\pt*{\kp}}, \frac{1641 t}{bN}}
  \end{align*}
  since the polynomial $-b Z^2 + 81\sqrt{\frac{t}{N}} Z$ is maximized at $Z = 81\sqrt{\frac{t}{4b^2N}}$. \qedhere
\end{proof}

Finally, we can prove our main theorem by tuning the interpolation coefficient $b$.

\begin{proof}[Proof of \Cref{Thm:qc-coll}]
  We first consider the case of hybrid algorithms that only make classical or quantum queries (model 1). We want to upper bound the probability that an algorithm outputs a collision pair after $t = c + q$ queries, of which~$c$ are classical and~$q$ are quantum. Fix any such algorithm and let $\kpt$ denote its state as defined in \Cref{Eq:comp}.
  We can always assume, at the cost of doing two extra classical queries, that the output is contained in the history register. Hence, the success probability of the algorithm is upper bounded by the probability $\norm{\pc \ket{\phi_t}}^2$ of having recorded a classical collision. We now prove the upper bound $\norm{\pc \ket{\phi_t}}^2 = O((c^2 + c q^2 + q^3)/N)$ that matches our theorem. For that, we consider the potential after $t$ queries defined as
    \[\Psi_t := \norm{\pc \kpt}^2 + 3 \norm{\phnc \kpt}^2 + 7 \norm{\pqnhnc \kpt}^2.\]
  Our proof is by induction on $t$. Initially, $\Psi_0 = 0$ since the history and database registers of $\ket{\phi_0}$ are empty by definition. By \Cref{Prop:interpProg}, at each query, the potential increases by at most,
  \begin{equation*}
    \Psi_{t} \leq
      \begin{cases}
         \pt[\Big]{\sqrt{\Psi_{t-1}} + 41\sqrt{\frac{t-1}{N}}}^2 & \text{ if the $t$-th query is quantum $(b = 0),$}\\[2mm]
         \Psi_{t-1} + \frac{1811(t-1)}{N} & \text{ if the $t$-th query is classical $(b = 1)$}.
       \end{cases}
  \end{equation*}
  The maximum increase permitted by the above two inequalities is achieved when all the classical queries are performed first. Thus, we conclude that
    \[\Psi_{c+q} = \bo[\bigg]{c \cdot \frac{c+q}{N} + q^2 \cdot \frac{c+q}{N}} = \bo*{\frac{c^2+cq^2 + q^3}{N}}.\]

  We now study the case of algorithms that make $t$ queries to the same hybrid oracle $\oqc$ where~$b > 0$ (model 2). By using the same definition of $\Psi_t$ as above, together with \Cref{Prop:interpProg}, we obtain that
    \[\Psi_t = \bo*{\frac{t^2}{bN}}\]
  since each query increases the potential by at most $\bo{t/(bN)}$.

  Finally, the case of bounded-depth algorithms (model 3) follows by the result in model 2 and \Cref{Cor:reduction}.
\end{proof}


\subsection{Progress Overlap Lemmas}
\label{Sec:samplLemCol}

In this section, we prove several simple lemmas that upper bound the progress overlap when making one classical or quantum query. Roughly speaking, these quantities correspond to the probability of recording new collisions in the history-database register when a new coordinate of the input is revealed by a query.

We first give a central fact that will be used throughout the next sections. It describes certain subspaces that remain orthogonal after applying one (classical or quantum) query to them.

\begin{fact}
  \label{Fact:ortho}
  The following linear maps are equal to zero over the subspace $\alg$ of consistent states:
    \[\pnc \qq \pc,\ \pc \qq \pnc,\ \pqh \qq \pnqnh,\ \pnh \qq \pnxh\]
  and
    \[\pnc \cc \pc,\ \pq \cc \pnq,\ \pnq \cc \pnxq,\ \pnh \cc \pnxh\, .\]
  For any states $\ket{\phi_1},\ket{\phi_2} \in \alg$ and basis-state predicate $\predP$, the following vectors are orthogonal:
    \[\pp \qc\pnxq \ket{\phi_1} \perp \qc(\pnq+\pxq) \ket{\phi_2} \quad \text{and} \quad
      \pp  \qc\pnxh \ket{\phi_1} \perp \qc(\pnh+\pxh) \ket{\phi_2}.\]
 for $b \in \rn$.
\end{fact}

\begin{proof}
  The statement follows by simple applications of \Cref{Lem:recordQ,Lem:recordC}.

  We detail the proof of the equality $\pq \cc \pnq = 0$. Consider any basis state $\ket{x,p,w,H,D} \in \supp{\pnq}$. By \Cref{Lem:recordC}, every history-database $(H',D')$ contained in the support of the post-query state $\cc \ket{x,p,w,H,D}$ must be identical to $(H,D)$ except possibly on the value $x$. Furthermore, since~$x$ must be in the history after the classical query (i.e. $H'(x) \neq \star$) it cannot contribute to any quantum collision in $(H',D')$. Thus, no quantum collision can be contained in~$(H',D')$.

  We sketch the proof of $\pp \cc \pnxh \ket{\phi_1} \perp \cc \pnh \ket{\phi_2}$. Every basis state in the support of $\cc \pnxh \ket{\phi_1}$ has a hybrid collision that does not contain the query index. On the other hand, none of the basis states in the support of $\cc \pnh \ket{\phi_2}$ satisfy this property since the only possible hybrid collisions must contain the index on which $\cc$ is queried. Hence, $\cc \pnxh \ket{\phi_1} \perp \cc \pnh \ket{\phi_2}$. Finally, applying $\pp$ does not change the orthogonality property since it can only remove basis states from the support of these states.
\end{proof}

We now analyze the effect of quantum and classical queries on the progress overlaps $\Gamma_0(\Pi,\kp), \Gamma_1(\Pi,\kp) \in [0,1]$ for different projectors $\Pi$. Recall that, by \Cref{Def:progIncr}, these numbers give the relative amplitude that moves from the support of $\id - \Pi$ to the support of~$\Pi$ after making a query, i.e. $\Gamma_b(\Pi,\kp) = \norm{\Pi \qc (\id-\Pi) \kp}^2 / \norm{(\id - \Pi) \kp}^2$. Notice that \Cref{Fact:ortho} already shows that $\Gamma_0(\pc,\kp) = \Gamma_0(\pnc,\kp) = \Gamma_1(\pnc,\kp) = 0$.

\begin{lemma}
  \label{Lem:sampl}
  Given an integer $t$ and a state $\kp \in \alg$, the progress overlap caused by one quantum query on $\kp$ are at most,
  \begin{align*}
    \Gamma_0(\pq,\kp) & \leq \frac{10 t}{N}\, , \Label{Eq:pqQ} & \Gamma_0(\pqorh,\kp) & \leq \frac{10t}{N}\, , \Label{Eq:pqorhQ} \\
    \Gamma_0(\ph,\kp) & \leq \frac{10 t}{N}\, , \Label{Eq:phQ} & & &&
  \end{align*}
  and the progress overlap caused by one classical query on $\kp$ are,
  \begin{align*}
    \Gamma_1(\pq,\kp) & = 0\, , \Label{Eq:pqC} & \Gamma_1(\pqorh,\kp) & \leq \frac{2t}{N}\, . \Label{Eq:pqorhC}
  \end{align*}
\end{lemma}

\begin{proof}
  The inequalities for quantum queries follow from \Cref{lem:quantum_query_progress} as $\predQ, \predQ+\predH$ and $\predH$ are history-database predicates (\Cref{Def:hdPred}) with the $\gamma$ parameters being $t/N$. Similarly, for classical queries, the two inequalities follow from \Cref{lem:classical_query_progress} with the $\gamma$ parameters being~$0$ and~$t/N$ respectively.
\end{proof}

Finally, we give four inequalities that do not follow from \Cref{lem:quantum_query_progress,lem:classical_query_progress}. \Cref{Eq:pnhorcQ,Eq:pnhorcC} below upper bound the progress made towards \emph{removing} all hybrid and classical collisions from the history-database, which is not a database monotone property (see~\Cref{Def:hdPred}). The purpose of \Cref{Eq:pxhC} is to upper bound the probability that a classical query transfers the query index $x$ from one hybrid collision to a \emph{different} hybrid collision. Finally, \Cref{Eq:phorcC} overcomes the fact that the predicate $\predH+\predC$ does not satisfy the condition stated in~\Cref{Eq:predC2}.

\begin{lemma}
  \label{Lem:samplH}
  Given an integer $t$ and a state $\kp \in \alg$, we have
  \begin{align*}
    \Gamma_0(\pnhorc,\kp) & \leq \frac{10t}{N}\, , \Label{Eq:pnhorcQ} & \norm{\ph \cc \pxh \kp}^2 & \leq \frac{t}{N} \cdot \norm{\pxh \kp}^2\, , \Label{Eq:pxhC} \\
    \Gamma_1(\pnhorc,\kp) & \leq \frac{2t}{N}\, , \Label{Eq:pnhorcC} & \norm{\pc \cc \pnhorc \kp}^2 & \leq \frac{2 t}{N} \cdot \norm{\pnhorc \kp}^2\, . \Label{Eq:phorcC}
  \end{align*}
\end{lemma}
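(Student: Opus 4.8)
The plan is to prove all four inequalities by the same two--step template. In step one I use \Cref{Fact:ortho} (together with \Cref{Fact:inner_product_subsume} and \Cref{Lem:recordQ,Lem:recordC}) to peel off the part of the input state on which the query operator cannot possibly move amplitude in the relevant direction, reducing each quantity to the action of a single query operator $\mc R\in\{\qq,\cc\}$ on a small, explicitly described set of basis states. In step two I bound that action by partitioning the basis states according to the query coordinate: since $\qq$ and $\cc$ act as the identity on $\CA$ and leave every database coordinate other than the queried $x$ untouched (\Cref{Lem:recordQ,Lem:recordC}), two basis states with different tuples $(x,p,w,H,D|_{[M]\setminus\{x\}})$ have orthogonal post--query images, so the squared norm splits as a sum over ``groups'' sharing such a tuple. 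Inside a group the members differ only in $D(x)$, and a triangle inequality followed by Cauchy--Schwarz bounds the group's contribution by $(\text{group size})\times(\text{per--state mass})$ times the group's squared weight.

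\textbf{The reductions.} For \Cref{Eq:pnhorcQ} write $\Gamma^Q(\pnhorc,\kp)=\norm{\pnhorc\qq\phorc\kp}/\norm{\phorc\kp}$ and use $\phorc=\pc+\phnc$: since $\pnhorc\preceq\pnc$ and $\pnc\qq\pc=0$ we get $\pnhorc\qq\pc=0$, and since $\pnhorc\preceq\pnh$ and $\pnh\qq\pnxh=0$ we get $\pnhorc\qq\pnxh\phnc=0$; hence $\pnhorc\qq\phorc\kp=\pnhorc\qq\pxh\phnc\kp$. The same manipulation, using $\pnc\cc\pc=0$ (from \Cref{Fact:ortho}) and $\pnh\cc\pnxh=0$ (which follows from \Cref{Lem:recordC} exactly as the corresponding identities in \Cref{Fact:ortho}), gives $\pnhorc\cc\phorc\kp=\pnhorc\cc\pxh\phnc\kp$ for \Cref{Eq:pnhorcC}. \Cref{Eq:pxhC} is applied directly to $\pxh\kp$ and \Cref{Eq:phorcC} directly to $\pnhorc\kp$, with no reduction needed.

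\textbf{Per--state estimates and grouping.} Fix a basis state $s=\ket{x,p,w,H,D}$ in the relevant set and use \Cref{Lem:recordQ} or \Cref{Lem:recordC} to write $\mc R\ket s$ as an unchanged term, a removed term of weight $1/\sqrt N$, and $N$ resampled terms of weight $1/\sqrt N$ or $1/N$. For \Cref{Eq:pnhorcQ,Eq:pnhorcC} ($s$ satisfies $\predX\cdot\predH\cdot\br\predC$, so $D(x)\neq\bot$): $\pnhorc$ kills the unchanged term (it still contains the hybrid collision) and keeps the removed term (as $x$ lies in every hybrid collision, deleting $D(x)$ destroys them all); of the resampled terms only the $\le c$ whose value coincides with a history value fail to land in $\supp\pnhorc$, giving $\norm{\pnhorc\qq\ket s}^2\le 1/N+9/N=10/N$ and $\norm{\pnhorc\cc\ket s}^2\le 1/N+1/N=2/N$. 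For \Cref{Eq:pxhC} the crucial point is that $\predX$ forbids any $x'\notin H$ with $D(x')=D(x)$ (such an $x'$ would give a hybrid collision $(x',x_2)$ avoiding $x$, where $(x,x_2)$ is a hybrid collision of $s$), so after the classical query the unchanged and removed terms carry \emph{no} hybrid collision and $\ph$ kills them, while only the $\le q$ resampled values hitting a non--$\bot$ database entry re--create one, whence $\norm{\ph\cc\ket s}^2\le (q-1)/N^2$. For \Cref{Eq:phorcC}, if $s$ satisfies $\br{\predH+\predC}$ then the unchanged term creates no new \emph{classical} collision (that would force a pre--query hybrid collision on $x$), so $\pc\cc\ket s$ picks up only resampled values equal to a history value, giving $\norm{\pc\cc\ket s}^2\le c/N$ when $D(x)=\bot$ (weight $1/\sqrt N$) and $\le c/N^2$ when $D(x)\neq\bot$ (weight $1/N$). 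Now group the states by $(x,p,w,H,D|_{[M]\setminus\{x\}})$; across groups the post--query images are orthogonal, and inside a group the size is at most $c$ whenever $E:=D|_{[M]\setminus\{x\}}$ has no hybrid collision and therefore $D(x)$ must equal one of the (at most $c$) history values to create one --- this holds in \Cref{Eq:pnhorcQ,Eq:pnhorcC,Eq:pxhC} --- and at most $N$ in the $D(x)\neq\bot$ part of \Cref{Eq:phorcC}, while the $D(x)=\bot$ states of \Cref{Eq:phorcC} already lie in distinct groups. Combining the per--state bound with $\sqrt{\text{group size}}$ and summing over the orthogonal groups yields $\norm{\pnhorc\qq\phorc\kp}^2\le\tfrac{10c}{N}\norm{\phorc\kp}^2$, $\norm{\pnhorc\cc\phorc\kp}^2\le\tfrac{2c}{N}\norm{\phorc\kp}^2$, $\norm{\ph\cc\pxh\kp}^2\le\tfrac{c(q-1)}{N^2}\norm{\pxh\kp}^2\le\tfrac qN\norm{\pxh\kp}^2$ (using $c\le N$, which is without loss of generality since the collision theorem is otherwise vacuous), and $\norm{\pc\cc\pnhorc\kp}^2\le\tfrac{2c}{N}\norm{\pnhorc\kp}^2$, i.e.\ the four claimed inequalities.

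\textbf{Main obstacle.} The reason none of these follow from \Cref{lem:quantum_query_progress,lem:classical_query_progress} is the loss of orthogonality between the post--query images of distinct basis states: those lemmas only constrain \emph{false} states with $D(x)=\bot$, for which distinct states necessarily differ off $x$ and hence have orthogonal images, so no multiplicative penalty appears. Here the states that matter have $D(x)\neq\bot$ --- because they already contain a hybrid collision through the query coordinate, or because we track the removal of such a collision --- and many of them collapse onto the same post--query basis states. The technical heart of the argument is therefore (i) identifying the invariant $(x,p,w,H,D|_{[M]\setminus\{x\}})$ that simultaneously controls the within--group coincidences and the across--group orthogonality, and (ii) checking that within a group the multiplicity is only $O(c)$ --- at most the history size --- exactly in the inequalities where a factor $c$ appears, the large $O(N)$ multiplicity being harmless in \Cref{Eq:pxhC,Eq:phorcC} because the per--state mass there is only $O(1/N^2)$.
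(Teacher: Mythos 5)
Your proposal is correct and takes essentially the same route as the paper's proof: both reduce via the orthogonality facts to the action of a single query on $\pxh$-type (resp.\ $\pnhorc$-type) states, expand with \Cref{Lem:recordQ,Lem:recordC}, use orthogonality of post-query images across basis states that agree outside the queried coordinate, and apply Cauchy--Schwarz with the multiplicity bound of at most $c$ admissible values of $D(x)$ (and at most $q$ colliding resampled values), which is exactly the paper's per-post-state amplitude computation organized slightly differently. The only caveat is \Cref{Eq:pxhC}, where your intermediate bound $c(q-1)/N^2$ needs $c\le N$ to yield $q/N$; this is harmless since the within-group multiplicity is trivially at most $N$ (the paper's Cauchy--Schwarz over all $N$ values of $D(x)$ gives $q/N$ unconditionally).
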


The proofs of these equations use similar ideas to those of \Cref{lem:quantum_query_progress,lem:classical_query_progress}. They are deferred to \Cref{app:collision}.


\subsection{Progress Increase Lemmas}
\label{Sec:Ccoll}

In this section, we analyze the progress measures for: (1) finding a classical collision, (2) finding a hybrid collision but no classical ones and (3) finding quantum collisions only. We start with the case of quantum queries.

\begin{rlemma}[\Cref{Lem:quProg} \normalfont{(Progress Measure, Quantum Query)}]
  Given an integer $t$ and a state $\kp \in \alg$ with norm at most 1, the progress caused by one quantum query on $\kp$ satisfies
    \begin{align*}
        \Delta_0(\pc,\kp) & = 0 \, , \Label{Eq:DpcQ} \\
        \Delta_0(\phnc,\kp) & \leq 2\sqrt{\frac{10t}{N}}\norm{\pxhnc \kp} + \frac{10t}{N} \, , \Label{Eq:DphncQ} \\
        \Delta_0(\pqnhnc,\kp) & \leq \sqrt{\frac{8t}{N}} \norm{\pxhnc\kp} + 2\sqrt{\frac{20t}{N}} \norm{\pxqnhnc \kp} + \frac{20t}{N} \, . \Label{Eq:DpqnhncQ}
    \end{align*}
\end{rlemma}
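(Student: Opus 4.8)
The plan is to bound each of the three quantities $\Delta^Q(\pc,\kp)$, $\Delta^Q(\phnc,\kp)$, $\Delta^Q(\pqnhnc,\kp)$ by reducing them, via the triangle inequality and \Cref{Fact:ortho}, to the progress-overlap estimates of \Cref{Lem:samplQ,Lem:samplH}. For \Cref{Eq:DpcQ}, a quantum query acts as the identity on the history register (the quantum compressed oracle $\qq$ uses $\CH$ only as a control and never modifies it), so by the commutativity item of \Cref{Fact:inner_product_subsume}, or more directly by \Cref{Fact:ortho} ($\pnc\qq\pc = \pc\qq\pnc = 0$ on $\alg$), we get $\norm{\pc\qq\kp} = \norm{\pc\kp}$, hence $\Delta^Q(\pc,\kp) = 0$.

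For \Cref{Eq:DphncQ}: write $\norm{\phnc\qq\kp} = \norm{\phnc\qq(\phnc + \pnhnc)\kp} \le \norm{\phnc\qq\phnc\kp} + \norm{\phnc\qq\pnhnc\kp}$. The first term is $\le \norm{\phnc\kp}$ by \Cref{Fact:inner_product_subsume}. The second term is $\Gamma^Q$-type; since $\predH\cdot\br{\predC}$ is not itself the cleanest predicate to feed into \Cref{lem:quantum_query_progress}, I would instead use $\phnc \preceq \ph$ to write $\norm{\phnc\qq\pnhnc\kp} \le \norm{\ph\qq\pnhnc\kp}$, and then split $\pnhnc = \pnh + \phnc$ (on $\alg$, $\br{\predH\cdot\br{\predC}} = \br{\predH} + \predH\cdot\predC$, and $\predH\cdot\predC \Rightarrow \predH$... let me be careful) — actually the clean route is: a quantum query cannot create a classical collision, so starting from $\pnhnc\kp$ the post-query state restricted to $\ph$ is still restricted to $\br{\predC}$; thus $\norm{\ph\qq\pnhnc\kp}$ only picks up amplitude via \emph{creating} a hybrid collision, and this is governed by $\Gamma^Q(\ph,\cdot)$ applied to the $\pnh$-component of $\pnhnc\kp$. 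Using $\Gamma^Q(\ph,\kp) \le \sqrt{10c/N}$ from \eqref{Eq:phQ} and $\norm{\pnh\pnhnc\kp}\le\norm{\kp}\le 1$ gives the bound $\sqrt{10c/N}$. The main subtlety here — and the step I expect to require the most care — is verifying that the ``$\br{\predC}$ is preserved by $\qq$'' observation lets us legitimately discard the $\phnc$-into-$\phnc$ contamination and reduce cleanly to the hybrid-creation overlap; this is exactly the kind of bookkeeping done in the analogous search-problem proof (\Cref{prop:quantum}).

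For \Cref{Eq:DpqnhncQ}: decompose $\norm{\pqnhnc\qq\kp} \le \norm{\pqnhnc\qq\pqnhnc\kp} + \norm{\pqnhnc\qq\pnqnhnc\kp}$, with the first term $\le\norm{\pqnhnc\kp}$. For the second, the amplitude can enter $\predQ\cdot\br{\predH}\cdot\br{\predC}$ from outside in two ways: either a new quantum collision is created (governed by $\Gamma^Q(\pq,\cdot)\le\sqrt{10q/N}$ from \eqref{Eq:pqQ}), or — and this is why the ``$20(c+q)$'' appears rather than just ``$q$'' — a pre-existing \emph{hybrid} collision is \emph{destroyed} by the query (removing an entry from the database via the ``removed term'' of \Cref{Lem:recordQ}), turning a $\predH$-state into a $\br{\predH}$-state that may still carry a quantum collision. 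This removal-of-hybrid-collisions event is precisely what \eqref{Eq:pnhorcQ} bounds, namely $\Gamma^Q(\pnhorc,\kp)\le\sqrt{10c/N}$. So I would split $\pnqnhnc\kp$ according to whether $\predH + \predC$ holds, bound the ``create a quantum collision'' part by $\sqrt{10q/N}$ and the ``destroy all hybrid/classical collisions'' part by $\sqrt{10c/N}$, combine via triangle inequality, and use $\sqrt{10q/N}+\sqrt{10c/N}\le\sqrt{20(c+q)/N}$. The only genuine obstacle is organizing the case analysis on $\pnqnhnc\kp$ so that the two overlap lemmas \eqref{Eq:pqQ} and \eqref{Eq:pnhorcQ} apply to orthogonal (or at least disjointly-supported) pieces whose norms sum to at most $\norm{\kp}\le 1$; everything else is a routine triangle-inequality unrolling exactly in the style of the proof of \Cref{Lem:QuProgQH}.
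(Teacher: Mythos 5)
Your plan is correct and takes essentially the same route as the paper: expand $\qq\kp$ with a triangle inequality, discard the $\predC$-contaminated pieces via \Cref{Fact:ortho}, and bound the cross terms by the overlap estimates $\Gamma^Q(\ph,\cdot)\leq\sqrt{10c/N}$, $\Gamma^Q(\pq,\cdot)\leq\sqrt{10q/N}$ and $\Gamma^Q(\pnhorc,\cdot)\leq\sqrt{10c/N}$ from \Cref{Lem:samplQ,Lem:samplH}, finishing with $\sqrt{10c/N}+\sqrt{10q/N}\leq\sqrt{20(c+q)/N}$. The only caveat is notational: in your third bound the symbol ``$\pnqnhnc$'' must be read as the orthogonal complement $\id-\pqnhnc$ (which you then correctly split into $\phorc$ and the genuine conjunction-of-negations $\Pi_{\br{\predQ}\cdot\br{\predH}\cdot\br{\predC}}$), not as the paper's projector onto $\br{\predQ}\cdot\br{\predH}\cdot\br{\predC}$, since otherwise the hybrid-collision states would be dropped from the decomposition without justification.
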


\begin{proof}[Proof of \Cref{Eq:DpcQ}]
  We have $\norm{\pc \qq \kp}^2 = \norm{\pc \qq \pc\kp}^2 = \norm{\pc\kp}^2$ since $\id = \pc + \pnc$ and $\pc \qq \pnc = \pnc \qq \pc = 0$ by \Cref{Fact:ortho}.
\end{proof}

\begin{proof}[Proof of \Cref{Eq:DphncQ}]
  We use the decomposition $\id = \pnxhnc + \pxhnc + \pnhnc + \pc$. By \Cref{Fact:ortho}, $\phnc \qq \pc = 0$ and the states $\phnc \qq \pnxhnc\kp = \phnc \qq \pnxh \pnc\kp$ and $\phnc \qq (\pxhnc + \pnhnc)\kp = \phnc \qq (\pxh + \pnh)\pnc\kp$ are orthogonal. Therefore,
  \begin{align*}
      \norm{\phnc \qq \kp}^2
        & = \norm{\phnc \qq (\pnxhnc + \pxhnc + \pnhnc + \pc)\kp}^2 \\
        & = \norm{\phnc \qq \pnxhnc\kp}^2 + \norm{\phnc \qq (\pxhnc + \pnhnc)\kp}^2 \\
        & \leq \norm{\pnxhnc\kp}^2 + \pt*{\norm{\pxhnc\kp} + \norm{\ph \qq \pnh\kp}}^2 \\
        & = \norm{\phnc \kp}^2 + 2 \norm{\pxhnc\kp} \cdot \norm{\ph \qq \pnh\kp} + \norm{\ph \qq \pnh\kp}^2
  \end{align*}
  where the third line uses the triangle inequality, and the last line uses that $\norm{\phnc \kp}^2 = \norm{\pnxhnc\kp}^2 + \norm{\pxhnc\kp}^2$. Finally, $\norm{\ph \qq \pnh\kp}^2 \leq 10t/N$ by \Cref{Eq:phQ}.
\end{proof}

\begin{proof}[Proof of \Cref{Eq:DpqnhncQ}]
  We use the decomposition $\id = \pnqnhnc + \pnxqnhnc + \pxqnhnc + \pxhnc + \pnxhnc + \pc$. By \Cref{Fact:ortho}, $\pqnhnc \qq (\pnxhnc + \pc) = 0$ and the states $\pqnhnc \qq \pnxqnhnc\kp$ and $\pqnhnc \qq (\pnqnhnc + \pxqnhnc)\kp$ are orthogonal. Therefore,
  \begin{align*}
      & \norm{\pqnhnc \qq \kp}^2 \\
        & \quad = \norm{\pqnhnc \qq (\pnqnhnc + \pnxqnhnc + \pxqnhnc + \pxhnc + \pnxhnc + \pc)\kp}^2 \\
        & \quad \leq \norm{\pqnhnc \qq \pnxqnhnc\kp}^2 + \norm{\pqnhnc \qq (\pnqnhnc + \pxqnhnc)\kp}^2 + 3\norm{\pqnhnc \qq \pxhnc\kp} \\
        & \quad \leq \norm{\pnxqnhnc\kp}^2 + \pt*{\norm{\pq \qq \pnq\kp} + \norm{\pxqnhnc\kp}}^2 + 3\norm{\pnh \qq \pxhnc\kp} \\
        & \quad = \norm{\pqnhnc \kp}^2 + 2 \norm{\pxqnhnc\kp} \cdot \norm{\pq \qq \pnq\kp} + \norm{\pq \qq \pnq\kp}^2 + 3\norm{\pnh \qq \pxhnc\kp}
  \end{align*}
  where the second line uses the identity $\norm{a+b}^2 \leq \norm{a}^2 + 3\norm{b}$ when $\norm{a},\norm{b} \leq 1$, the third line uses the triangle inequality and the last line uses that $\norm{\pqnhnc \kp}^2 = \norm{\pnxqnhnc\kp}^2 + \norm{\pxqnhnc\kp}$. Finally, $\norm{\pq \qq \pnq\kp}^2 \leq 10t/N$ by \Cref{Eq:pqQ} and $\norm{\pnh \qq \pxhnc\kp}^2 \leq (10t/N) \norm{\pxhnc\kp}^2$ by \Cref{Eq:pnhorcQ}.
\end{proof}

We now analyze the case of classical queries.

\begin{rlemma}[\Cref{Lem:clProg} \normalfont{(Progress Measure, Classical Query)}]
  Given an integer $t$ and a state $\kp \in \alg$ with norm at most 1, the progress caused by one classical query on $\kp$ are at most,
    \begin{align*}
      \Delta_1(\pc,\kp) & \leq 2 \norm{\pxhnc \kp}^2 + \delta_1 + \frac{4 t}{N} \, , \Label{Eq:DpcC} \\
      \Delta_1(\phnc,\kp) & \leq - \norm{\pxhnc \kp}^2 + 2\norm{\pxqnhnc \kp}^2 - \delta_1 + 2\delta_2 + \frac{12t}{N} \, ,\Label{Eq:DphncC} \\
      \Delta_1(\pqnhnc,\kp) & \leq \sqrt{\frac{2t}{N}} \norm{\pxhnc\kp}  - \norm{\pxqnhnc \kp}^2 - \delta_2 \, \Label{Eq:DpqnhncC}
    \end{align*}
  where $\delta_1 = \norm{\pc \cc \pnxhnc \kp}^2$ and $\delta_2 = \norm{\phnc \cc \pnxqnhnc \kp}^2$.
\end{rlemma}

\begin{proof}[Proof of \Cref{Eq:DpcC}]
  We use the decomposition $\id = \pc + \pxhnc + \pnxhnc + \pnhnc$. By \Cref{Fact:ortho}, the states $\pc \cc \pc\kp$, $\pc \cc \pnxhnc\kp$ and $\pc \cc (\pxhnc + \pnhnc)\kp$ are orthogonal. Therefore,
  \begin{align*}
      \norm{\pc \cc \kp}^2
        & = \norm{\pc \cc (\pc + \pxhnc + \pnxhnc + \pnhnc)}^2 \\
        & = \norm{\pc \cc \pc}^2 + \norm{\pc \cc \pnxhnc}^2 + \norm{\pc \cc (\pxhnc + \pnhnc)\kp}^2 \\
        & \leq \norm{\pc}^2 + \norm{\pc \cc \pnxhnc}^2 + 2 \norm{\pxhnc \kp}^2 + 2\norm{\pc \cc \pnhnc\kp}^2
  \end{align*}
  where the last line uses the identity $\norm{a+b}^2 \leq 2\norm{a}^2 + 2\norm{b}^2$. Finally, $\norm{\pc \cc \pnhnc\kp}^2 \leq 2t/N$ by \Cref{Eq:phorcC}.
\end{proof}

\begin{proof}[Proof of \Cref{Eq:DphncC}]
  We use the decomposition $\id = \pc + \pnxhnc + \pxhnc + \pnhnc$. By \Cref{Fact:ortho}, $\phnc \cc \pc = 0$ and the states $\phnc \cc \pnxhnc\kp$ and $\phnc \cc (\pxhnc + \pnhnc)\kp$ are orthogonal. Therefore,
  \begin{align*}
      \norm{\phnc \cc \kp}^2
        & = \norm{\phnc \cc \pnxhnc}^2 + \norm{\phnc \cc (\pxhnc + \pnhnc)\kp}^2 \\
        & = \norm{\phnc}^2 - \norm{\pxhnc}^2 - \norm{\pc \cc \pnxhnc}^2 + \norm{\phnc \cc (\pxhnc + \pnhnc)\kp}^2
  \end{align*}
  where the second line uses that $\norm{\phnc}^2 - \norm{\pxhnc}^2 - \norm{\pc \cc \pnxhnc}^2 = \norm{\pnxhnc}^2 - \norm{\pc \cc \pnxhnc}^2 = \norm{\pnc \cc \pnxhnc}^2 = \norm{\phnc \cc \pnxhnc}^2$ since $\pnhnc \cc \pnxhnc = 0$ by \Cref{Fact:ortho}. It remains to bound $\norm{\phnc \cc (\pxhnc + \pnhnc)\kp}^2$. We further decompose $\pnhnc$ into $\pnhnc = \pxqnhnc + \pnxqnhnc + \pnqnhnc$ and observe that $\phnc \cc \pxqnhnc\kp$ and $\phnc \cc \pnxqnhnc\kp$ are orthogonal by \Cref{Fact:ortho}. Hence,
  \begin{align*}
    & \norm{\phnc \cc (\pxhnc + \pnhnc)\kp}^2 \\
      & \qquad \leq 2 \norm{\phnc \cc (\pxqnhnc + \pnxqnhnc) \kp}^2 + 2 \norm{\phnc \cc (\pxhnc + \pnqnhnc)\kp}^2 \\
      & \qquad = 2 \norm{\phnc \cc \pxqnhnc \kp}^2 + 2 \norm{\phnc \cc \pnxqnhnc \kp}^2 + 2\norm{\phnc \cc (\pxhnc + \pnqnhnc)\kp}^2 \\
      & \qquad \leq 2 \norm{\pxqnhnc \kp}^2 + 2 \norm{\phnc \cc \pnxqnhnc \kp}^2 + \frac{12t}{N}
  \end{align*}
  where the last line uses the triangle inequality and \Cref{Eq:pxhC,Eq:pqorhC} on $\norm{\phnc \cc (\pxhnc + \pnqnhnc)\kp}^2 \leq (\norm{\phnc \cc \pxhnc\kp} + \norm{\phnc \cc\pnqnhnc)\kp})^2 \leq (1+\sqrt{2})^2 t/N$.
\end{proof}

\begin{proof}[Proof of \Cref{Eq:DpqnhncC}]
  We use the decomposition $\id = \pnqnhnc+\pnxqnhnc+\pxqnhnc+\pnxhnc+\pxhnc+\pc$. By \Cref{Fact:ortho}, $\pqnhnc \cc (\pnqnhnc+\pxqnhnc+\pnxhnc+\pc) = 0$. Therefore,
  \begin{align*}
      \norm{\pqnhnc \cc \kp}^2
        & = \norm{\pqnhnc \cc (\pnxqnhnc+\pxhnc) \kp}^2 \\
        & \leq \norm{\pqnhnc \cc \pnxqnhnc\kp}^2 + 3\norm{\pqnhnc \cc \pxhnc\kp} \\
        & \leq \norm{\pqnhnc \kp}^2 - \norm{\pxqnhnc\kp}^2 - \norm{\phnc \cc \pnxqnhnc\kp}^2 + 3\norm{\pqnhnc \cc \pxhnc\kp}
  \end{align*}
  where the last line uses $\norm{\pqnhnc \cc \pnxqnhnc\kp}^2 \leq \norm{(\pnhnc +\pc)\cc \pnxqnhnc\kp}^2 = \norm{\pqnhnc \kp}^2 - \norm{\pxqnhnc\kp}^2 - \norm{\phnc \cc \pnxqnhnc\kp}^2$. Finally, $\norm{\pqnhnc \cc \pxhnc\kp}^2 \leq (2t/N)\norm{\pxhnc\kp}^2$ by \Cref{Eq:pnhorcC}.
\end{proof}

\section*{Acknowledgements}
The authors would like to thank Ansis Rosmanis for fruitful discussions and for sharing a draft of his work on noisy oracles~\cite{Ros23p}. The authors are also grateful to the anonymous referees for their valuable comments and suggestions which helped to improve the paper. Part of this work was supported by the Simons Institute through Simons-Berkeley Postdoctoral Fellowships.


\printbibliography[heading=bibintoc]

\appendix

\section{\texorpdfstring{Missing Proofs for Section \ref{sec:compressed_oracle_for_hybrid} (Hybrid Compressed Oracle)}{Missing Proofs for Hybrid Compressed Oracle}}

\subsection{\texorpdfstring{Resampling Lemma (\Cref{Lem:recordQ,Lem:recordC})}{Resampling Lemma}}
\label{sec:proof_resample}

\begin{proof}[Proof of \Cref{Lem:recordQ}]
  We only prove the third item, corresponding to $H(x) = \star$, $D(x) \neq \bot$, $p \neq 0$, as it is the most involved of the three. The operator $\qq = S \oq S$ appends $\ket{0}$ to the workspace register and acts as a control on all registers except $\CD_x$, which contains the $x$-th entry of the database. Let $z = D(x) \neq \bot$ be the value in $\CD_x$. Writing $\ket{z} = \frac{1}{\sqrt{N}}\sum_{p \in [N]} \omega^{-pz} \ket{\phat}$ in the Fourier basis, we can see that $S$ maps $\ket{z}_{\CD_x}$ to
    \[\frac{1}{\sqrt{N}}\sum_{p \in [N]} \omega^{-pz} \ket{\phat} + \frac1{\sqrt{N}}\ket{\bot} - \frac1{\sqrt{N}}\ket{\hat{0}} = \ket{z} + \frac1{\sqrt{N}}\ket{\bot} - \frac1{N}\sum_{y \in [N]}\ket{y}.\]
  Applying $\oq$ to the above state, we get
    \[ \omega^{pz}\ket{z} + \frac{1}{\sqrt{N}}\ket{\bot} - \frac1{N}\sum_{y \in [N]} \omega^{py} \ket{y} = \frac{\omega^{pz}}{\sqrt{N}} \sum_{p' \in [N]} \omega^{-p'z} \ket{\hat{p}'} + \frac{1}{\sqrt{N}}\ket{\bot} - \frac{1}{\sqrt{N}} \ket{\hat{p}}.\]
  Applying $S$ again to the above and simplifying we get
    \[\frac{\omega^{pz}}{\sqrt{N}} \sum_{p' \in [N]} \omega^{-p'z} \ket{\hat{p}'} + \frac{\omega^{p z}}{\sqrt{N}} \ket{\bot} + \frac{1-\omega^{p z}}{\sqrt{N}}\ket{\hat{0}} - \frac{1}{\sqrt{N}} \ket{\hat{p}} = \omega^{p z} \ket{z} + \frac{\omega^{p z}}{\sqrt{N}} \ket{\bot} + \sum_{y \in [N]} \frac{1-\omega^{p z} - \omega^{p y}}{N} \ket{y}.\]
  thus proving the third item.
\end{proof}

\begin{proof}[Proof of \Cref{Lem:recordC}]
  We only prove the third item, corresponding to $H(x) = \star$, $D(x) \neq \bot$. Let $\ket{H}_{\mc{H}} = \ket{(x_1,y_1),\dots,(x_c,y_c),\star,\dots,\star}_{\mc{H}}$, for some integer $c$, denote the value contained in the history register. The operator $\cc = S \oc S$ appends $\ket{1}$ to the workspace register and acts as a control on all registers except $\CH_{c+1}\CD_x$, which contain $\ket{\star,z}_{\CH_{c+1}\CD_x}$ for some $z = D(x) \neq \bot$. Similarly as in the above proof of \Cref{Lem:recordQ}, after applying the first two operators $\oc S$, this state gets mapped to
    \[\omega^{pz}\ket{(x,z),z} + \frac{1}{\sqrt{N}}\ket{(x,\bot),\bot} - \frac1{N}\sum_{y \in [N]} \omega^{py} \ket{(x,y),y}.\]
  where the value contained in the database register $\CD_x$ has been appended to the history (by definition of a the classical query operator $\oc$). Finally, applying $S$ to the above state does nothing since the query index $x$ is now contained in the history.
\end{proof}


\subsection{\texorpdfstring{Progress Overlap Lemmas (\Cref{lem:quantum_query_progress,lem:classical_query_progress})}{Progress Overlap Lemmas}}
\label{sec:proof_progress_measure}

We first give the proof for \Cref{lem:classical_query_progress} (classical query) as it differs the most from previous work on the compressed oracle. The proof will be next adapted for \Cref{lem:quantum_query_progress} (quantum query).

\begin{proof}[Proof of \Cref{lem:classical_query_progress}]
  Let $\pnp \ket{\phi} = \sum_{x,p,w,H,D} \alpha_{x, p, w, H,D} \ket{x, p, w}\ket{H,D} \in \alg \cap \supp{\pnp}$ be any state supported over consistent basis-states evaluating the predicate $\predP$ to false. We show that, after making a classical query, the probability of satisfying $\predP$ is at most $\norm{\pp \cc \pnp \kp}^2 \leq 2\gamma \cdot \norm{\pnp \kp}^2$. We define three projections $\Pi_1, \Pi_2, \Pi_3$ such that $\Pi_1 + \Pi_2 + \Pi_3 = \Pi_{\br{\predP}}$.
  \begin{itemize}
      \item $\Pi_1$: all basis states $\ket{x, p, w, H,D} \in \alg \cap \supp{\pnp}$ such that $H(x) = \star$ and $D(x) = \bot$.
      \item $\Pi_2$: all basis states $\ket{x, p, w, H,D} \in \alg \cap \supp{\pnp}$ such that $H(x) = \star$ and $D(x) \neq \bot$.
      \item $\Pi_3$: all basis states $\ket{x, p, w, H,D} \in \alg \cap \supp{\pnp}$ such that $H(x) \neq \star$.
  \end{itemize}
  Below, we prove the inequalities $\norm{\pp \cc \Pi_1 \kp}^2 \leq \gamma \norm{\Pi_1 \kp}^2$, $\norm{\pp \cc \Pi_2 \kp}^2 \leq \gamma \norm{\Pi_2 \kp}^2$ and $\norm{\pp \cc \Pi_3 \kp} = 0$. Hence, by the triangle inequality and Cauchy--Schwarz inequality, we conclude that
  \[
    \norm{\pp \cc \Pi_{\br{\predP}} \kp}^2 \leq \pt*{\norm{\pp \cc \Pi_1 \kp} + \norm{\pp \cc \Pi_2 \kp} + \norm{\pp \cc \Pi_3 \kp}}^2 \leq 2\gamma \norm{\pnp \kp}^2.
  \]

  \paragraph{Analysis of $\Pi_1$.} The projection $\Pi_1$ corresponds to \emph{sampling} a new outcome at $x$. We have
  \begin{align*}
      \norm{\pp \cc \Pi_1 \kp}^2
      & = \norm[\Bigg]{\pp  \cc \sum_{\substack{x, p, w,H,D : \\ H(x) = \star,D(x) = \bot}} \alpha_{x, p, w, H,D} \ket{x, p, w}\ket{H,D}}^2 \\
      & = \norm[\Bigg]{\pp \sum_{\substack{x, p, w,H,D : \\ H(x) = \star,D(x) = \bot}}  \alpha_{x, p, w, H,D} \ket{x, p, w1} \pt[\bigg]{\sum_{y \in [N]} \frac{\omega^{p y}}{\sqrt{N}} \ket{H_{x \la y},D_{x \la y}}}}^2 \\
      & = \sum_{\substack{x, p, w,H,D : \\ H(x) = \star,D(x) = \bot}}  \abs{\alpha_{x, p, w, H,D}}^2 \cdot \Pr_{y \gets [N]}\bc*{\pt*{H_{x \la y},D_{x \la y}} \in \predP^{-1}(\predT)} \\
      & \leq \gamma \norm{\Pi_1 \kp}^2.
  \end{align*}
  The first line is by definition of $\Pi_1$. The second line is by \Cref{Lem:recordC}. The third line uses the orthogonality of the basis states. Finally, the last line is by Equation~(\ref{Eq:predC1}).

  \paragraph{Analysis of $\Pi_2$.} The projection $\Pi_2$ corresponds to \emph{resampling} a new outcome at index $x$ (see the third item of \Cref{Lem:recordC}). There are three components and the only states that may be in the support of $\pp$ after the query is done are those for which $D(x)$ is resampled to a different value $y \neq D(x)$. Indeed, the other two cases are where $D(x) = \bot$ gets removed or $D(x)$ remains unchanged in the database. The former case cannot make the predicate true because of the database monotone property (Definition~\ref{Def:hdPred}), the latter case cannot either because of the condition stated in Equation~(\ref{Eq:predC2}). Hence, we have
  \begin{align*}
    \norm{\pp \cc \Pi_2 \kp}^2
      & = \norm[\bigg]{\pp \cc \sum_{\substack{x, p, w,H,D : \\ H(x) = \star, D(x) \neq \bot}} \alpha_{x, p, w, H,D} \ket{x, p, w}\ket{H,D}}^2  \\
      & = \norm[\bigg]{ \pp \sum_{\substack{x, p, w,H,D : \\ H(x) = \star, D(x) \neq \bot}} \sum_{y \in [N]} \alpha_{x, p, w, H,D} \frac{\omega^{p y}}{N} \ket{x, p, w1}\ket{H_{x \la y},D_{x \la y}}}^2.
  \end{align*}

  Next, observe that for any two tuples $(x,p,w,H,D_{x \la \bot},y) \neq (x',p',w',H',D'_{x' \la \bot},y')$, the basis states $\ket{x, p, w1}\ket{H_{x \la y},D_{x \la y}}$ and $\ket{x', p', w'1}\ket{H'_{x' \la y'},D'_{x' \la y'}}$ must be orthogonal. Thus, we can exploit this orthogonality property to simplify the above expression as follows.
  \begin{align*}
    \norm{\pp \cc \Pi_2 \kp}^2
      & = \sum_{\substack{x, p, w, H,D, y : \\ y \in [N], H(x) = \star, D(x) = y}} \norm[\bigg]{\pp \sum_{z \in [N]} \alpha_{x, p, w, H,D_{x \la z}} \frac{\omega^{p y}}{N} \ket{x, p, w1}\ket{H_{x \la y},D}}^2  \\
      & = \sum_{\substack{x, p, w, H,D, y : \\ y \in [N], H(x) = \star, D(x) = y, \\ \predP(H_{x \la y},D) = \predT}} \abs[\bigg]{\sum_{z \in [N]} \alpha_{x, p, w, H,D_{x \la z}} \frac{\omega^{p y}}{N}}^2.
  \end{align*}

  Applying the Cauchy--Schwarz inequality, we have
  \begin{align*}
    \norm{\pp \cc \Pi_2 \kp}^2
      & \leq \sum_{\substack{x, p, w, H,D, y : \\ y \in [N], H(x) = \star, D(x) = y, \\ \predP(H_{x \la y},D) = \predT}} \sum_{z \in [N]} \frac{\abs{\alpha_{x, p, w, H,D_{x \la z}}}^2}{N} \\
      & = \sum_{\substack{x, p, w, H,D : \\ H(x) = \star, D(x) \neq \bot}} \pt[\Bigg]{\sum_{y \in [N]:  \predP(H_{x \la y},D_{x \la y}) = \predT} \frac{\abs{\alpha_{x, p, w, H,D}}^2}{N}} \\
      & = \sum_{\substack{x, p, w, H,D : \\ H(x) = \star, D(x) \neq \bot}} \abs{\alpha_{x, p, w, H,D}}^2 \cdot \Pr_{y \gets [N]}\bc*{(H_{x \la y},D_{x \la y}) \in \predP^{-1}(\predT)}.
  \end{align*}

  Finally, for each $\ket{x,p,w,H,D}$ in the support of $\Pi_2$, we must have $\predP(H,D_{x \la \bot}) = \predF$ by the database monotone property (see Definition~\ref{Def:hdPred}). Hence, by Equation~(\ref{Eq:predC1}), the above inequality implies that $\norm{\pp \cc \Pi_2 \kp}^2 \leq \gamma \cdot \norm{\Pi_2 \kp}^2$.

  \paragraph{Analysis of $\Pi_3$.} By \Cref{Lem:recordC}, the operator $\cc$ maps any state $\ket{x, p, w}\ket{H,D} \in \supp{\Pi_3}$ to $\omega^{p D(x)} \ket{x, p, w1}\ket{H_{x \la D(x)},D}$ since $H(x) \neq \star$.  Moreover, $H$ and $H_{x \la D(x)}$ have the same function representation (since the initial state is history-database consistent). Thus, by the history invariant property (see Definition~\ref{Def:hdPred}), we have $\predP(H_{x \la D(x)},D) = \predF$ and $\norm{\pp \cc \Pi_3 \kp} = 0$.
\end{proof}

The proof of \Cref{lem:quantum_query_progress} is similar to the above one, the main difference being that quantum queries do not act on the history register.

\begin{proof}[Proof of \Cref{lem:quantum_query_progress}]
  Let $\pnp \ket{\phi} = \sum_{x,p,w,H,D} \alpha_{x, p, w, H,D} \ket{x, p, w}\ket{H,D} \in \alg \cap \supp{\pnp}$. We will prove that $\norm{\pp \qq \pnp \kp}^2 \leq 10\gamma \cdot \norm{\pnp \kp}^2$. We first define three projections $\Pi_1, \Pi_2, \Pi_3$ such that $\Pi_1 + \Pi_2 + \Pi_3 = \Pi_{\br{\predP}}$.
  \begin{itemize}
      \item $\Pi_1$: all basis states $\ket{x, p, w, H,D} \in \alg \cap \supp{\pnp}$ such that $H(x) = \star$, $D(x) = \bot$, $p \neq 0$.
      \item $\Pi_2$: all basis states $\ket{x, p, w, H,D} \in \alg \cap \supp{\pnp}$ such that $H(x) = \star$, $D(x) \neq \bot$, $p \neq 0$.
      \item $\Pi_3$: all basis states $\ket{x, p, w, H,D} \in \alg \cap \supp{\pnp}$ such that $H(x) \neq \star$ or $p = 0$.
  \end{itemize}

  Below, we prove that $\norm{\pp \qq \Pi_1 \kp}^2 \leq \gamma \norm{\Pi_1 \kp}^2$, $\norm{\pp \qq \Pi_2 \kp}^2 \leq 9\gamma \norm{\Pi_2 \kp}^2$ and $\norm{\pp \qq \Pi_3 \kp} = 0$. Hence, by the triangle and Cauchy--Schwarz inequalities, we conclude that $\norm{\pp \qq \pnp \kp}^2 \leq 10\gamma \cdot \norm{\pnp \kp}^2$.

  \paragraph{Analysis of $\Pi_1$.} The effect of applying $\qq$ on a basis state in the support of $\Pi_1$ is described in the second item of \Cref{Lem:recordQ}. Similarly to the analysis of $\Pi_1$ in the proof of \Cref{lem:classical_query_progress}, we deduce that
  \begin{align*}
      \norm{\pp \qq \Pi_1 \kp}^2
      & = \sum_{\substack{x, p, w,H,D : \\ H(x) = \star,D(x) = \bot,p \neq 0}}  \abs{\alpha_{x, p, w, H,D}}^2 \cdot \Pr_{y \gets [N]}\bc*{\pt*{H,D_{x \la y}} \in \predP^{-1}(\predT)} \\
      & \leq \gamma \norm{\Pi_1 \kp}^2.
  \end{align*}
  where the second line is by Equation~(\ref{Eq:predQ}).

  \paragraph{Analysis of $\Pi_2$.} The effect of applying $\qq$ on a basis state in the support of $\Pi_2$ is described in the third item of \Cref{Lem:recordQ}. By using the bound $\abs{1-\omega^{p D(x)} - \omega^{p y}} \leq 3$ on the term displayed there, we can follow a similar analysis as in the proof of \Cref{lem:classical_query_progress} for $\Pi_2$ and deduce that
  \begin{align*}
    \norm{\pp \qq \Pi_2 \kp}^2
      & \leq 9 \sum_{\substack{x, p, w, H,D : \\ H(x) = \star, D(x) \neq \bot, p \neq 0}} \abs{\alpha_{x, p, w, H,D}}^2 \cdot \Pr_{y \gets [N]}\bc*{(H,D_{x \la y}) \in \predP^{-1}(\predT)} \\
      & \leq 9 \gamma \norm{\Pi_2 \kp}^2.
  \end{align*}
  where the second line is by Equation~(\ref{Eq:predQ}).

  \paragraph{Analysis of $\Pi_3$.} By the first item in \Cref{Lem:recordQ}, the operator $\qq$ maps any basis state in the support of $\Pi_3$ to itself, up to a phase factor. Thus, we have $\norm{\pp \qq \Pi_3 \kp} = 0$.
\end{proof}


\section{\texorpdfstring{Missing Proofs for Section \ref{sec:hybrid_cols} (Collision Finding)}{Missing Proofs for Collision Finding}}
\label{app:collision}
In this section, we prove the following lemma:

\begin{rlemma}[\Cref{Lem:samplH} {\normalfont (Restated)}]
  Given an integer $t$ and a state $\kp \in \alg$, we have
  \begin{align*}
    \Gamma_0(\pnhorc,\kp) & \leq \frac{10t}{N}\, , && (\ref{Eq:pnhorcQ}) & \norm{\ph \cc \pxh \kp}^2 & \leq \frac{t}{N} \cdot \norm{\pxh \kp}^2\, , && (\ref{Eq:pxhC}) \\
    \Gamma_1(\pnhorc,\kp) & \leq \frac{2t}{N}\, , && (\ref{Eq:pnhorcC}) & \norm{\pc \cc \pnhorc \kp}^2 & \leq \frac{2 t}{N} \cdot \norm{\pnhorc \kp}^2\, . && (\ref{Eq:phorcC})
  \end{align*}
\end{rlemma}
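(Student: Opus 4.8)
The plan is to prove all four inequalities along the lines of \Cref{lem:quantum_query_progress,lem:classical_query_progress}: expand the relevant input state as $\Pi\kp = \sum_{x,p,w,H,D} \alpha_{x,p,w,H,D}\ket{x,p,w,H,D}$ over consistent basis states, partition the basis states according to the query index --- (i)~$x \in H$, (ii)~$x \notin H$ with $D(x) = \bot$ (a ``sampling'' query), (iii)~$x \notin H$ with $D(x) \neq \bot$ (a ``resampling'' query), further splitting off $p = 0$ in the quantum case --- apply the explicit formulas of \Cref{Lem:recordQ,Lem:recordC} to each group, and finish with the Cauchy--Schwarz inequality to control the interference between the ``removed'' and ``resampled'' contributions coming from distinct source databases. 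Throughout, I would use \Cref{Fact:ortho} to discard the groups that cannot reach the target subspace.

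For \Cref{Eq:pnhorcQ,Eq:pnhorcC} I would first reduce to the hybrid-collision case. Since neither a quantum nor a classical query can erase a classical collision (such a collision lies inside $H$, which $\qq$ ignores and $\cc$ only extends by a duplicate entry), \Cref{Fact:ortho} gives $\pnc\qq\pc = \pnc\cc\pc = 0$, and since $\pnhorc \preceq \pnc$ and $\pnhorc\pc = 0$ we obtain $\pnhorc\qq\pc = \pnhorc\cc\pc = 0$; using $\phorc = \pc + \phnc$ with orthogonal summands it suffices to bound $\norm{\pnhorc\qq\phnc\kp}$ and $\norm{\pnhorc\cc\phnc\kp}$. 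The key structural point is that a hybrid collision whose non-history endpoint differs from the query index $x$ survives the query, so only basis states in which $x$ is the non-history endpoint of \emph{every} hybrid collision can land in $\pnhorc$, and these necessarily have $D(x)\neq\bot$ and are acted on by the third case of \Cref{Lem:recordQ} or \Cref{Lem:recordC}. The ``unchanged'' term of such a query still carries a collision and is killed by $\pnhorc$ (for a classical query the old hybrid collision turns classical), leaving the mutually orthogonal ``removed'' term (databases with $D'(x)=\bot$) and ``resampled'' term (databases with $D'(x)=y\neq\bot$). For a fixed target database, the sum over source databases ranges over the old values $z = D(x)$ at $x$, but a hybrid collision through $x$ forces $z$ to be one of the at most $c$ values recorded in $H$; Cauchy--Schwarz over these $\leq c$ values produces the factor $c$, with the constant $10$ in the quantum case splitting as $1$ (removed) $+\,9$ (resampled, via $\abs{1-\omega^{pD(x)}-\omega^{py}}\leq 3$) and the classical case as $1+1=2$.

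For \Cref{Eq:pxhC,Eq:phorcC}, the point is that an extra hypothesis on the input rules out the ``copy a database entry into the history'' behaviour that otherwise violates \Cref{Eq:predC2}. In \Cref{Eq:pxhC}, a state in the support of $\pxh$ has $x\notin H$ and $x$ is the non-history endpoint of every hybrid collision, which by transitivity of equality forces $D(x)\neq\bot$ and prevents $x$ from lying in any quantum collision; hence in \Cref{Lem:recordC} the ``unchanged'' term turns every hybrid collision classical while creating no new one, the ``removed'' term destroys all of them, and only the ``resampled'' term can keep $\predH$ true --- which requires the fresh value to hit one of the at most $q$ values currently in the database, exactly the $\Pi_2$-analysis of \Cref{lem:classical_query_progress} with $\gamma\leq q/N$. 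In \Cref{Eq:phorcC}, a state in the support of $\pnhorc$ has no hybrid collision, so appending $(x,D(x))$ to the history cannot create a classical collision; this is precisely condition \Cref{Eq:predC2} for the predicate $\predC$ (which fails on general states but holds here), so the argument of \Cref{lem:classical_query_progress} applies with $\gamma\leq c/N$ (a freshly (re)sampled value produces a classical collision only if it equals one of the $\leq c$ history values), giving $2c/N$.

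I expect the main obstacle to be the bookkeeping behind \Cref{Eq:pnhorcQ,Eq:pnhorcC}: one must carefully justify that (a)~a classical collision can never be erased, (b)~the only basis states able to reach $\pnhorc$ are those in which the query index is the common non-history endpoint of all hybrid collisions, and --- most delicately --- (c)~for each fixed target database the old value at $x$ was constrained to one of the $\leq c$ history values. It is step~(c) that yields the factor $c$ rather than a spurious factor $N$ from the interference of distinct source databases, and one must keep the removed and resampled contributions orthogonal to land on the stated constants $10$ and $2$.
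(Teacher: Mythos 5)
Your proposal is correct and follows essentially the same route as the paper's proof: reduce via \Cref{Fact:ortho} and the survival of collisions not involving the query index to the $\pxh$ subspace, apply the third cases of \Cref{Lem:recordQ,Lem:recordC} with the unchanged term killed by the outer projector, and use Cauchy--Schwarz with the observation that interfering source values $z=D(x)$ are confined to the at most $c$ history values (resp.\ the $q/N$ resampling probability for \Cref{Eq:pxhC}), recovering the constants $10$ and $2$. Your fix for \Cref{Eq:phorcC} --- rerunning \Cref{lem:classical_query_progress} with outer projector $\pc$ because \Cref{Eq:predC2} holds on the support of $\pnhorc$ even though it fails for $\predH+\predC$ in general --- is exactly the paper's argument.
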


We will use the following simple fact about the predicate $\predX\predH$.

\begin{fact}
  \label{Fact:XH}
  For any basis state $\ket{x,p,w,H,B,D}$ satisfying the predicate $\predX\predH$, we have
    \begin{enumerate}
      \item The query index $x$ is in the database but not in the history, that is $D(x) \neq \bot$ and $H(x) = \star$.
      \item There is no hybrid collision in $(H,D_{x \la \bot})$.
      \item The query index $x$ does not belong to a quantum collision.
    \end{enumerate}
\end{fact}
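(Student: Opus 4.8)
The plan is to prove each of the three items by simply unfolding the definitions of the predicates $\predX$ and $\predH$ (both of which, by the conventions of the definition introducing them, already entail history--database consistency in the sense of \Cref{Def:consistent}) and running a short case analysis on the colliding pairs. Throughout, recall that a hybrid collision is a pair $x_1 \neq x_2$ with $D(x_1) = D(x_2) \neq \bot$, $x_1 \notin H$ and $x_2 \in H$, and that $\predX$ asserts both that $x \notin H$ and that $x$ is contained in every hybrid collision of $(H,D)$.

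For item~1, the hypothesis $\predH$ supplies at least one hybrid collision $(x_1,x_2)$; since $\predX$ forces $x \in \{x_1,x_2\}$ while $x \notin H$ and $x_2 \in H$, we must have $x = x_1$, whence $D(x) = D(x_1) \neq \bot$ and $H(x) = \star$. For item~2, I would argue by contradiction: if $(H,D_{x \la \bot})$ contained a hybrid collision $(x_1,x_2)$, then since $D_{x \la \bot}$ sends $x$ to $\bot$ while the colliding value is non-$\bot$, both $x_1$ and $x_2$ differ from $x$, so $D_{x \la \bot}$ and $D$ agree on $\{x_1,x_2\}$ and $(x_1,x_2)$ is also a hybrid collision of $(H,D)$; but then $x \in \{x_1,x_2\}$ by $\predX$, a contradiction. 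For item~3, suppose $x$ lies in a quantum collision, i.e.\ there is $x' \neq x$ with $D(x)=D(x')\neq\bot$ and $x,x' \notin H$. By item~1 together with $\predH$ and $\predX$, the index $x$ also lies in some hybrid collision, so there is $x'' \in H$ with $D(x) = D(x'') \neq \bot$. Then $D(x') = D(x'') \neq \bot$ with $x' \notin H$ and $x'' \in H$ (so in particular $x' \neq x''$), so $\{x',x''\}$ is a hybrid collision; it does not contain $x$ (since $x \neq x'$, and $x \neq x''$ because $x \notin H$ while $x'' \in H$), contradicting $\predX$.

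There is no genuine obstacle here: the statement is an immediate consequence of the definitions. The only point requiring mild care is the bookkeeping of which endpoint of each colliding pair sits in the history versus only in the database, so that the clause of $\predX$ saying ``$x$ is contained in every hybrid collision'' is applied to the correct element --- in every case $x$ is forced to be the non-history endpoint, precisely because $\predX$ also guarantees $x \notin H$.
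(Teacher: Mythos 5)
Your proof is correct and follows essentially the same route as the paper: items 1 and 2 are direct unfoldings of the definitions of $\predX$ and $\predH$, and item 3 is the same contradiction argument, pairing the quantum-collision partner of $x$ with the history-side endpoint of the hybrid collision containing $x$ to produce a hybrid collision avoiding $x$. Your write-up just spells out the bookkeeping that the paper leaves as ``immediate.''
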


\begin{proof}
  The first two items are immediate by definition of $\predX$ and $\predH$. For the last item, if $x$ was in a quantum collision then, since it also belongs to a hybrid collision, there would exist a second hybrid collision that does not contain $x$ (which contradicts $\predX$).
\end{proof}

Since the proofs of \Cref{Eq:pnhorcQ,Eq:pxhC,Eq:pnhorcC,Eq:phorcC} share strong similarities with those of \Cref{Lem:recordQ,Lem:recordC}, we skip some details in the calculation below.

\begin{proof}[Proof of \Cref{Eq:pnhorcQ}]
  We first claim that it is sufficient to show that
  \begin{equation}
    \norm{\pnh \qq \pxh \kp}^2 \leq \frac{10t}{N} \cdot \norm{\pxh \kp}^2. \label{Eq:pxhQ}
  \end{equation}
  Indeed, $\pnhorc \qq \phorc \kp = \pnhorc \qq \pxh \phorc \kp$ by \Cref{Fact:ortho}. Thus, using \Cref{Eq:pxhQ}, we conclude that $\norm{\pnhorc \qq \phorc \kp}^2 \leq \norm{\pnh \qq \pxh \phorc \kp}^2 \leq \frac{10t}{N} \cdot \norm{\phorc \kp}^2$.

  We now prove \Cref{Eq:pxhQ}. Let $\pxh \ket{\phi} = \sum_{x,p,w,H,B,D} \alpha_{x, p, w, H,B,D} \ket{x, p, w,H,B,D} \in \alg \cap \supp{\pxh}$. Notice that if the phase register contains $p = 0$ then doing a quantum query on such a state will not modify $(H,D)$. Hence, we only need to consider the basis states for which~$p \neq 0$. Together with \Cref{Fact:XH}, it implies that the post-query state is given by the third item of \Cref{Lem:recordQ},
  \begin{align*}
     \pnh \qq \pxh \kp = \pnh \sum_{x,p,w,H,B,D} \alpha_{x, p, w, H, B,D} \ket{x,p,w,H,B_{\la 0}} & \biggl(\frac{\omega^{p D(x)}}{\sqrt{N}} \ket{D_{x \la \bot}} \\
     & \quad + \sum_{y \in [N]} \frac{1-\omega^{p D(x)} - \omega^{p y}}{N}  \ket{D_{x \la y}}\biggr).
  \end{align*}

  Next, using the orthogonality between basis states, the norm of the above state is equal to,
  \begin{align*}
    & \norm{\pnh \qq \pxh \kp}^2 \\
      & \quad = \sum_{\substack{x, p, w, H,B,D : \\ H(x) = \star, D(x) = \bot}} \norm[\bigg]{\pnh \sum_{z \in [N]} \alpha_{x, p, w, H,B,D_{x \la z}} \frac{\omega^{p z}}{\sqrt{N}} \ket{x, p, w}\ket{H,B_{\la 0},D}}^2  \\
      & \quad\quad + \sum_{\substack{x, p, w, H,B,D, y : \\ y \in [N], H(x) = \star, D(x) = y}} \norm[\bigg]{\pnh \sum_{z \in [N]} \alpha_{x, p, w, H,B,D_{x \la z}} \frac{1-\omega^{p z} - \omega^{p y}}{N} \ket{x, p, w}\ket{H,B_{\la 0},D}}^2 \\
      & \quad = \sum_{\substack{x, p, w, H,B,D : \\ H(x) = \star, D(x) = \bot, \\ \predH(x,p,w,H,B,D) = \predF}} \abs[\bigg]{\sum_{z \in [N]} \alpha_{x, p, w, H,B,D_{x \la z}} \frac{\omega^{pz}}{\sqrt{N}}}^2 \\
      & \quad\quad + \sum_{\substack{x, p, w, H,B,D, y : \\ y \in [N], H(x) = \star, D(x) = y, \\ \predH(x,p,w,H,B,D) = \predF}} \abs[\bigg]{\sum_{z \in [N]} \alpha_{x, p, w, H,B,D_{x \la z}} \frac{1-\omega^{p z} - \omega^{p y}}{N}}^2.
  \end{align*}

  By the Cauchy--Schwarz inequality, the above term is at most
  \begin{align*}
    & \norm{\pnh \qq \pxh \kp}^2 \\
    \ & \leq \sum_{\substack{x, p, w, H,B,D : \\ H(x) = \star, D(x) = \bot, \\ \predH(x,p,w,H,B,D) = \predF}} \pt[\bigg]{\sum_{z \in [N]} \abs{\alpha_{x, p, w, H,B,D_{x \la z}}}^2} \Pr_{z \gets [N]}\bc*{\predH(x,p,w,H,B,D_{x \la z}) = \predT} \\
      & + \sum_{\substack{x, p, w, H,B,D, y : \\ y \in [N], H(x) = \star, D(x) = y, \\ \predH(x,p,w,H,B,D_{x \la \bot}) = \predF}} \frac{9}{N} \pt[\bigg]{\sum_{z \in [N]} \abs{\alpha_{x, p, w, H,B,D_{x \la z}}}^2} \Pr_{z \gets [N]}\bc*{\predH(x,p,w,H,B,D_{x \la z}) = \predT} \\
      &  = \sum_{\substack{x, p, w, H,B,D : \\ H(x) = \star, D(x) = \bot, \\ \predH(x,p,w,H,B,D) = \predF}} 10 \pt[\bigg]{\sum_{z \in [N]} \abs{\alpha_{x, p, w, H,B,D_{x \la z}}}^2} \Pr_{z \gets [N]}\bc*{\predH(x,p,w,H,B,D_{x \la z}) = \predT}
  \end{align*}
  where we used that the non-zero amplitudes $\alpha_{x, p, w, H,B,D_{x \la z}}$ must satisfy $\predH(x,p,w,H,B,D_{x \la z}) = \predT$ (since $\pxh \ket{\phi} \in \supp{\ph}$), we extended the range of the second summation to all pairs $(H,D)$ that contain no hybrid collision in $(H,D_{x \la \bot})$ and we used that $\abs{1-\omega^{p z} - \omega^{p y}} \leq 3$.

  Finally, since $\pxh \kp$ is supported over basis states whose history register contains at most $t$ non-$\star$ entries, the probability to create a hybrid collision by adding one value to the database is at most $\Pr_{z \gets [N]}\bc*{\predH(x,p,w,H,B,D_{x \la z}) = \predT} \leq t/N$. We conclude that,
  $\norm{\pnh \qq \pxh \kp}^2 \leq \frac{10t}{N} \sum_{x,p,w,H,B,D} \abs{\alpha_{x, p, w, H,B,D}}^2 = \frac{10t}{N} \norm{\pxh \kp}^2$.
\end{proof}

\begin{proof}[Proof of \Cref{Eq:pnhorcC}]
  Similarly to the above proof, by \Cref{Fact:ortho}, it is sufficient to show that
    \begin{equation}
      \norm{\pnhnc \cc \pxh \kp}^2 \leq \frac{10t}{N} \cdot \norm{\pxh \kp}^2 \label{Eq:pxhncC}
    \end{equation}
  where we keep the predicate $\br{\predC}$ on the left-hand side to rule out the case where the classical query transforms the hybrid collision into a classical collision (the inequality would not hold without this predicate).

  Let $\pxh \ket{\phi} = \sum_{x,p,w,H,B,D} \alpha_{x, p, w, H,B,D} \ket{x, p, w,H,B,D} \in \alg \cap \supp{\pxh}$. By \Cref{Fact:XH}, the effect of doing a classical query on this state is given by the third item of \Cref{Lem:recordC}. Since we must not have classical collisions, we can ignore the $\ket{H_{x \la D(x)},D}$ term therein, which gives
  \begin{align*}
    \pnhnc \cc \pxh \kp =  \pnhnc \sum_{x,p,w,H,B,D} \alpha_{x, p, w, H,B,D} \ket{x,p,w} & \biggl(\frac{1}{\sqrt{N}} \ket{H_{x \la \bot},B_{\la 1},D_{x \la \bot}} \\
    & \quad - \sum_{y \in [N]} \frac{\omega^{p y}}{N} \ket{H_{x \la y},B_{\la 1},D_{x \la y}}\biggr).
  \end{align*}

  Next, using the orthogonality between basis states, the norm of the above state is at most,
  \begin{align*}
    \norm{\pnhnc \cc \pxh \kp}^2
      & \leq \sum_{\substack{x, p, w, H,B,D : \\ H(x) = \star, D(x) = \bot, \\ \predH(x,p,w,H,B,D) = \predF}} \abs[\bigg]{\sum_{z \in [N]} \alpha_{x, p, w, H,B,D_{x \la z}} \frac{1}{\sqrt{N}}}^2 \\
      & + \sum_{\substack{x, p, w, H,B,D, y : \\ y \in [N], H(x) = \star, D(x) = y, \\ \predH(x,p,w,H,B,D) = \predF}} \abs[\bigg]{\sum_{z \in [N]} \alpha_{x, p, w, H,B,D_{x \la z}} \frac{\omega^{p y}}{N}}^2.
  \end{align*}

  Hence, we can conclude in the same way as in the proof of \Cref{Eq:pnhorcQ} by using Cauchy--Schwarz inequality, which gives that $\norm{\pnhnc \cc \pxh \kp}^2 \leq \frac{2t}{N} \norm{\pxh \kp}^2$.
\end{proof}

\begin{proof}[Proof of \Cref{Eq:pxhC}]
  Let us denote $\pxh \ket{\phi} = \sum_{x,p,w,H,B,D} \alpha_{x, p, w, H,B,D} \ket{x, p, w,H,B,D} \in \alg \cap \supp{\pxh}$. By \Cref{Fact:XH}, the effect of doing a classical query on this state is given by the third item of \Cref{Lem:recordC}. Moreover, the only terms therein that can lead to a hybrid collision are those for which $D(x)$ gets replaced with a new value $y$, which gives
  \begin{align*}
      \ph \cc \pxh \kp = - \ph \sum_{x,p,w,H,B,D} \alpha_{x, p, w, H,B,D} \ket{x, p, w} \sum_{y \in [N]} \frac{\omega^{p y}}{N} \ket{H_{x \la y},B_{\la 1},D_{x \la y}}.
  \end{align*}

  Next, using the orthogonality between basis states, the norm of the above state is equal to,
  \begin{align*}
    \norm{\ph \cc \pxh \kp}^2
      & = \sum_{\substack{x, p, w, H,B,D, y : \\ y \in [N], H(x) = \star, D(x) = y}} \norm[\bigg]{\ph \sum_{z \in [N]} \alpha_{x, p, w, H,B,D_{x \la z}} \frac{\omega^{p y}}{N} \ket{x, p, w}\ket{H_{x \la y},B_{\la 1},D}}^2  \\
      & = \sum_{\substack{x, p, w, H,B,D, y : \\ y \in [N], H(x) = \star, D(x) = y, \\ \predH(x,p,w,H_{x \la y},B,D) = \predT}} \abs[\bigg]{\sum_{z \in [N]} \frac{\alpha_{x, p, w, H,B,D_{x \la z}}}{N}}^2.
  \end{align*}

  Applying the Cauchy--Schwarz inequality and rearranging the expression, we have
  \begin{align*}
    \norm{\ph \cc \pxh \kp}^2
      & \leq \sum_{\substack{x, p, w, H,B,D : \\ H(x) = \star, D(x) \neq \bot}} \abs{\alpha_{x, p, w, H,B,D}}^2 \cdot \Pr_{y \gets [N]}\bc*{\predH(x,p,w,H_{x \la y},B,D_{x \la y}) = \predT}.
  \end{align*}

  For each $(H,D)$ in the above state, $D$ contains at most $t$ entries different from $\bot$ (by definition of $\alg$). Moreover, there is exactly one hybrid collision in $(H,D)$ and this collision contains~$x$. Hence, the probability to still have a hybrid collision when $D(x)$ is replaced with a random $y \in [N]$ is at most $\Pr_{y \gets [N]}\bc*{\predH(x,p,w,H_{x \la y},B,D_{x \la y}) = \predT} \leq t/N$. We conclude that $\norm{\ph \cc \pxh \kp}^2 \leq \frac{t}{N} \norm{\pxh \kp}^2$.
\end{proof}

\begin{proof}[Proof of \Cref{Eq:phorcC}]
  The proof is almost identical to that of \Cref{lem:classical_query_progress}. The reason for which we cannot apply this lemma directly to the predicate $\predH+\predC$ is because it does not satisfy the condition stated in~\Cref{Eq:predC2}. Nevertheless, the latter equation is only needed in analyzing the projector $\Pi_2$ in the proof of~\Cref{lem:classical_query_progress}, where it is used to argue that \emph{if a basis state $\ket{x,p,w,H,B,D}$ is not in the support of $\pp$ then $\ket{x,p,w,H_{x \la D(x)},B,D}$ will not be either}. This statement is wrong for the predicate $\predP = \predH+\predC$ (indeed, if $x$ is contained in a quantum collision then $(H_{x \la D(x)},D)$ will contain a hybrid collision). However, \emph{if a basis state $\ket{x,p,w,H,B,D}$ is not in the support of $\phorc$ then $\ket{x,p,w,H_{x \la D(x)},B,D}$ will not be in the support of $\pc$}. Hence, we can carry out the same argument as in the original proof if we replace the outer projector $\pp$ with~$\pc$. This leads to $\norm{\pc \cc \pnhorc \kp}^2 \leq \frac{2 t}{N} \cdot \norm{\pnhorc \kp}^2$.
\end{proof}

\end{document}